\newtheorem{theorem}{Theorem}
\newtheorem{lemma}{Lemma}
\newtheorem{claim}{Claim}
\newtheorem{observation}{Observation}
\newtheorem{remark}{Remark}
\theoremstyle{definition}
\newcommand{\bbE}{\mathbb{E}}
\newcommand{\bbP}{\mathbb{P}}
\newcommand{\bbN}{\mathbb{N}}
\newcommand{\calB}{\mathcal{B}}
\newcommand{\bigO}{\mathcal{O}}
\newcommand{\calN}{\mathcal{N}}
\newcommand{\calG}{\mathcal{G}}
\newcommand{\bbOne}{\mathbb{1}}
\newcommand{\pa}[1]{\left( #1 \right)}
\newcommand{\var}{\textnormal{Var}}
\newcommand{\tendsto}[2]{\underset{#1 \rightarrow #2}{\longrightarrow}}
\newcommand{\smallO}[2]{\underset{#1 \rightarrow #2}{o}}
\newcommand*{\medcap}{\mathbin{\scalebox{1.5}{\ensuremath{\cap}}}}
\newcommand{\prob}[5]{\bbP\pa{ B_{#1}\pa{#2} #3 B_{#4}\pa{#5} }} 
\newcommand{\cnt}{\cn'}
\newcommand{\cntt}{\cn''}
\newcommand{\x}{x}
\newcommand{\condd}[2]{\left| \begin{array}{c} #1 \\ #2 \end{array} \right.}
\newcommand{\xx}[1]{{\bf x_{#1}}}
\newcommand{\II}[2]{{\bf I_{#1}^{#2}}}
\newcommand{\green}[1]{\textnormal{G}\textsc{reen}_{#1}}
\newcommand{\yellow}{\textnormal{Y}\textsc{ellow}}
\newcommand{\purple}[1]{\textnormal{P}\textsc{urple}_{#1}}
\newcommand{\red}[1]{\textnormal{R}\textsc{ed}_{#1}}
\newcommand{\cyan}[1]{\textnormal{C}\textsc{yan}_{#1}}
\newcommand{\bfA}{\mathbf{A}}
\newcommand{\bfB}{\mathbf{B}}
\newcommand{\bfC}{\mathbf{C}}
\global\long\def\pull{\mathcal{PULL}}
\definecolor{dark_green}{rgb}{0,0.4,0}
\begin{document}

\title{Early Adapting to Trends: Self-Stabilizing Information Spread using Passive Communication\thanks{This work has received funding from the European Research Council (ERC) under the European Union's Horizon 2020 research and innovation program (grant  agreement No 648032).}}

\author{Amos Korman\thanks{CNRS, located at the French-Israeli Laboratory on Foundations of Computer Science, UMI FILOFOCS, CNRS, UP7, TAU, HUJI, WIS International Joint Research Unit, Tel-Aviv, Israel.} ~and Robin Vacus\thanks{CNRS, located at the Research Institute on the Foundations of Computer Science (IRIF), Paris, France.}}

\maketitle

\begin{abstract}
How to efficiently and reliably spread information in a system is one of the most fundamental problems in distributed computing. Recently, inspired by biological scenarios, several works focused on identifying the minimal communication resources necessary to spread information under faulty conditions. Here we study the  self-stabilizing \emph{bit-dissemination} problem, introduced by Boczkowski, Korman, and Natale in~[SODA 2017]. The problem considers a fully-connected network of $n$ \emph{agents}, with a binary world of \emph{opinions}, one of which is called \emph{correct}. At any given time, each agent holds an opinion bit as its public output. The population contains a \emph{source} agent which knows which opinion is correct. This agent adopts the correct opinion and remains with it throughout the execution. We consider the basic $\pull$ model of communication, in which each agent observes  relatively few randomly chosen agents in each round. The goal of the non-source agents is to quickly converge on the correct opinion, despite having an arbitrary initial configuration, i.e., in a self-stabilizing manner. Once the population converges on the correct opinion, it should remain with it forever. Motivated by biological scenarios in which animals observe and react to the behavior of others, we focus on the extremely constrained model of \emph{passive communication}, which assumes that when observing another agent the only information that can be extracted is the opinion bit of that agent. We prove that this problem can be solved in a poly-logarithmic in~$n$ number of rounds with high probability, while sampling a logarithmic number of agents at each round. Previous works solved this problem faster and using fewer samples, but they did that by decoupling the messages sent by agents from their output opinion, and hence do not fit the framework of passive communication. Moreover, these works use complex recursive algorithms with refined clocks that are unlikely to be used by biological entities. In contrast, our proposed algorithm has a natural appeal as it is based on letting agents estimate the current tendency direction of the dynamics, and then adapt to the emerging trend. 
\end{abstract}



\section{Introduction}

\subsection{Background and motivation}
Disseminating information from one or several sources to the whole
population is a fundamental building block in a myriad of distributed
systems \cite{censor2012global,demers1987epidemic,karp2000randomized,chierichetti2018rumor,giakkoupis2011randomness}, including
in multiple natural systems \cite{sumpter2008consensus,razin2013desert,feinerman2017individual}. 
This task becomes particularly challenging when the system is prone to faults~\cite{DBLP:journals/ploscb/BoczkowskiNFK18,DBLP:journals/jacm/GeorgiouGGK13,DBLP:journals/dc/GeorgiouGK11,dutta2013complexity}, or when the agents or their interactions are constrained \cite{DBLP:journals/dc/BoczkowskiKN19,angluin2006computation}. These issues find relevance in a variety of systems, including insect populations~\cite{razin2013desert}, chemical reaction networks \cite{chen2014speed}, and mobile~sensor~networks \cite{yick2008wireless}. In particular, in many biological systems, the internal computational abilities of individuals are impressively diverse, whereas the communication capacity is highly limited \cite{barclay1982interindividual,razin2013desert,feinerman2017individual}. An extreme situation, often referred to as {\em passive communication} \cite{wilkinson1992information}, is when information is gained by observing the behavior of other animals, which, in some cases, may not even intend to communicate \cite{cvikel2015bats,giraldeau2018social}. 
Such public information can reflect on the quality of possible behavioral options, hence allowing to improve fitness when used properly~\cite{danchin2004public}.

Consider, for example, the following scenario that serves as an inspiration for our model. A group of $n$ animals is scattered around an area searching for food. Assume that one side of the area, say, either the eastern side or the western side, is preferable (e.g., because it contains more food or because it is less likely to attract predators). However, only a few animals know which side is preferable. These knowledgeable animals will therefore spend most of their time in the preferable side of the area. Other animals would like to exploit the knowledge held by the knowledgeable animals, but they are unable to distinguish them from others. Instead, what they can do, is to scan the area in order to roughly estimate how many animals are on each side, and, if they wish, move between the two sides. Can the group of non-knowledgeable animals manage to locate themselves on the preferable side relatively fast, despite initially being spread in an arbitrary way while being completely uncorrelated?

The scenario above illustrates the notion of passive communication. The decision that an animal must make at any given time is to specify on which side of the area it should forage. This choice would be visible by others, and would in fact be the only information that an animal could reveal. Moreover, it cannot avoid revealing it. In particular, even the knowledgeable animals, who do not necessarily wish to communicate intentionally, cannot avoid revealing their ``correct'' choice. Assuming these animals do not actively try to harm others, they would simply reside on the preferable side, and promote this choice passively. The unknowledgeable animals, on the other hand, have a clearer incentive to cooperate, and they could, in principle, manipulate their choices to enhance the convergence process towards the correct choice. However, if their algorithm is also required to self-stabilize, then such a manipulation would be highly limited because, eventually, the choices should converge on one particular choice.

\subsection{The problem}
This paper studies the {\em self-stabilizing bit-dissemination} problem, introduced by Boczkowski, Korman, and Natale in \cite{DBLP:journals/dc/BoczkowskiKN19}, with the aim of solving it using passive communication. The problem considers a fully-connected network of $n$ agents, and a binary world of {\em opinions}, say $\{0,1\}$. One of these opinions is called {\em correct} and the other is called {\em wrong}. Execution proceeds in synchronous rounds (though agents do not have knowledge about the round number). At any given round~$t$, each agent $i$ holds an opinion bit $Y_t^{(i)}\in \{0,1\}$ (viewed as its output variable). The population contains one {\em source agent} which knows which opinion is correct. This agent adopts the correct opinion and remains with it throughout the execution.
Each agent knows whether or not it is the source, which can be formalized by assuming a designated {\em source-bit} in the state of an agent indicating this fact. We study the basic $\pull$ model of communication \cite{demers1987epidemic,karp2000randomized,DBLP:journals/ploscb/BoczkowskiNFK18}, in which in each round, each agent sees the information held by $\ell$ other agents, chosen uniformly at random (with replacement), where $\ell$ is small compared to $n$. 
 We consider the {\em passive communication} model which assumes that the only information that can be obtained by sampling an agent is its opinion bit. Hence, sampling $\ell$ agents is equivalent to receiving an integer between $0$ and $\ell$ corresponding to the number of agents with opinion 1 among the sampled agents.  
 
 In the {\em self-stabilizing} framework, the goal of the non-source agents is to quickly converge on the correct opinion, despite having an arbitrary initial configuration, that is set by an adversary (and despite not being able to distinguish the source from non-sources). We note that although the initial states of agents are arbitrary, we assume that their source-bit is not corrupted, and therefore, they reliably know whether or not they are the source. Moreover, by our assumption, it is guaranteed that there is only one source agent in the system\footnote{Our framework and proofs can be extended to allow for a constant number of sources, however, in this case it must be guaranteed that all sources agree on which opinion is the correct one. Indeed, as mentioned below, when there are conflicts between sources, the problem cannot be solved efficiently in the passive communication model, even if significantly more agents support one opinion.}. The adversary may initially set a different opinion to the source, but then the value of the correct bit would change, and the convergence should be guaranteed with respect to the new value.


The {\em running time} of the protocol corresponds to the first round $t_{con}$ that the configuration of opinions reached a consensus on the correct opinion, and remained unchanged forever after. We say that a protocol converges in time $T$ {\em with high probability (w.h.p)} if $t_{con}\leq T$ with probability at least $1-1/n^c$, for some constant $c>1$. 
Note that we do not require agents to irrevocably commit on their final opinion, but rather that they eventually converge on the correct opinion without necessarily being aware that convergence has happened. 

\paragraph{\bf On the difficulties resulting from using passive communication.} 
Previous works on the self-stabilizing bit-dissemination problem focused on identifying the minimal number of bits per interaction (message size) that need to be revealed in order to solve the problem in a short time. Boczkowski, Korman, and Natale~showed in \cite{DBLP:journals/dc/BoczkowskiKN19} that the  problem can efficiently be solved in $\tilde{O}(\log n)$ rounds w.h.p, by sampling $\ell=2$ agents at each round and using messages of size 3 bits. The protocol therein is based on agents having clocks that tick at each round. However, because of the self-stabilizing setting, the values of these clocks may initially be completely uncorrelated. The idea in that paper was to use 3 auxiliary bits in the messages to synchronize the clocks of agents in a self-stabilizing manner. These clocks were then used to facilitate the convergence of the opinion bit, which encapsulated another bit in the message. These 4 bits were then compressed to 3 bits using a recursive message-size-reduction mechanism. By reducing the message size of that clock-synchronization scheme to 1,~and using a version of the aforementioned recursive message-size-reduction mechanism, a recent work by Bastide, Giakkoupis, and Saribekyan allows to reduce the message size of the self-stabilizing bit-dissemination problem to 1 bit \cite{bastide2021self}. Moreover, stabilization is achieved in $O(\log n)$ rounds w.h.p., by sampling a single agent at each round. Importantly, however, both of these works decoupled the opinion of an agent from the message it uses, and hence, do not fit the framework of passive communication. Moreover, 
 these works use complex recursive algorithms with refined clocks that are unlikely to be used by biological entities. Instead, we are interested in identifying algorithms that have a more natural appeal.

To illustrate the difficulty of self-stabilizing information spread under passive communication, let us consider a more generalized problem than bit-dissemination called 
{\em majority bit-dissemination}. In this problem, the population contains $k\geq 1$ source agents which may not necessarily agree on which opinion is correct.
Specifically, in addition to its opinion, each source-agent stores a {\em preference} bit~$\in \{0,1\}$. Let~$k_i$ be the number of source agents whose preference is~$i$.
Assume that sufficiently more source agents share a preference~$i$ over~$1-i$ (e.g., at least twice as many), and call~$i$ the {\em correct bit}. Then, w.h.p., all agents (including the sources that might have the opposite preference)
should converge their opinions on the correct bit in poly-logarithmic time, and remain with that opinion for polynomial time\footnote{Observe that the case $k=1$ consists of a slightly weaker version than the bit-dissemination problem because the latter problem requires that after convergence is guaranteed w.h.p.,~ the configuration remains correct for an indefinite time with probability~1.}. 
The authors of \cite{DBLP:journals/dc/BoczkowskiKN19} showed that the self-stabilizing majority bit-dissemination problem can be solved in logarithmic time, using messages of size 3 bits, and the authors of \cite{bastide2021self} showed how to reduce the message size to 1. As mentioned, the messages in these protocols were different than the opinions, which were stored as internal variables, and hence the protocols in \cite{DBLP:journals/dc/BoczkowskiKN19,bastide2021self} are not based on passive communication. In fact, the following simple argument implies that this problem could not be solved in poly-logarithmic time under the model of passive communication, even if the sample size is~$n$ (i.e., all agents are being observed in each round)!

Assume by contradiction that there exists a self-stabilizing algorithm that solves the majority bit-dissemination problem using passive communication. Let us run this algorithm on a scenario with $k_1=n/2$ and $k_0=n/4$. Since $k_1\gg k_0$, then after a poly-logarithmic time, w.h.p., all agents would have opinion 1, and would remain with that opinion for polynomial time. 
Denote by $t_0$ the first time after convergence, and
let $s$ denote the internal state of one of the $n/4$ non-source agents at time $t_0$. Similarly, let $s'$ denote the internal state at time $t_0$ of one of the $n/4$ source agents with preference~$0$.
Now consider a second scenario, where we have $k_0=n/4$ and $k_1=0$. 
An adversary sets the internal states of agents (including their opinions) as follows. The internal states of the $k_0$ source agents (with preference~$0$) are all set to be $s'$. Moreover, their opinions (that these sources must publicly declare on) are all~$1$.
Next, the adversary sets the internal states of all non-source agents to be $s$, and their opinions to be 1.  
We now compare the execution of the algorithm in the first scenario (starting at time $t_0$) with the execution in the second scenario (starting at time 0, i.e., after the adversary manipulated the states). Note that both scenarios start with all opinions being 1. Hence, since we consider the passive communication model, all observations in the first round of the corresponding executions, would be unanimously 1. Moreover, as long as no agent changes its opinion in both scenarios, all observations would continue to be unanimously 1.
Furthermore, it is given that from time $t_0$, w.h.p., all agents in the first scenario remain with opinion 1 for a polynomially long period. Therefore, using a union bound argument, it is easy to see that also in the second scenario, w.h.p., all agents would remain with opinion 1 for polynomial time. This contradicts the fact that in the second scenario, w.h.p., the agents should converge on the opinion 0 in poly-logarithmic time. 

Note that the aforementioned impossibility result does not preclude the possibility of solving the self-stabilizing bit dissemination problem in the passive communication model, which does not involve a conflict between sources. Indeed, the authors of \cite{DBLP:journals/dc/BoczkowskiKN19} have suggested several candidate protocols which worked well in simulations\footnote{Note that simulations results may be deceiving in self-stabilizing contexts, since the worst initial conditions for a given protocol are not always evident.}, however, as mentioned therein, their  
analysis appears to be beyond the reach of currently known techniques regarding randomly-interacting agent systems in self-stabilizing contexts. 

\subsection{Our results}
 We propose a simple algorithm that efficiently solves the self-stabilizing bit-dissemination problem in 
 the passive communication model. The algorithm has a natural appeal as it is based on letting agents estimate the current tendency direction of the dynamics, and then adapt to the emerging trend. More precisely (but still, informally), each non-source agent counts the number of agents with opinion 1 it observes in the current round and compares it to the number observed in the previous round. If more 1's are observed now, then the agent adopts the opinion 1, and similarly, if more 0's are observed now, then it adopts the opinion 0 (if the same number of 1's is observed in both rounds then the agent does not change its opinion). 
 Intuitively, on the global level, this behavior creates a persistent movement of the average opinion of the non-source agents towards either $0$ or $1$, which ``bounces'' back when hitting the wrong opinion.
 
 More formally, we first describe the following algorithm. In addition to the opinion bit, the algorithm at an agent uses two internal variables at round~$t$, called $\cn_t$ and $\cn_{t-1}$. The former variable aims to store the number of 1's observe in the previous round, while the latter aims to store the number of 1's observe in the current round. Hence, both encode a number in $\{ 0,\ldots, \ell\}$, and can be stored using $O(\log\ell)$ bits of memory. The following procedure is executed at each round separately by each agent.
As an input, a given agent observes the opinions of $\ell$ randomly chosen agents. Let $J_t$ denote the set of agents it sampled at round $t$, and let $S_t(J_t)=\{Y_t^{(j)}\}_{j\in J_t}$ denote the set containing their opinions at round $t$. Finally, for any set $A$ of opinions, let $\Count(A)$ denote the number of 1-opinions in $A$.

\setlength{\algotitleheightrule}{0pt}%
 \begin{algorithm}[htbp]
\Input{$S_t(J_t)$}
$\cn_t \leftarrow \Count(S_t(J_t))$ \;
\lIf{$\cn_t > \cn_{t-1}$} {
    $Y_{t+1} \leftarrow 1 $
} \lElseIf {$\cn_t < \cn_{t-1}$} {
    $Y_{t+1} \leftarrow 0 $
} \lElse {
    $Y_{t+1} \leftarrow Y_t $
}
\end{algorithm}
As it turns out, one feature of the aforementioned protocol will make the analysis difficult -- that is, that $Y_{t}$ and~$Y_{t+1}$ are dependent, even when conditioning on~$(\x_{t-1},\x_{t})$. This is because $\cn_{t-1}$ is used to compute both~$Y_{t}$ and~$Y_{t+1}$. 
For example, if the set $J_{t-1}$ sampled at round~$t-1$ happens to contain more 1's, then~$\cn_{t-1}$ is larger. In this case, $Y_{t}$ has a higher chance of being~$1$, and~$Y_{t+1}$ has a higher chance of being~$0$. For this reason we introduce a modified version of the protocol that solves this dependence issue. The idea is to partition the set of opinions sampled at round $t$ into 2 sets of equal size. One subset will be used to compare with a subset of round $t-1$, and the other subset will be used to compare with a subset of round $t+1$. Note that this implies that the set of agents sampled in a round is $2\ell$ rather than $\ell$, however, since we are interested in the case $\ell=O(\log n)$, this does not cause a problem. This modified protocol, called {\em Follow the Emerging Trend (FET)}, is the one we shall actually analyze.  
\setlength{\algotitleheightrule}{1pt}%
\begin{algorithm}[htbp]
    \caption{Follow the Emerging Trend (FET) at round~$t$}\label{protocol:FET}
    \Input{$S_t(J_t)$}
    Partition~$S_t(J_t)$ into two sets~$S'_{t},S''_{t}$ of equal size uniformly at random \;
    $\cnt_t \leftarrow \Count(S'_t)$ ; \quad $\cntt_t \leftarrow \Count(S''_t)$ \;
    \lIf{$\cnt_t > \cntt_{t-1}$} {
        $Y_{t+1} \leftarrow 1 $
    } \lElseIf {$\cnt_t < \cntt_{t-1}$} {
        $Y_{t+1} \leftarrow 0 $
    } \lElse {
        $Y_{t+1} \leftarrow Y_t $
    }
\end{algorithm}
Note that, although we used time indices for clarity, the protocol does not require the agents to know~$t$.
The following consists of the main result in the paper.

\begin{theorem} \label{thm:main}
    Algorithm FET solves the self-stabilizing bit-dissemination problem in the passive communication model. It converges in  $O(\log^{5/2} n)$ rounds on the correct opinion, with high  probability, while relying on $\ell=O(\log n)$ samples in each round, and using $O(\log \ell)$ bits of memory per agent.     
\end{theorem}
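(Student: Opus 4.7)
The plan is to analyse the Markov chain $(x_{t-1}, x_t)$ obtained from the public fraction of $1$-opinions, where $\bar x_t$ denotes that fraction restricted to non-source agents; assume WLOG the correct opinion is~$1$, so the source always outputs~$1$. The first step is a structural conditional-independence statement: thanks to the sample-splitting in Protocol~\ref{protocol:FET}, conditional on the $\sigma$-algebra $\mathcal{F}_t$ generated by the public opinions up to round~$t$, the next opinions $(Y_{t+1}^{(i)})_i$ of distinct non-source agents are \emph{mutually independent}, each equal to~$1$ with probability $q(x_{t-1},x_t) + Y_t^{(i)} \cdot r(x_{t-1},x_t)$, where $q(a,b) := \bbP(\cnt > \cntt)$ and $r(a,b) := \bbP(\cnt = \cntt)$ for independent $\cnt \sim \mathrm{Bin}(\ell, b)$ and $\cntt \sim \mathrm{Bin}(\ell, a)$. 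A Chernoff bound over the $n - O(1)$ non-source agents then gives $|\bar{x}_{t+1} - q - \bar{x}_t r| = O(\sqrt{\log n / n})$ with probability $1 - n^{-c}$, reducing the analysis to tracking the deterministic approximation $\bar x_{t+1} \approx q + \bar x_t \, r$.

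The core of the proof then splits into two regimes, governed by the current trend $\delta_t := x_t - x_{t-1}$. In the \emph{strong-trend regime} $|\delta_t| \geq \gamma := c\sqrt{\log n/\ell}$, a Hoeffding bound on $\cnt - \cntt$ forces $q$ within $n^{-\Omega(1)}$ of~$1$ or~$0$ according to $\operatorname{sign}(\delta_t)$; hence $\bar x_{t+1}$ lies within $O(\sqrt{\log n/n})$ of the extreme dictated by $\delta_t$, and within $O(\log n)$ further rounds the chain reaches $\bar x_t \in \{0,1\}$ exactly. In the \emph{weak-trend regime} $|\delta_t| < \gamma$, a local CLT for the difference of independent binomials gives $q \approx 1/2 + c_1 \delta_t \sqrt{\ell}$ together with a systematic source bias of order $1/n$. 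The resulting process $\delta_t$ is a nearly-symmetric random walk with multiplicative drift $c_1\sqrt{\ell}$ and variance $\Theta(\log n / n)$ per step, whose first exit time from $[-\gamma, \gamma]$ can be shown to be $O(\log^{3/2} n)$ rounds w.h.p.

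The final ingredient is the absorption structure. The configuration $\bar x_t = 1$ is the \emph{unique} fixed point of the dynamics, since every non-source then observes $\cnt = \cntt = \ell$ deterministically and keeps opinion~$1$. By contrast, at~$\bar x_t = 0$ the source (permanently at~$1$) drives a branching-process-like geometric rebound $\bbE[\bar x_{t+1} \mid \mathcal{F}_t] \approx \ell\, \bar x_t$ that extracts the chain from the vicinity of~$0$ in $O(\log n/\log \ell)$ rounds and initiates a new weak-trend phase whose exit is, by the source bias, oriented towards~$+1$ with probability bounded away from~$1/2$. A supermartingale argument on $\Phi_t := (1 - \bar x_t) + \alpha |\delta_t|$ then yields the claimed $O(\log^{5/2} n)$ running time: at most $O(\log n)$ bounces between the extremes, each of duration $O(\log^{3/2} n)$ dominated by the weak-trend escape phase. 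The main obstacle is the rigorous treatment of this weak-trend phase: because the source bias is only $O(1/n)$ while the binomial fluctuations have scale $1/\sqrt{n\ell}$, one must show through a careful second-moment computation on $\delta_t$ that the sign of the exit is biased towards~$+1$ by a constant, and that this constant bias survives a union bound across the $O(\log n)$ independent bounces needed before absorption.
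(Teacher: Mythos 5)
Your setup matches the paper's skeleton (the Markov chain on consecutive pairs $(\x_t,\x_{t+1})$, conditional independence of the agents' updates thanks to the sample splitting, a regime split by trend strength, and the source-driven geometric rebound near $0$), but the heart of the argument is asserted rather than proved, and the mechanism you give for it is not right. The claim that in the weak-trend regime $\delta_t=\x_t-\x_{t-1}$ behaves as ``a nearly-symmetric random walk with multiplicative drift $c_1\sqrt{\ell}$'' whose exit time from $[-\gamma,\gamma]$ is $O(\log^{3/2}n)$ hides exactly the difficulty the paper spends most of its effort on. The update is genuinely two-dimensional: $\bbE(\x_{t+2})$ depends on both the trend and the displacement $\x_{t+1}-1/2$, and in the sub-region where the displacement is ahead of the trend (the paper's area $\bfB$ inside $\yellow$) the expectation map $y\mapsto g(\x_t,y)$ has a near-fixed point, so the displacement grows only by a factor $1+\Theta(1/\sqrt{\ell})$ per round --- this is what forces the $\log^{3/2}n$ bound and requires the two-coin competition lemmas in both directions (favorite and underdog), the monotonicity/fixed-point analysis of $g$, and an anti-concentration estimate to rule out stalling. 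If the drift on $\delta_t$ really were multiplicative with factor $\sqrt{\ell}\gg 1$, the exit would take $O(\log n/\log\log n)$ rounds and the theorem would be much stronger; your description is internally inconsistent with the bound you claim, and nothing in the proposal substitutes for the $\bfA/\bfB/\bfC$ decomposition (or an equivalent) that actually delivers it.

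The second gap is the global assembly. You rely on the exit from the weak-trend phase being ``biased towards $+1$ by a constant'' because of the $O(1/n)$ source bias, and you correctly flag this as your main obstacle --- but an $O(1/n)$ per-step bias against per-step fluctuations of order $1/\sqrt{n}$ over polylogarithmically many steps shifts exit probabilities by $o(1)$, so this constant bias is almost certainly not provable, and in the paper it is never needed. The paper's architecture tolerates an exit of $\yellow$ in the wrong direction: that leads to near-consensus on $0$, from which the source triggers the $\cyan{1}$ rebound with multiplicative growth $\Theta(\log n)$ per round, landing directly in $\purple{1}\cup\green{1}$ (a strong \emph{positive} trend), after which absorption at $1$ follows; there is no repeated bouncing between the extremes, and the $\log n$ factor in $O(\log^{5/2}n)$ comes instead from the $\Omega(\log n)$ constant-probability escape attempts from $\bfA\cup\bfC$ inside $\yellow$, each possibly separated by a $\log^{3/2}n$ excursion in $\bfB$. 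To repair your proposal you would need either to prove the exit-direction bias (unlikely), or to replace it, as the paper does, by the observation that a wrong-direction exit is self-correcting via the rebound and never re-enters the weak-trend region w.h.p. (Minor points: the per-step variance of $\x_{t+2}$ is $\Theta(1/n)$, not $\Theta(\log n/n)$, and the proposed supermartingale $\Phi_t=(1-\x_t)+\alpha|\delta_t|$ is not obviously a supermartingale across the rebound at $0$.)
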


\subsection{Related works} 
In recent years, the study of {\em population protocols} has attracted significant attention  in the distributed computing community \cite{DBLP:journals/sigact/AlistarhG18,DBLP:conf/icalp/AlistarhG15,DBLP:journals/corr/AlistarhAEGR16,DBLP:journals/dc/AngluinAE08,DBLP:journals/eatcs/AspnesR07}.
These models often consider agents that interact under random meeting patterns while being restricted in both their memory and communication capacities. By now, we understand the computational power of such systems rather well, but apart from a few exceptions, this understanding is limited to non-faulty scenarios.

The framework of {\em opinion dynamics} corresponds to settings of multiple agents, where in each round, each agent samples one or more agents at random, extracts their opinions, and employs a certain rule for updating its opinion. The study of opinion dynamics crosses disciplines, and is highly active in physics and computer science, see review e.g., in \cite{becchetti2020consensus}. Many of the models of opinion dynamics can be considered as following passive communication, since the information an agent reveals coincides with its opinion. Generally speaking, however, the typical scientific approach in opinion dynamics is to start with some simple update rule, and analyze the resulting dynamics, rather than tailoring an updating rule to solve a given distributed problem. For example, perhaps the most famous dynamics in the context of interacting particles systems concerns the {\em voter} model~\cite{liggett1985interacting}. In theoretical computer science, extensive research has been devoted to analyzing the time to reach consensus, following different updating rules including the {\em 3-majority} \cite{doerr2011stabilizing}, {\em Undecided-State Dynamics} \cite{DBLP:journals/dc/AngluinAE08}, and others. In these works, consensus should be reached either on an arbitrary value, or on the majority (or plurality) opinion, as evident in the initial configuration. 

In many natural settings, however, the group must converge on a particular consensus value that is a function of the environment. Moreover, agents have different levels of knowledge regarding the desired value, and the system must utilize the information held by the more knowledgeable individuals \cite{sumpter2008consensus,DBLP:journals/fams/AyalonSFKGF21,DBLP:journals/ploscb/KormanGF14,rajendran2022ants}. As explained in more detail below, when communication is restricted, and the system is prone to faults, this task can become  challenging. 

Propagating information from one or more sources to the rest of the population has been the focus of a myriad of works in distributed computing.
This dissemination problem has been studied under various models taking different names, including {\em rumor spreading}, {\em information spreading}, {\em gossip}, {\em broadcast}, and others, see e.g., \cite{giakkoupis2014tight,censor2012global,demers1987epidemic,karp2000randomized,chierichetti2018rumor,giakkoupis2011randomness}. 
A classical algorithm in the $\pull$ model spreads the opinion of the source to all others in $2\log n$ rounds, by letting each uninformed agent copy
the opinion of an informed agent whenever seeing one for the first time \cite{karp2000randomized}. Unfortunately, this elegant algorithm does not suit all realistic scenarios, since its soundness crucially relies on the absence of misleading information. To address such issues, rumor spreading has been studied under different models of faults. One line of research investigates the case that messages may be corrupted with some fixed probability \cite{DBLP:journals/dc/FeinermanHK17,DBLP:conf/innovations/BoczkowskiFKN18}. 
Another model of faults is {\em self-stabilization} \cite{dijkstra}, 
where the system must eventually converge on the opinion of the source regardless of the initial configuration of states \cite{dijkstra}. For example, the algorithm in \cite{karp2000randomized} fails in this setting, since non-source agents may be initialized to ``think'' that they have already been informed by the correct opinion, while they actually hold the wrong opinion. 

\paragraph{\bf More details on previous works regarding the self-stabilizing bit-dissemination problem.}
The self-stabilizing bit-dissemination problem was introduced in \cite{DBLP:journals/dc/BoczkowskiKN19}, with the aim of minimizing the message size. As mentioned therein, if all agents share the same notion of global time, then convergence can be achieved in $\bigO(\log n)$ time w.h.p.~even under passive communication. The idea is that agents divide the time horizon into phases of length $T=4\log n$, and that each phase is further subdivided into $2$ subphases of length $2\log n$ each. In the first subphase of each phase, if a non-source agent observes an opinion $0$, then it copies it as its new opinion, but if it sees 1 it ignores it. In the second subphase, it does the opposite, namely, it adopts the output bit $1$ if and only if it sees an opinion $1$. Now, consider the first phase. If the source supports opinion 0 then at the end of the first subphase, every output bit would be $0$ w.h.p., and the configuration would remain that way forever. Otherwise, if the source supports 1, then at the end of the second subphase all output bits would be $1$ w.h.p., and remain $1$ forever. 


The aforementioned protocol indicates that the self-stabilizing bit-dissemination problem could be solved efficiently by running a self-stabilizing {\em clock-syncronization} protocol in parallel to the previous example protocol. This parallel execution amounts to adding one bit to the message size of the clock synchronization protocol. The main technical contribution of \cite{DBLP:journals/dc/BoczkowskiKN19}, as well as the focus of subsequent work in \cite{bastide2021self}, was solving the self-stabilizing clock-synchronization using as few as possible bits per message. In fact, the authors in \cite{bastide2021self} managed to do so using 1-bit messages. This construction thus implies a solution to the self-stabilizing bit-dissemination problem using 2 bits per message. A recursive procedure, similar to the one established in \cite{DBLP:journals/dc/BoczkowskiKN19}, then allowed to further compress the 2 bits into 1-bit messages. Importantly, however,  the 1-bit message revealed by an agent is different from its opinion bit, which is kept in the protocols of \cite{DBLP:journals/dc/BoczkowskiKN19,bastide2021self} as an internal variable.
At first glance, to adhere to the passive communication model, one may suggest that agents simply choose their opinion to be this 1-bit message used in \cite{bastide2021self}, just for the purpose of communication, until a consensus is reached, and then switch the opinion to be the correct bit, once it is identified. There are, however, two difficulties to consider regarding this approach. First, in our setting, the source agent  does not change its opinion (which, in the case of~\cite{bastide2021self}, may prevent the protocol from reaching a consensus at all). Second, even assuming that the protocol functions properly despite the source having a stable opinion, it is not clear how to transition from the ``communication'' phase (where agents use their opinion to operate the protocol, e.g., for synchronizing clocks) to the ``consensus'' phase (where all opinions must be equal to the correct bit at every round). For instance, the first agents to make the transition may disrupt other agents still in the first phase.

%


\section{Proof of Theorem \ref{thm:main}: General Overview}

The goal of this section is to prove Theorem~\ref{thm:main}. The $O(\log \ell)$ bits upper  bound on the memory complexity clearly follows from the fact that the only variables kept by the FET algorithm (Protocol \ref{protocol:FET}) are~$\cnt$ and~$\cntt$, which are used to count the number of 1's in a sample (of size $\ell$).

Since the protocol is symmetric with respect to the opinion of the source, we may assume without loss of generality that the source has opinion~$1$. Our goal would therefore be to show that the FET algorithm converges to 1 fast, w.h.p., regardless of the initial configuration of non-source agents.
Note that in order to achieve running time of $O(T)$ w.h.p guarantee, is it sufficient to show that the 
algorithm stabilizes in $T$ rounds with probability at least $1-1/n^\epsilon$, for some $\epsilon>0$. Indeed, because of the self-stabilizing property of the algorithm, the probability that the algorithm does not stabilize within $2T/\epsilon$ rounds is at most $(1/n^\epsilon)^{2/\epsilon}=1/n^{2}$. 


For the sake of analysis, let~$\x_t$ denote the fraction of agents with opinion~$1$ at round $t$ among the whole population of agents (including the source). We shall extensively use the two dimensional grid $\calG :=\{0,\frac{1}{n},\ldots,\frac{n-1}{n},1\}^2$. When analyzing what happens at round $t+2$, the $x$-axis of $\calG$ would represent $\x_t$, and the $y$-axis would represent $\x_{t+1}$.


\begin{observation} \label{lem:evolution}
    For any round~$t$, conditioning on $\x_t = \xx{t}$, and $\x_{t+1} = \xx{t+1}$,
    the probability that a non-source agent~$i$ has opinion~$1$ on round~$t+2$ is
    \begin{equation} \label{eq:indiv_next_opinion}
        \bbP \pa{ Y_{t+2}^{(i)} = 1 \condd{\x_t = \xx{t}}{\x_{t+1} = \xx{t+1}} }  =  \prob{\ell}{\xx{t+1}}{>}{\ell}{\xx{t}} + \bbOne_{\{ Y_{t+1}^{(i)} = 1 \}} \cdot \prob{\ell}{\xx{t+1}}{=}{\ell}{\xx{t}}.
    \end{equation}
    Moreover, there are independent binary random variables~$X_1,\ldots,X_n$ such that $\x_{t+2}$ is distributed as~$\frac{1}{n} \sum_{i=1}^n X_i$.
    Eventually,
    \begin{equation} \label{eq:next_expectation}
         \bbE \pa{ \x_{t+2} \condd{\x_t = \xx{t}}{\x_{t+1} = \xx{t+1}} } =
         \prob{\ell}{\xx{t+1}}{>}{\ell}{\xx{t}} + \xx{t+1} \cdot \prob{\ell}{\xx{t+1}}{=}{\ell}{\xx{t}} + \frac{1}{n}(1-\prob{\ell}{\xx{t+1}}{\geq}{\ell}{\xx{t}}).
    \end{equation}
\end{observation}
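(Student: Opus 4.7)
The plan is to prove the three assertions in turn by directly unwinding the definition of the FET protocol.

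For equation~\eqref{eq:indiv_next_opinion}, I would read off the protocol executed at round $t+1$: a non-source agent $i$ sets $Y_{t+2}^{(i)}=1$ exactly when either $\cnt_{t+1}^{(i)}>\cntt_{t}^{(i)}$, or $\cnt_{t+1}^{(i)}=\cntt_{t}^{(i)}$ together with $Y_{t+1}^{(i)}=1$. The two counts are produced from independent random samples --- they come from different rounds, and also from different halves of the round-$t$ partition --- while $Y_{t+1}^{(i)}$ was set using strictly earlier randomness. Hence, conditional on $\x_{t}=\xx{t}$ and $\x_{t+1}=\xx{t+1}$, the variables $\cnt_{t+1}^{(i)}\sim B_{\ell}(\xx{t+1})$ and $\cntt_{t}^{(i)}\sim B_{\ell}(\xx{t})$ are independent, and jointly independent of $Y_{t+1}^{(i)}$. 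Taking the probability of the disjoint union of the two scenarios that produce $Y_{t+2}^{(i)}=1$ immediately yields~\eqref{eq:indiv_next_opinion}.

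For the independent-Bernoulli decomposition of $\x_{t+2}$, my approach is to first condition on the entire vectors of opinions at rounds $t$ and $t+1$, not only on the two fractions. Under this finer conditioning, the count pairs $(\cnt_{t+1}^{(i)},\cntt_{t}^{(i)})$ are independent across agents $i$, so the indicators $(Y_{t+2}^{(i)})_i$ are mutually independent Bernoullis. By~\eqref{eq:indiv_next_opinion}, the parameter of $Y_{t+2}^{(i)}$ equals $p_{1}:=\prob{\ell}{\xx{t+1}}{\geq}{\ell}{\xx{t}}$ when $Y_{t+1}^{(i)}=1$ and $p_{0}:=\prob{\ell}{\xx{t+1}}{>}{\ell}{\xx{t}}$ when $Y_{t+1}^{(i)}=0$; the source contributes a deterministic $1$. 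Collecting terms, $n\,\x_{t+2}$ has the law of $1$ plus a sum of $n\xx{t+1}-1$ independent $\mathrm{Ber}(p_{1})$ variables and $n(1-\xx{t+1})$ independent $\mathrm{Ber}(p_{0})$ variables. Because this law depends on the round-$(t+1)$ configuration only through the count $n\xx{t+1}$ of ones, it also coincides with the conditional law of $\x_{t+2}$ given only $(\x_t,\x_{t+1})$, and the Bernoulli summands above (together with the deterministic source bit) play the role of the announced $X_1,\dots,X_n$.

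Equation~\eqref{eq:next_expectation} then follows by taking expectations in this very decomposition: $\bbE \pa{ n\,\x_{t+2} \condd{\x_t=\xx{t}}{\x_{t+1}=\xx{t+1}} } = 1+(n\xx{t+1}-1)\,p_{1}+n(1-\xx{t+1})\,p_{0}$. Dividing by $n$ and using the identities $p_{1}-p_{0}=\prob{\ell}{\xx{t+1}}{=}{\ell}{\xx{t}}$ and $1-p_{1}=1-\prob{\ell}{\xx{t+1}}{\geq}{\ell}{\xx{t}}$, the terms regroup precisely into the right-hand side of~\eqref{eq:next_expectation}. I do not expect a serious obstacle --- the argument is essentially careful bookkeeping made possible by the fact that FET uses disjoint sample halves. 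The only step that warrants attention is the middle one: although the identities of the opinion-$1$ agents at round $t+1$ are not pinned down by $\x_{t+1}$ alone, the conditional distribution of $\x_{t+2}$ depends on the round-$(t+1)$ configuration only through the count $n\,\x_{t+1}$, and it is precisely this symmetry that lets the independent-Bernoulli description survive when the conditioning is coarsened back to $(\x_t,\x_{t+1})$.
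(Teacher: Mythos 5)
Your proof is correct and follows essentially the same route as the paper: condition on a finer event than $(\x_t,\x_{t+1})$ that also pins down which agents hold opinion $1$ at round $t+1$, observe that the relevant count pairs are independent Binomials across agents (thanks to the disjoint sample halves of FET), and then coarsen back to $(\x_t,\x_{t+1})$ since the resulting conditional law depends on the configuration only through $n\x_{t+1}$. The bookkeeping for Eq.~\eqref{eq:next_expectation} matches the paper's computation.
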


The proof of Observation~\ref{lem:evolution} is deferred to Appendix~\ref{app:evolution}.
A consequence of Observation~\ref{lem:evolution}, is that the execution of the algorithm induces a Markov chain on~$\calG$.
This Markov chain has a unique absorbing state, $(1,1)$, since we assumed the source to have opinion~$1$.
To prove Theorem~\ref{thm:main} we therefore only need to bound the time needed to reach~$(1,1)$.

\subsection{Partitioning the grid into domains}
Let us fix~$0<\delta<1/2$ and $\lambda_n = \frac{1}{\log^{1/2+\delta}n}$. We partition $\calG$ into domains as follows (see illustration on Figure~\ref{fig:partition}).
\begin{align*}
    \green{1} &= \bigg\{ (\x_t,\x_{t+1}) \text{ s.t. } \x_{t+1} \geq \x_t + \delta \bigg\}, \\
    \purple{1} &= \bigg\{ (\x_t,\x_{t+1}) \text{ s.t. } \frac{1}{\log n} \leq \x_t < \frac{1}{2}-3\delta \text{ and } (1-\lambda_n) \cdot \x_t \leq \x_{t+1} < \x_t + \delta \bigg\}, \\
    \red{1} &= \bigg\{ (\x_t,\x_{t+1}) \text{ s.t. } \frac{1}{\log n} \leq \x_{t+1} \text{ and } \x_t < \frac{1}{2}-3\delta \text{ and } \x_t - \delta \leq \x_{t+1} < (1-\lambda_n) \cdot \x_t \bigg\}, \\
    \cyan{1} &= \bigg\{ (\x_t,\x_{t+1}) \text{ s.t. } 0 \leq \min(\x_t,\x_{t+1}) < \frac{1}{\log n} \text{ and } \x_t - \delta < \x_{t+1} < \x_t + \delta \bigg\}, \\
    \yellow &= \bigg\{ (\x_t,\x_{t+1}) \text{ s.t. } \frac{1}{2} - 3\delta \leq \x_t < \frac{1}{2} \leq 3\delta \text{ and } \frac{1}{2} - 4\delta \leq \x_{t+1} \leq \frac{1}{2} + 4\delta \text{ and } |\x_{t+1}-\x_t| < \delta  \bigg\}.
\end{align*}
Similarly, for the former 4 domains, we define~$\green{0},\purple{0},\red{0}$ and~$\cyan{0}$ to be their symmetric equivalents (w.r.t the point ($\frac{1}{2},\frac{1}{2}$)), and finally define:
    $\green{} = \green{0} \cup \green{1}$, 
    $\purple{} = \purple{0} \cup \purple{1}$,
    $\red{} = \red{0} \cup \red{1}$, and
    $\cyan{} = \cyan{0} \cup \cyan{1}$.
We shall analyze each area separately, conditioning on the Markov chain to be at any point in that area, and focusing on the number of rounds required to escape the area, and the probability that this escape is made to a particular other area. 
Figure~\ref{fig:main_proof} represents a sketch of the proof of Theorem~\ref{thm:main}, which may help to navigate the intermediate results. 
\begin{figure}
    \centering
    \begin{subfigure}[b]{0.55\textwidth}
        \centering
        \includegraphics[width=\textwidth]{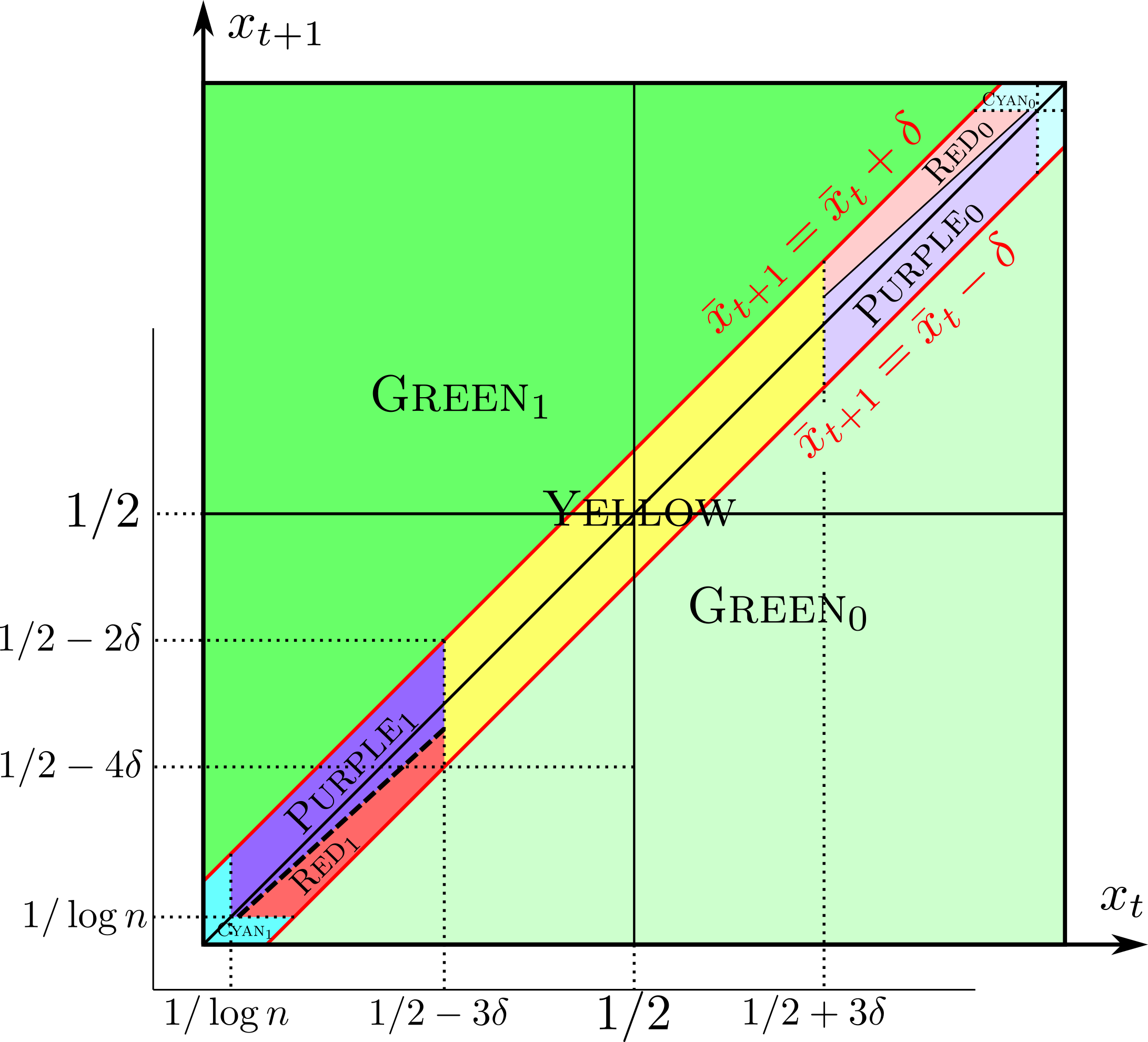}
        \caption{}
        \label{fig:partition}
    \end{subfigure}
    \hfill
    \begin{subfigure}[b]{0.4\textwidth}
        \centering
        \includegraphics[width=\textwidth]{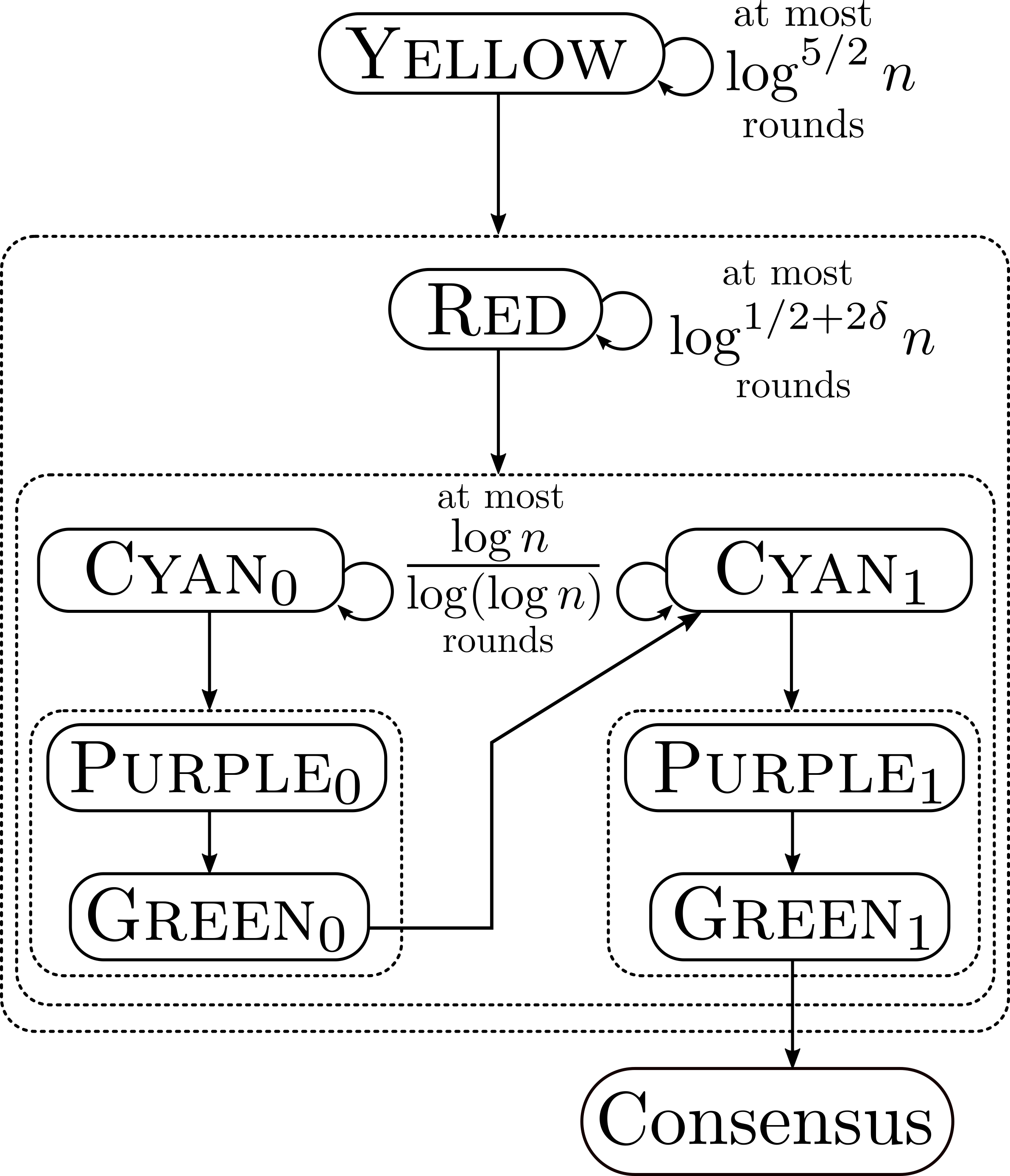}
        \caption{}
        \label{fig:main_proof}
    \end{subfigure}
    \caption{(a) Partitioning the state space into domains. The x-axis (resp., y-axis) represents the proportion of agents with opinion~$1$ in round~$t$ (resp.,~$t+1$). The thick dashed line at the frontier between~$\purple{1}$ and~$\red{1}$ is defined by~$\x_{t+1} = (1-\lambda_n) \x_t$. (b) Sketch of the proof of Theorem~\ref{thm:main}. All transitions are  w.p. at least~$1-1/n^{\Omega(1)}$. The numbers next to the self-loops indicate that the process stays in the corresponding domain for this number of rounds w.p. at least~$1-1/n^{\Omega(1)}$. The source is assumed to have opinion~$1$.}
\end{figure}

As it happens, the dynamics starting from a point $(\x_t,\x_{t+1})$ highly depends on the difference between  $\x_t$ and~$\x_{t+1}$. Roughly speaking, the larger $|\x_{t+1}-\x_t|$ is the faster is the convergence. For this reason, we refer to $|\x_{t+1}-\x_t|$ as the {\em speed} of the point $(\x_t,\x_{t+1})$. (This could also be viewed as the ``derivative'' of the process at time $t$.)

\subsection{Analyzing the Markov chain at different domains}
Due to lack of space, most proofs are deferred to the appendix. We have decided to include in the extended abstract the core of the proof of the main lemma that concerns the Yellow domain (Section~\ref{sec:lem:B1}), which was one of the more challenging results obtained in this paper. In addition, we also provide the core of the proof of the main lemma that concerns the Cyan domain (Section~\ref{sec:lem:cyan}), because this part of the state-space is essential for understanding the dynamics of the protocol. 

Let us now give an overview of the intermediate results. 
First we consider~$\green{}$, in which the speed of points is large. In Lemma \ref{lem:green} we show that from points in that domain, non-source agents reach a consensus in just one round w.h.p. In particular, if the Markov chain is at some point in $\green{1}$, then the consensus will be on 1, and we are done. If, on the other hand, the Markov chain is in $\green{0}$, then the consensus of non-source agents would be on 0. As we show later, in that case the Markov chain would reach~$\cyan{1}$ in one round w.h.p.

\begin{lemma} [Green area] \label{lem:green}
    Assume that~$c$ is sufficiently large.
    If~$(\x_t,\x_{t+1}) \in \green{1}$, then w.h.p., for every non-source agent~$i$, $Y_i^{(t+2)} = 1$. Similarly, if~$(\x_t,\x_{t+1}) \in \green{0}$, then w.h.p., for every non-source agent~$i$, $Y_i^{(t+2)} = 0$.
\end{lemma}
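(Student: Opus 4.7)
The plan is to reduce the statement to a routine concentration argument on binomials, followed by a union bound over non-source agents. By Observation~\ref{lem:evolution}, for any non-source agent $i$, $Y_{t+2}^{(i)} = 1$ as soon as $\cnt_{t+1} > \cntt_t$, where $\cnt_{t+1}$ and $\cntt_t$ are distributed as $B_\ell(\x_{t+1})$ and $B_\ell(\x_t)$ respectively and are independent (they are built from samples drawn at different rounds). Hence it suffices to lower-bound $\bbP(\cnt_{t+1} > \cntt_t)$ uniformly over all non-source agents.

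I would set $D^{(i)} := \cnt_{t+1} - \cntt_t$ and note that $\bbE[D^{(i)}] = \ell(\x_{t+1} - \x_t) \geq \ell\delta$ by the very definition of $\green{1}$. A Hoeffding bound applied to the $2\ell$ independent, $[-1,1]$-bounded summands composing $D^{(i)}$ then gives
\[
\bbP\pa{D^{(i)} \leq 0} \;\leq\; \exp\pa{-\Omega(\ell \delta^2)}.
\]
Since $\delta$ is a fixed constant and the hidden constant in $\ell = \Theta(\log n)$ can be enlarged at will --- which is exactly what the hypothesis ``$c$ sufficiently large'' is buying --- the right-hand side is at most $n^{-(c+2)}$ for any desired $c$. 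A union bound over the at most $n-1$ non-source agents then shows that, with probability at least $1 - 1/n^{c+1}$, every non-source agent satisfies $D^{(i)} > 0$ and hence sets $Y_{t+2}^{(i)} = 1$. The case $(\x_t,\x_{t+1}) \in \green{0}$ is handled by the mirror-image argument: $\x_{t+1} \leq \x_t - \delta$ forces $\cnt_{t+1} < \cntt_t$ w.h.p.\ for each agent, so every non-source agent sets its opinion to $0$.

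I do not anticipate any real obstacle. This lemma serves as the clean base case of the partition analysis: whenever the process exhibits ``large speed'' ($|\x_{t+1}-\x_t| \geq \delta$), one round suffices to align all non-source agents with the emerging trend, which is exactly what lets the author treat $\green{}$ as a (nearly) terminal region in Figure~\ref{fig:main_proof}. The only mild point worth checking is that the two binomials being compared inside $D^{(i)}$ must be independent; this is automatic here because $\cnt_{t+1}$ and $\cntt_t$ are drawn on different rounds, so the partition into $S'$ and $S''$ built into FET is not even needed for this particular lemma (though it becomes crucial in the subsequent ones).
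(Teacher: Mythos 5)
Your proof is correct and essentially identical to the paper's argument: the paper also bounds $\bbP(Y_{t+2}^{(i)} = 0) \leq \bbP\pa{B_\ell(\x_{t+1}) \leq B_\ell(\x_t)}$, applies a Hoeffding-style bound (packaged there as Lemma~\ref{lem:topdog}) to get $\exp\pa{-\tfrac{1}{2}\ell\delta^2}$, and finishes with a union bound over the $n-1$ non-source agents using $c > 2/\delta^2$. The only cosmetic difference is that you inline the Hoeffding computation on $D^{(i)}$ rather than citing the paper's prepackaged lemma, and your observation that the $S'/S''$ split is not actually needed here (independence already comes for free across rounds) is a correct and mildly illuminating aside.
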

The proof of Lemma~\ref{lem:green} follows from a simple application of H{\oe}ffding's inequality, and is deferred to Appendix~\ref{app:green}.
Next, we consider the area~$\purple{}$, and show that the population goes from~$\purple{}$ to~$\green{}$ in just one round, w.h.p.
In~$\purple{}$, the speed is relatively low, and~$\x_t$ and $\x_{t+1}$ are quite far from~$1/2$. On the next round, we expect~$\x_{t+2}$ to be close to~$1/2$, thus gaining enough speed in the process to join~$\green{}$. The proof of the following lemma is rather straightforward, and is deferred to Appendix~\ref{app:purple}.

\begin{lemma} [Purple area] \label{lem:purple}
    Assume that~$c$ is sufficiently large.
    If~$(\x_t,\x_{t+1}) \in \purple{1}$, then~$(\x_{t+1},\x_{t+2}) \in \green{1}$ w.h.p.
    Similarly, if~$(\x_t,\x_{t+1}) \in \purple{0}$, then~$(\x_{t+1},\x_{t+2}) \in \green{0}$ w.h.p.
\end{lemma}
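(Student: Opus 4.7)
The plan is to show that, conditioned on $(\x_t,\x_{t+1}) \in \purple{1}$, the expected value $\bbE\pa{\x_{t+2} \cond{\x_t,\x_{t+1}}}$ is within $\delta/4$ of $1/2$. Since any point in $\purple{1}$ satisfies $\x_{t+1} < \x_t + \delta < 1/2 - 2\delta$, this yields an expected jump of at least $7\delta/4$ above~$\x_{t+1}$. Observation~\ref{lem:evolution} provides $n$ independent binary variables whose average is distributed as $\x_{t+2}$, so Hoeffding's inequality promotes the bound on the expectation to $\x_{t+2} \geq \x_{t+1} + \delta$ with probability $1 - e^{-\Omega(n)}$, placing $(\x_{t+1},\x_{t+2})$ in $\green{1}$. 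The symmetric statement for $\purple{0}$ follows by swapping the roles of $0$ and $1$, so I focus on $\purple{1}$ throughout.

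Formula~\eqref{eq:next_expectation} of Observation~\ref{lem:evolution} reduces the task to lower-bounding the leading term $\prob{\ell}{\x_{t+1}}{>}{\ell}{\x_t}$. I will compare this quantity to the symmetric situation in which both binomials share parameter~$\x_t$. Let $A,A'$ be independent copies of $B_\ell(\x_t)$; by symmetry, $\bbP(A'>A) = (1 - p_{eq})/2$, where $p_{eq} := \bbP(A = A')$. The standard local-CLT estimate for the collision probability of two independent binomials gives $p_{eq} = O\pa{1/\sqrt{\ell\,\x_t(1-\x_t)}} = O(1/\sqrt{c})$, using $\x_t \geq 1/\log n$, $1-\x_t \geq 1/2$, and $\ell = c\log n$. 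Taking $c$ sufficiently large, $p_{eq} \leq \delta/4$.

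The comparison proceeds by conditioning on $A = a$ and writing, via the standard identity for the derivative of a binomial tail,
\[
\bbP\pa{B_\ell(\x_t) > a} - \bbP\pa{B_\ell(\x_{t+1}) > a} = \int_{\x_{t+1}}^{\x_t} \ell \binom{\ell-1}{a} p^a (1-p)^{\ell-1-a}\, dp.
\]
Bounding the integrand by its maximum over $p$ (attained at the mode $p = a/(\ell-1)$), averaging over $a \sim B_\ell(\x_t)$ via a Stirling estimate, and using $|\x_t - \x_{t+1}| \leq \lambda_n \x_t$ yields $\bbP(A'>A) - \prob{\ell}{\x_{t+1}}{>}{\ell}{\x_t} = O(\lambda_n \sqrt{\ell\,\x_t}) = O(1/\log^{\delta} n) = o(1)$. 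This is exactly where the choice $\lambda_n = \log^{-1/2-\delta}n$ pays off: it makes $\lambda_n\sqrt{\ell} = o(1)$.

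Combining the previous two paragraphs gives $\prob{\ell}{\x_{t+1}}{>}{\ell}{\x_t} \geq 1/2 - \delta/8 - o(1)$. Dropping the non-negative middle term of~\eqref{eq:next_expectation} and absorbing the $O(1/n)$ correction yields $\bbE\pa{\x_{t+2}\cond{\x_t,\x_{t+1}}} \geq 1/2 - \delta/4$ for $n$ large, hence $\bbE(\x_{t+2})-\x_{t+1} \geq 7\delta/4$. Hoeffding's inequality applied to the $n$ independent binaries of Observation~\ref{lem:evolution} finishes the argument: $|\x_{t+2} - \bbE(\x_{t+2})| \leq \delta/2$ except on an event of probability $2e^{-n\delta^2/2}$, so w.h.p.\ $(\x_{t+1},\x_{t+2}) \in \green{1}$. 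The only non-routine ingredient is the quantitative comparison of the preceding paragraph in the regime $\x_{t+1} < \x_t$; the complementary case $\x_{t+1} \geq \x_t$ follows immediately from stochastic monotonicity.
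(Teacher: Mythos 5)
Your proposal follows the same skeleton as the paper's proof — lower-bound $\bbE\pa{\x_{t+2}}$ by $\prob{\ell}{\x_{t+1}}{>}{\ell}{\x_t}$ via Observation~\ref{lem:evolution}, show this probability is within a small constant of $1/2$, then use the independent-binaries structure plus a concentration bound to push $\x_{t+2}$ above $\x_{t+1}+\delta$ — but the core estimate is obtained by a genuinely different method. The paper reduces to the worst case $\x_{t+1}=(1-\lambda_n)\x_t$ by monotonicity and invokes Lemma~\ref{lem:underdog}, a Berry--Esseen bound giving $\prob{\ell}{(1-\lambda_n)\x_t}{>}{\ell}{\x_t} \geq 1-\Phi\pa{\sqrt{\ell}\lambda_n\x_t/\sigma}-C/(\sigma\sqrt{\ell})$, where the Gaussian argument is $o(1)$ because $\lambda_n\sqrt{\ell}\to 0$ and the correction term is $O(1/\sqrt{c})$. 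You instead compare to the symmetric two-coin game: $\bbP(A'>A)=(1-p_{eq})/2$ with the collision probability $p_{eq}=O(1/\sqrt{\ell\x_t(1-\x_t)})=O(1/\sqrt{c})$, and control the perturbation from $\x_t$ to $\x_{t+1}$ through the parameter-derivative identity for the binomial tail, getting an error $O(\lambda_n\sqrt{\ell\x_t})=o(1)$. Both routes exploit exactly the same two features of the Purple region ($\ell\x_t\geq c$ and $\lambda_n\sqrt{\ell}=o(1)$), and both need $c$ large only to beat fixed constants depending on $\delta$; yours is more elementary (no Berry--Esseen), at the cost of a somewhat more hands-on local estimate. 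The Hoeffding finish is interchangeable with the paper's multiplicative Chernoff step, and your reduction of the case $\x_{t+1}\geq\x_t$ to stochastic monotonicity mirrors the paper's reduction to the boundary case.

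One step needs patching: when you average the mode-value bound $\max_p \ell\binom{\ell-1}{a}p^a(1-p)^{\ell-1-a}$ over $a\sim B_\ell(\x_t)$, the Stirling estimate $O(\ell/\sqrt{a})$ is only of the right order for $a=\Theta(\ell\x_t)$; for small $a$ (down to $a=0$, where the integrand can be as large as $\ell$) the bound degrades, and since $\ell\x_t$ may be as small as $c$, the exceptional set has probability only $e^{-\Omega(c)}$, not $n^{-\Omega(1)}$. So the comparison error is really $O(\lambda_n\sqrt{\ell\x_t})+e^{-\Omega(c)}$ rather than $o(1)$: bound the contribution of $\{a\leq \ell\x_t/2\}$ by the trivial bound $1$ times its Chernoff probability, and absorb the resulting constant into the error budget by enlarging $c$, exactly as you already do for $p_{eq}$ and as the paper does for its $C/(\sigma\sqrt{\ell})$ term. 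With that adjustment the argument goes through.
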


Next, we bound the time that can be spent in~$\red{}$, by using the fact that as long as the process is in~$\red{1}$ (resp., $\red{0}$), $\x_t$ (resp., $(1-\x_t)$) decreases (deterministically) by at least a multiplicative factor of~$(1-\lambda_n)$ at each round.
After a poly-logarithmic number of rounds, the Markov chain must leave~$\red{}$ and in this case, we can show that it cannot reach~$\yellow$ right away. The proof of the following lemma is again relatively simple, and is deferred to Appendix~\ref{app:red}.

\begin{lemma} [Red area] \label{lem:red}
    Consider the case that~$(\x_{t_0},\x_{t_0+1}) \in \red{}$ for some round~$t_0$, and let~$t_1 = \min \{ t \geq t_0, (\x_t,\x_{t+1}) \notin \red{} \}$. Then~$t_1 < t_0 + \log^{1/2+2\delta} n$, and~$(\x_{t_1},\x_{t_1+1}) \notin \yellow \cup \red{}$.
\end{lemma}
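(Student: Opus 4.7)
The plan is to exploit the observation that membership in $\red{1}$ (resp.\ $\red{0}$) is itself a geometric constraint forcing a strict multiplicative decrease of $\x_t$ (resp.\ of $1-\x_t$) by a factor of at least $(1-\lambda_n)$, so no probabilistic argument is needed. All of the work lies in combining this deterministic shrinkage with the lower bound $\x_{t+1}\geq 1/\log n$ built into the definition of $\red{}$.

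By the symmetry of the partition with respect to $(1/2,1/2)$, I would assume without loss of generality that $(\x_{t_0},\x_{t_0+1}) \in \red{1}$. The first step is a short induction showing that $(\x_t,\x_{t+1}) \in \red{1}$ for every $t_0 \leq t < t_1$. Indeed, $(\x_t,\x_{t+1}) \in \red{1}$ already forces $\x_{t+1} < (1-\lambda_n)\x_t < (1-\lambda_n)(1/2 - 3\delta) < 1/2 - 3\delta$. Hence the successor pair $(\x_{t+1},\x_{t+2})$ cannot lie in $\red{0}$ (which would require $\x_{t+1} > 1/2 + 3\delta$), so as long as the chain stays in $\red{}$ it must stay in $\red{1}$.

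Next I would iterate $\x_{t+1} < (1-\lambda_n)\x_t$ along the trajectory: for every $0 \leq k \leq t_1 - t_0$,
\[
\x_{t_0+k} \;<\; (1-\lambda_n)^k\, \x_{t_0} \;\leq\; (1-\lambda_n)^k \bigl(\tfrac{1}{2} - 3\delta\bigr).
\]
Since $(\x_{t_1-1},\x_{t_1}) \in \red{1}$, its defining condition gives $\x_{t_1} \geq 1/\log n$. Combining with the displayed bound for $k = t_1 - t_0$ yields $(1-\lambda_n)^{t_1-t_0}(1/2 - 3\delta) > 1/\log n$. Taking logarithms and using $-\log(1-\lambda_n) \geq \lambda_n$ then gives
\[
t_1 - t_0 \;\leq\; \frac{\log \log n + O(1)}{\lambda_n} \;=\; O\!\bigl(\log\log n \cdot \log^{1/2+\delta} n\bigr) \;<\; \log^{1/2+2\delta} n
\]
for $n$ large enough, which is the claimed time bound.

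To conclude that $(\x_{t_1},\x_{t_1+1}) \notin \yellow \cup \red{}$, the non-membership in $\red{}$ is immediate from the definition of $t_1$. For $\yellow$, apply the $\red{1}$ inequality one more time at the last in-$\red{}$ step $(\x_{t_1-1},\x_{t_1})$: since $\x_{t_1-1} < 1/2 - 3\delta$, one gets $\x_{t_1} < (1-\lambda_n)(1/2 - 3\delta) < 1/2 - 3\delta$, contradicting the condition $\x_{t_1} \geq 1/2 - 3\delta$ required for $\yellow$. The only genuinely subtle part of the proof is the bookkeeping of the induction that forbids the chain from flipping between $\red{1}$ and $\red{0}$ without first exiting $\red{}$, and making sure that the lower bound $1/\log n$ used to cap $t_1 - t_0$ is inherited from the last in-$\red{}$ step rather than the exit step itself; modulo this, everything reduces to elementary algebra.
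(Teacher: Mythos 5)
Your proposal is correct and follows essentially the same route as the paper's proof: the deterministic multiplicative decrease $\x_{t+1} < (1-\lambda_n)\x_t$ inside $\red{1}$, iterated and combined with the lower bound $\x_{t} \geq 1/\log n$ inherited from the last in-$\red{}$ step, gives the time bound via $-\log(1-\lambda_n) \geq \lambda_n$, and the bound $\x_{t_1} < 1/2-3\delta$ rules out $\yellow$. If anything, you are slightly more explicit than the paper in checking that the chain cannot flip from $\red{1}$ to $\red{0}$ without first exiting $\red{}$, a point the paper leaves implicit in its induction.
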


Next, we bound the time that can be spent in~$\cyan{1}$. (A similar result holds for $\cyan{0}$.)
Roughly speaking, this area corresponds to the situation in which, over the last two consecutive rounds, the population is in an almost-consensus over the wrong opinion.
In this case, many agents (a constant fraction) see only 0 in their corresponding samples in the latter round. As a consequence, everyone of them who will see at least one opinion 1 in the next round, will adopt opinion 1. We can expect this number to be of order~$\ell=O(\log n)$. This means that, as long as the Markov chain is in~$\cyan{1}$, the value of $\x_t$ would grow by a logarithmic factor in each round. This implies that within $\log(n)/\log(\log n)$ rounds, the Markov chain will leave the $\cyan{1}$ area and go to $\green{1} \cup \purple{1}$. 
Informally, this phenomenon can be viewed as a form of ``bouncing'' --- the population of non-sources reaches an almost consensus on the wrong opinion, and ``bounces back'', by gradually increasing the fraction of agents with the correct opinion, up to an extent that is sufficient to enter $\green{1} \cup \purple{1}$.
The proof of the following lemma 
is given in Section~\ref{sec:lem:cyan}.

\begin{lemma} [Cyan area] \label{lem:cyan}
    Consider the case that~$(\x_{t_0},\x_{t_0+1}) \in \cyan{1}$ for some round~$t_0$, and let~$t_1 = \min \{ t \geq t_0, (\x_t,\x_{t+1}) \notin \cyan{1} \}$. Then with probability at least $1-1/n^{\Omega(1)}$ we have (1)~$t_1 < t_0 + \log(n)/\log(\log n)$, and (2) ~$(\x_{t_1},\x_{t_1+1}) \in \green{1} \cup \purple{1}$.
\end{lemma}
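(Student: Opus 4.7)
The plan is to formalize the ``bouncing'' intuition via a multiplicative-growth argument on $\x_t$. The driver is that whenever both $\x_t$ and $\x_{t+1}$ are small (say at most $2/\log n$), one has $\ell \x_t = O(1)$, so a constant fraction of agents have $\cntt_t = 0$; each such agent switches to opinion $1$ at round $t+2$ as soon as $\cnt_{t+1} \geq 1$, an event of probability of order $\ell \x_{t+1} + \ell/n$. This produces a multiplicative growth factor $\Theta(\ell) = \Theta(\log n)$ on $\x_t$ per round, which is precisely the rate needed to cross the exit threshold $1/\log n$ in $\Theta(\log n/\log\log n)$ rounds starting from the minimum value $1/n$ forced by the source.

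First, I would use Observation~\ref{lem:evolution} to derive a one-round estimate: there exists a constant $c_0>0$ such that, whenever $(\x_t,\x_{t+1})\in\cyan{1}$ with $\max(\x_t,\x_{t+1})\leq 2/\log n$,
\[
\bbE\bigl[\x_{t+2}\mid \x_t,\x_{t+1}\bigr] \;\geq\; c_0\,\ell\,\max\!\bigl(\x_{t+1},\tfrac{1}{n}\bigr),
\]
obtained by lower-bounding $\bbP(B_\ell(\x_{t+1})>B_\ell(\x_t))$ by $\bbP(B_\ell(\x_t)=0,\,B_\ell(\x_{t+1})\geq 1)$, using $(1-\x_t)^\ell = \Omega(1)$ in this regime, together with the $1/n$ source term in Equation~\eqref{eq:next_expectation}. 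Since $\x_{t+2}$ is a normalized sum of $n$ independent indicators (again by Observation~\ref{lem:evolution}) whose expectation is $\Omega(\log n)$, a standard Chernoff bound gives $\x_{t+2} \geq (c_0/2)\,\ell\cdot\max(\x_{t+1},1/n)$ with probability $1 - 1/n^{\Omega(1)}$. Iterating this step and union-bounding over the $O(\log n/\log\log n)$ rounds needed to cross $1/\log n$ yields the claimed time bound in part (1).

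Two boundary configurations of $\cyan{1}$ require separate treatment. If $\x_t<1/\log n\leq \x_{t+1}$, the event $\{B_\ell(\x_t)=0,\,B_\ell(\x_{t+1})\geq 1\}$ now has probability close to $1$, so w.h.p.\ almost every non-source agent adopts opinion $1$ at round $t+2$ and $(\x_{t+1},\x_{t+2})\in\green{1}$ in one step. If instead $\x_{t+1}<1/\log n\leq \x_t$, the comparison reverses: w.h.p.\ almost every agent adopts opinion $0$, $\x_{t+2}=O(1/n)$, and the chain re-enters the ``both small'' regime of $\cyan{1}$ at the cost of one extra round, after which the growth phase above applies. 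For the exit destination needed in part (2), the multiplicative factor $(c_0/2)\ell=\Omega(\log n)$ is overwhelmingly larger than the $\red{1}$ threshold ratio $1-\lambda_n=1-o(1)$, so $\x_{t+1}\geq(1-\lambda_n)\x_t$ at the exit round w.h.p., ruling out $\red{1}$; and since exit occurs as soon as the minimum coordinate crosses $1/\log n$, both coordinates remain well below $1/2-3\delta$, ruling out $\yellow$ and the ``wrong side'' domains $\green{0},\purple{0},\red{0},\cyan{0}$. Only $\green{1}\cup\purple{1}$ remains.

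The main obstacle will be the tightness of the concentration at the smallest scale: when $\x_{t+1}=\Theta(1/n)$, the expected number of $1$'s at round $t+2$ is only $\Theta(\log n)$, giving a Chernoff exponent of $\Theta(\log n)$---just enough for a $1-1/n^{\Omega(1)}$ failure probability per round. Propagating this over $\Theta(\log n/\log\log n)$ iterations while keeping the constants in $c_0$ under control, and cleanly combining the ``reset'' boundary case with the growth phase so that it contributes only $O(1)$ additional rounds, is the delicate part of the argument.
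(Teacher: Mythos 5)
Your growth engine---a constant fraction of agents with $\cntt_t=0$, each of which switches to opinion $1$ with probability $\Theta(\ell\,\x_{t+1})$, giving a multiplicative factor $\Theta(\log n)$ per round, a Chernoff bound at each round and a union bound over $O(\log n/\log\log n)$ rounds---is exactly the paper's mechanism (its three claims for small, intermediate and large $\x_{t+1}$). The gap is in your treatment of the boundary configurations of $\cyan{1}$ where one coordinate is at or above $1/\log n$: both of your claims there are false in part of the regime. In your case (a), $\x_t<1/\log n\leq \x_{t+1}$, you assert that $\{B_\ell(\x_t)=0,\ B_\ell(\x_{t+1})\geq 1\}$ has probability close to $1$; but $\bbP\pa{B_\ell(\x_t)=0}=(1-\x_t)^\ell$ is only $e^{-\Theta(c)}$ when $\x_t=\Theta(1/\log n)$ (recall $\ell=c\log n$ with $c$ a large constant), so this event has small constant probability and it is not true that almost every non-source agent adopts $1$. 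What is true---and what the paper proves, either via Hoeffding using the gap $\x_{t+1}-\x_t\geq\gamma-1/\log n$, or via the constant-probability zero/at-least-one event yielding $\bbE(\x_{t+2})\geq 2\gamma$---is that a constant fraction adopts $1$, which still suffices but needs a different argument than the one you give. In your case (b), $\x_{t+1}<1/\log n\leq\x_t$, the claim that w.h.p.\ almost every agent adopts $0$ and $\x_{t+2}=O(1/n)$ fails when both coordinates sit near the threshold, e.g.\ $\x_t=1/\log n$ and $\x_{t+1}=1/\log n-1/n$: each agent then compares two Binomials with nearly equal means $\approx c$, so a constant fraction of agents adopt $1$ and $\x_{t+2}$ is a constant w.h.p., not $O(1/n)$. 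The ``reset'' trajectory you describe therefore does not occur in general, and the round-accounting built on it is unsound.

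The repair is essentially what the paper does. For the configuration $\x_{t_0}\geq 1/\log n$ it argues purely deterministically: since $\x_{t_0+1}<1/\log n<\delta$ and $\x_{t_0+2}\geq 1/n>0$ (the source), either $\x_{t_0+2}\geq\x_{t_0+1}+\delta$, so $(\x_{t_0+1},\x_{t_0+2})\in\green{1}$ and you are done, or $(\x_{t_0+1},\x_{t_0+2})\in\cyan{1}$ with first coordinate below $1/\log n$---costing one round and requiring no estimate of $\x_{t_0+2}$ at all. Then, under the hypothesis $\x_t<1/\log n$, the growth claims are organized solely by the value of $\x_{t+1}$ (at most $1/\ell$, in $(1/\ell,\gamma]$, above $\gamma$), so no separate handling of $\x_{t+1}\geq 1/\log n$ is ever needed. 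A further small point: exit from $\cyan{1}$ is not characterized by ``the minimum coordinate crossing $1/\log n$''---the typical exit is the speed $|\x_{t+1}-\x_t|$ reaching $\delta$ while both coordinates are still small; the clean route to part (2) is the paper's observation that at every successful round $\x_{t+2}>\x_{t+1}$ and $\x_{t+1}<\delta+1/\log n$, which by inspection of the partition forces $(\x_{t+1},\x_{t+2})\in\cyan{1}\cup\green{1}\cup\purple{1}$, hence the first pair outside $\cyan{1}$ lies in $\green{1}\cup\purple{1}$.
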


Eventually, we consider the central area, namely, $\yellow$, where the speed is very low, and bound the time that can be spent there. The proof of the following lemma is more complex than the previous ones, and it appears in Section~\ref{sec:yellow}.
\begin{lemma} [Yellow area] \label{lem:yellow}
    Consider the case that~$(\x_{t_0},\x_{t_0+1}) \in \yellow$. Then, w.h.p.,
    \begin{equation*}
        \min \{ t > t_0 \text{ s.t. } (\x_t,\x_{t+1}) \notin \yellow \} < t_0 + O(\log^{5/2} n). 
    \end{equation*}
\end{lemma}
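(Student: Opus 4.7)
The plan is to analyze the ``speed'' $D_t := \x_{t+1} - \x_t$ and to argue that as long as $(\x_t, \x_{t+1})$ remains in~$\yellow$, $|D_t|$ undergoes an amplifying random walk that eventually exceeds~$\delta$, thereby forcing exit from~$\yellow$ (which requires $|\x_{t+1}-\x_t|<\delta$). The driver of amplification is a linearization of the FET dynamics around $(1/2,1/2)$: near the center, a small difference between two independent $B_\ell$'s is magnified into a $\Theta(\sqrt{\ell})$ bias on the sign.

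The first step is to derive this linearization. Using the Gaussian approximation $B_\ell(p) \approx \mathcal{N}(\ell p, \ell/4)$ valid for $p \approx 1/2$, and starting from the exact expression in Equation~(3), I would obtain
\begin{equation*}
    \bbE\left[D_{t+1} \cond \x_t, \x_{t+1}\right] = \alpha_n\, D_t + O\!\left(\frac{|\x_{t+1} - 1/2|}{\sqrt{\ell}}\right) + O\!\left(\frac{1}{n}\right),
\end{equation*}
where $\alpha_n = \Theta(\sqrt{\ell}) = \Theta(\sqrt{\log n}) \gg 1$, provided $|D_t|\sqrt{\ell} = o(1)$. The key computation is $\bbP(B_\ell(\x_{t+1}) > B_\ell(\x_t)) - \bbP(B_\ell(\x_{t+1}) < B_\ell(\x_t)) \approx 2\phi(0)\,D_t\sqrt{2\ell}$, which follows from a CLT applied to the difference of two independent binomials. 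By Observation~1, the conditional variance of $D_{t+1}$ is $\Theta(1/n)$.

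I would then split the escape-time analysis at the threshold $\lambda_n = \log^{-1/2-\delta} n$. In the \emph{dominant-drift regime} $|D_t| \geq \lambda_n$, Hoeffding's inequality applied to the independent Bernoullis of Observation~1 gives that with probability $1 - n^{-\Omega(1)}$, $D_{t+1}$ preserves the sign of $D_t$ and $|D_{t+1}| \geq (\alpha_n/2)|D_t|$; iterating, $|D_t|$ reaches $\delta$ within $O(\log n / \log\log n)$ rounds, so a union bound over this polylogarithmic horizon preserves the high-probability guarantee. In the \emph{small-speed regime} $|D_t| < \lambda_n$, sign flips are frequent, and I would instead track the second moment via
\begin{equation*}
    \bbE\left[D_{t+1}^2 \cond \x_t, \x_{t+1}\right] \geq \alpha_n^2\, D_t^2 + \Theta(1/n),
\end{equation*}
which yields both multiplicative and additive drift upward for $D_t^2$. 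A standard multiplicative-drift argument then bounds the hitting time of $\{|D_t| \geq \lambda_n\}$ by a polylogarithmic number of rounds. In parallel, a separate Azuma-type estimate on $\x_{t+1}-1/2$, whose per-round change is $O(1/\sqrt{n})$ w.h.p., would ensure that $\x_t$ remains in $[1/2-3\delta, 1/2+3\delta]$ throughout so that the linearization continues to apply; combining the two regimes yields the target bound $O(\log^{5/2} n)$.

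The main obstacle is the small-speed regime: the drift $\alpha_n D_t$ is swamped by the $\Theta(1/\sqrt{n})$ noise whenever $|D_t| \ll 1/(\alpha_n\sqrt{n})$, so the sign of $D_t$ can flip between consecutive rounds and the naive linear-drift intuition breaks down. Handling this cleanly requires tracking $D_t^2$ rather than $D_t$ and upgrading the second-moment growth into a high-probability statement that $|D_t|$ crosses $\lambda_n$ within the allotted time, while simultaneously maintaining through a union bound that $(\x_t, \x_{t+1})$ does not drift out of the linearization window due to fluctuations in the position coordinate.
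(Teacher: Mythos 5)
Your proposal takes a genuinely different route from the paper's: you attempt a one-dimensional analysis tracking only the speed $D_t = \x_{t+1}-\x_t$, whereas the paper partitions $\yellow$ into three sub-areas $\bfA$, $\bfB$, $\bfC$ and uses different potential functions in each. Unfortunately, your reduction to a 1D problem fails because of a computational error in the linearization, and the error is exactly where the genuine difficulty of the Yellow area lies.

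Starting from Eq.~\eqref{eq:next_expectation} and writing $p = \prob{\ell}{\x_{t+1}}{>}{\ell}{\x_t}$, $q = \prob{\ell}{\x_{t+1}}{=}{\ell}{\x_t}$, $r=1-p-q$, one gets
\begin{equation*}
\bbE\bigl[D_{t+1}\cond \x_t,\x_{t+1}\bigr] \;=\; \frac{p-r}{2} \;-\; (1-q)\Bigl(\x_{t+1}-\tfrac{1}{2}\Bigr) \;+\; O(1/n).
\end{equation*}
The first term is indeed $\Theta(\sqrt{\ell})\,D_t$, as you say, but the correction term has coefficient $1-q = 1-\Theta(1/\sqrt{\ell})$, not $\Theta(1/\sqrt{\ell})$. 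So the true linearization is
\begin{equation*}
\bbE\bigl[D_{t+1}\cond \x_t,\x_{t+1}\bigr] \;\approx\; \alpha_n D_t \;-\; \bigl(\x_{t+1}-\tfrac{1}{2}\bigr), \qquad \alpha_n = \Theta(\sqrt{\ell}),
\end{equation*}
and the restoring force $-(\x_{t+1}-1/2)$ is \emph{leading-order}, not a subdominant correction of size $O(|\x_{t+1}-1/2|/\sqrt{\ell})$. This breaks both of your regimes. In the ``dominant-drift'' regime $|D_t|\geq\lambda_n$, when $\x_{t+1}-1/2 = \Theta(\delta)$ we have $\alpha_n|D_t| \leq \alpha_n\cdot 4\delta/\alpha_n = 4\delta$, and already for $|D_t|\sim\lambda_n$ the drift $\alpha_n\lambda_n = \log^{-\delta}n \to 0$ is swamped by the $\Theta(\delta)$ restoring force, so $D_{t+1}$ need not preserve the sign of $D_t$, let alone satisfy $|D_{t+1}|\geq(\alpha_n/2)|D_t|$. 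In the ``small-speed'' regime, your claim $\bbE[D_{t+1}^2]\geq\alpha_n^2 D_t^2 + \Theta(1/n)$ is false: take $\x_{t+1}-1/2 = \alpha_n D_t$ (same sign), then $\bbE[D_{t+1}]\approx 0$ and $\bbE[D_{t+1}^2]=\Theta(1/n) \ll \alpha_n^2 D_t^2$ whenever $D_t \gg 1/(\alpha_n\sqrt{n})$. Finally, the ``Azuma-type estimate'' you invoke to control the position coordinate does not help: the per-round change of $\x_{t+1}-1/2$ equals $D_{t+1}$, which can be $\Theta(\delta)$ inside $\yellow$, and merely staying in $[1/2-3\delta,1/2+3\delta]$ does not keep the restoring term small relative to $\alpha_n D_t$.

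In effect, the pair $(\x_{t+1}-1/2,\,D_t)$ obeys a genuinely two-dimensional linear system whose two eigenvalues are $\approx \alpha_n-1$ and $\approx 1+\Theta(1/\alpha_n)$. The near-marginal eigenvalue corresponds precisely to what the paper calls area $\bfB$ (where $D_t < |\x_t-1/2|$), and it is this slow mode that dominates the escape time. The paper handles it in Lemma~\ref{lem:B1} by showing that, in $\bfB$, the quantity $|\x_t - 1/2|$ grows by a factor $(1+c_4/\sqrt{\ell})$ per round (or else there is constant probability of leaving $\bfB$), which yields the $O(\log^{3/2}n)$ bound in $\bfB$ and, after $\Theta(\log n)$ constant-probability attempts to escape via $\bfA$, the final $O(\log^{5/2}n)$. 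A one-dimensional analysis of the speed alone cannot see this slow mode, which is the core gap in your proposal.
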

\subsection{Assembling the lemmas}
Given the aforementioned lemmas, we have everything we need to prove our main result.

\begin{proof} [Proof of Theorem~\ref{thm:main}]
    Recall that without loss of generality, we assumed the source to have opinion~$1$.
    The reader is strongly encouraged to refer to Figure~\ref{fig:main_proof} to follow the ensuing arguments more easily.
    \begin{itemize}
        \item Let~$t_1 = \min \{ t \geq 0, (\x_t,\x_{t+1}) \notin \yellow \}$. If~$(\x_0,\x_1) \in \yellow$, we apply Lemma~\ref{lem:yellow} to get that
        \begin{equation} \label{property1}
            t_1 < O(\log^{5/2} n) ~\text{w.h.p. and } (\x_{t_1},\x_{t_1+1}) \in \red{} \cup \cyan{} \cup \purple{} \cup \green{}.
        \end{equation} 
        Else, $(\x_0,\x_1) \notin \yellow$ so~$t_1 = 0$, and~Eq.~\eqref{property1} also holds.
        \item Let~$t_2 = \min \{ t \geq t_1, (\x_t,\x_{t+1}) \notin \red{} \}$.
        If~$(\x_{t_1},\x_{t_1+1}) \in \red{}$, we apply Lemma~\ref{lem:red} to get that
        \begin{equation} \label{property2}
            t_2 < t_1 + \log^{1/2 + 2\delta} n ~\text{w.h.p. and } (\x_{t_2},\x_{t_2+1}) \in \cyan{} \cup \purple{} \cup \green{}.
        \end{equation}
        Else, $(\x_{t_1},\x_{t_1+1}) \notin \red{}$ so~$t_1 = t_2$, and by Eq.~\eqref{property1}, it implies that~Eq.~\eqref{property2} also holds.
        \item Let~$t_3 = \min \{ t \geq t_2, (\x_t,\x_{t+1}) \notin \cyan{} \}$. If~$(\x_{t_2},\x_{t_2+1}) \in \cyan{}$, we apply Lemma~\ref{lem:cyan} to get that
        \begin{equation} \label{property3}
            t_3 < t_2 + \log(n)/\log(\log n) ~\text{ and } (\x_{t_3},\x_{t_3+1}) \in \purple{} \cup \green{} \text{ with probability at least~$1-1/n^{\Omega(1)}$}.
        \end{equation}
        Else, $(\x_{t_2},\x_{t_2+1}) \notin \cyan{}$ so~$t_2 = t_3$, and by~Eq.~\eqref{property2}, it implies that~Eq.~\eqref{property3} also holds.
        \item Let~$t_4 = \min \{ t \geq t_3, (\x_t,\x_{t+1}) \in \green{} \}$. By Lemma~\ref{lem:purple}, and by~Eq.~\eqref{property3}, we have that~$t_4 = t_3$ or~$t_4 = t_3+1$ w.h.p.
    \end{itemize}
      If~$(\x_{t_4},\x_{t_4+1}) \in \green{1}$, then by Lemma~\ref{lem:green} the consensus is reached on round~$t_4+1$.
             Otherwise, if~$(\x_{t_4},\x_{t_4+1}) \in \green{0}$, by Lemma~\ref{lem:green}, we obtain that~$\x_{t_4+2} = 1/n$ w.h.p. (meaning that all agents have opinion~$0$ except the source). Therefore, in this case, either~$(\x_{t_4+1},\x_{t_4+2}) \in \green{0}$ or~$(\x_{t_4+1},\x_{t_4+2}) \in \cyan{1}$ (because for a point~$(\x_t,\x_{t+1})$ to be in any other area, it must be the case that~$\x_{t+1} \geq 1/ \log(n)$, by definition). In the former case, we apply Lemma~\ref{lem:green} again to get that~$\x_{t_4+3} = 1/n$ w.h.p., which implies that~$(x_{t_4+2},\x_{t_4+3}) = (1/n,1/n) \in \cyan{1}$.
             As we did before, we apply Lemma~\ref{lem:cyan}, \ref{lem:purple} and~\ref{lem:green} to show that, with probability at least~$1-1/n^{\Omega(1)}$, 
             the system goes successively to~$\purple{1} \cup \green{1}$, then to~$\green{1}$, and eventually reaches the absorbing state~$(1,1)$ in less than~$\log(n)/\log(\log n)+2$ rounds. 
             
             Altogether, the convergence time is dominated by $t_1$, and is hence $O(\log n)^{5/2}$ with probability at least~$1-1/n^{\epsilon}$, for some $\epsilon>0$. As mentioned, this implies that for any given $c>1$, the algorithm reaches consensus in $O(\log n)^{5/2}$ time with probability at least~$1-1/n^{c}$. This concludes the proof of~Theorem~\ref{thm:main}. 
\end{proof}

\section{Escaping the Yellow Area} \label{sec:yellow}

The goal of this section is to prove Lemma~\ref{lem:yellow}.
It might be easier for the reader to think of the Yellow area as a square. 
Formally, let us define~$\yellow'$ as the following square bounding box around~$\yellow$:
\begin{equation*}
    \yellow' = \bigg\{ (\x_t,\x_{t+1}) \text{ s.t. } 1/2-4\delta \leq \x_t,\x_{t+1} \leq 1/2+4\delta \bigg\}. 
\end{equation*}
Obviously, $\yellow \subset \yellow'$, so in order to prove Lemma~\ref{lem:yellow} it suffices to prove 
Lemma~\ref{lem:yellow2} below.
\begin{lemma} \label{lem:yellow2}
    Consider  that~$(\x_{t_0},\x_{t_0+1}) \in \yellow'$. Then, w.h.p.,
       $ \min \{ t > t_0 \text{ s.t. } (\x_t,\x_{t+1}) \notin \yellow' \} < t_0 + O(\log^{5/2} n).$ 
\end{lemma}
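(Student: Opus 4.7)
The plan is to analyse the 2D Markov chain $(\x_t,\x_{t+1})$ inside the bounding square $\yellow'$, in which both coordinates lie within $4\delta$ of $1/2$. Writing $y_t := \x_t-1/2$ and $v_t := y_{t+1}-y_t$ (the ``velocity''), I would first refine Observation~\ref{lem:evolution} via a local CLT estimate for the binomial near $1/2$: for $|a|,|b|=o(1)$,
\[
    \bbP\pa{B_\ell(\tfrac12+a) > B_\ell(\tfrac12+b)} \;=\; \tfrac{1-q_\ell}{2} + \sqrt{\ell/\pi}\,(a-b) + O\pa{(\sqrt{\ell}(a-b))^3},
\]
with $q_\ell=\bbP(B_\ell(1/2)=B_\ell(1/2))=\Theta(1/\sqrt{\ell})$. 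Substituting into Eq.~\eqref{eq:next_expectation} and linearising around $(1/2,1/2)$ yields the mean recursion
\[
    \bbE[\x_{t+2}\mid\x_t,\x_{t+1}] - \tfrac12 \;\approx\; q_\ell\, y_{t+1} + \lambda\, v_t,\qquad \lambda := \sqrt{\ell/\pi}=\Theta(\sqrt{\log n}).
\]
By Hoeffding applied to the $n$ independent Bernoullis of Observation~\ref{lem:evolution}, $|\x_{t+2}-\bbE[\x_{t+2}\mid\cdot]|=O(\sqrt{\log n/n})$ per round with high probability; a union bound over the $O(\log^{5/2}n)$ rounds of interest keeps this valid throughout.

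Second, I would identify an \emph{escape criterion}: once $|v_t|$ reaches a threshold $v^\ast = \Theta(1/\sqrt{\log n})$ (with a constant chosen so that $\lambda v^\ast > 8\delta$), the term $\lambda v_t$ makes $|\bbE[\x_{t+2}\mid\cdot]-1/2|>4\delta$, so by Hoeffding one further round puts $\x_{t+2}$ outside $[1/2-4\delta,1/2+4\delta]$ and hence $(\x_{t+1},\x_{t+2})\notin\yellow'$ w.h.p. Thus it is enough to bound the time for $|v_t|$ to reach $v^\ast$, starting from an arbitrary state in $\yellow'$.

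Third, I would project the linearised dynamics onto the eigenbasis of the $2\times2$ drift matrix acting on $(y_{t+1},v_t)$, whose eigenvalues are $\mu_1\approx\lambda\gg 1$ (unstable) and $\mu_2\approx 1$ (neutral). The unstable coordinate $u_t$ obeys a noisy recursion $u_{t+1}\approx\mu_1 u_t+\eta_t$ where $\eta_t$ has scale $\sigma=\Theta(1/\sqrt n)$, so its typical magnitude after $k$ rounds is $\lambda^{k-1}/\sqrt n$; setting this equal to $v^\ast$ gives an \emph{expected} escape time of $O(\log n/\log\log n)$. The extra slack up to $O(\log^{5/2}n)$ absorbs two rare failure modes: (i) noise cancellations that transiently zero out $u_t$ and delay amplification, and (ii) trajectories drifting near the wall $|y_{t+1}|\sim 4\delta$ with small $v_t$, where the damping term $-(1-q_\ell)y_{t+1}$ fights the build-up of velocity. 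I would handle both via a submartingale / Lyapunov argument, using a potential of the form $\Phi_t = A\,u_t^2 + B\,y_{t+1}^2$ whose expected per-round increment is bounded below by $\Omega(1/n)$ (coming from the noise variance), and invoking a standard stopping-time estimate to conclude $\Phi_t=\Omega(1)$ within $O(\log^{5/2}n)$ rounds w.h.p.; this forces either $|v_t|\geq v^\ast$ (triggering escape in one further round by the second step) or $|y_{t+1}|\geq 4\delta$ (immediate escape).

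The main obstacle will be designing this potential so that it simultaneously tracks the unstable amplification, the damping, and the boundary: the linearisation above is only accurate while $y_{t+1}$ stays near $0$, whereas the conclusion we want is that it moves all the way to $\pm 4\delta$. Carefully managing the transition between the ``linear'' regime ($y_t, v_t$ both small) and the ``saturated'' regime ($|v_t|\gtrsim v^\ast$, producing an immediate jump out of $\yellow'$), while ensuring that the Hoeffding concentration and the local-CLT approximation remain valid throughout, is where the bulk of the technical work lies.
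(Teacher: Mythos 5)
Your qualitative picture of the dynamics in $\yellow'$ is essentially the right one, and it parallels the paper's: a strongly unstable ``speed'' direction that amplifies once $|\x_{t+1}-\x_t|$ exceeds the noise scale $1/\sqrt{n}$ (the paper's area $\bfA$, Lemmas~\ref{lem:A2}--\ref{lem:A3}), a slow regime near the diagonal where the distance from $1/2$ grows only by a factor $1+\Omega(1/\sqrt{\ell})$ per round (the paper's area $\bfB$, Lemmas~\ref{lem:B1}--\ref{lem:B2}, which is where the $\log^{3/2}n$ timescale comes from), and a benign region pushed back towards the unstable one (area $\bfC$, Lemma~\ref{lem:C}). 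However, the step you rely on to close the argument has a genuine gap. A potential $\Phi_t=A\,u_t^2+B\,y_{t+1}^2$ whose expected per-round increment is only $\Omega(1/n)$ cannot give $\Phi_t=\Omega(1)$ (or even $\Phi_t=\Omega(1/\log n)$, your threshold $(v^\ast)^2$) within $O(\log^{5/2}n)$ rounds: additive drift of order $1/n$ yields stopping times of order $n/\mathrm{polylog}(n)$, not polylogarithmic. The polylogarithmic escape must come from the \emph{multiplicative} amplification, and turning that into a statement that holds \emph{with high probability} requires more than a single drift/optional-stopping estimate, because a single amplification attempt can be killed by noise with probability bounded away from $0$ while the speed is still of order $1/\sqrt{n}$. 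The paper handles this by showing that each visit to $\bfA\cup\bfC$ escapes $\yellow'$ within $O(\log n)$ rounds only with \emph{constant} probability (Lemmas~\ref{lem:A3} and~\ref{lem:C}), and then manufacturing $\Omega(\log n)$ such nearly-independent attempts inside a window of length $O(\log^{5/2}n)$, using the fact that sojourns in $\bfB$ last at most $O(\log^{3/2}n)$ rounds w.h.p.\ (Lemma~\ref{lem:B2}); the failure probability is then $(1-c_3c_6)^{\Omega(\log n)}=n^{-\Omega(1)}$. Your proposal contains no analogous probability-amplification mechanism, and ``a standard stopping-time estimate'' does not supply one.

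In addition, the regime you explicitly defer as ``the main obstacle'' --- the process sitting near the diagonal with speed between the noise floor and $1/\sqrt{\ell}$, possibly close to the boundary of $\yellow'$ --- is precisely the technical core of the paper's proof. There the linearization with cubic error $O\pa{(\sqrt{\ell}(a-b))^3}$ is not valid uniformly over $\yellow'$ (the speed can be a constant, so $\sqrt{\ell}(a-b)=\Theta(\sqrt{\log n})$), and what is actually needed is a two-sided control of the coin-competition probability: the upper bound of Lemma~\ref{lem:handcrafted_converse} combined with the fixed-point analysis of $g(x,\cdot)$ (Claims~\ref{claim:last_minute_obstacle}--\ref{claim:for_clarity}) to show that in $\bfB$ either $|\x_{t+1}-1/2|$ has already grown by $1+c_4/\sqrt{\ell}$ or the next step leaves $\bfB$ with constant probability. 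Without a worked-out substitute for this dichotomy, and without the repetition structure described above, the proposal establishes the right heuristics and timescales but not the lemma.
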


\subsection{General structure of the proof}
In order to prove Lemma \ref{lem:yellow2}, we first partition~$\yellow'$, as follows (for an illustration, see  Figure~\ref{fig:partition_yellow}):
\begin{align*}
    \bfA_1 &= \{ (\x_t,\x_{t+1}) \mid \text{ (i) } \x_{t+1} \geq 1/2 \text{ and (ii) } \x_{t+1}-\x_t  \geq \x_{t}-1/2 \} \medcap \yellow', \\
    \bfB_1 &= \{ (\x_t,\x_{t+1}) \mid \text{ (i) } \x_{t+1} \geq \x_t \text{ and (ii) } \x_{t+1}-\x_t <\x_{t}-1/2 \} \medcap \yellow', \\
    \bfC_1 &= \{ (\x_t,\x_{t+1}) \mid \text{ (i) } \x_{t+1}<1/2 \text{ and (ii) } \x_{t+1} \geq \x_{t} \} \medcap \yellow'.
\end{align*}
Similarly, we define~$\bfA_0,\bfB_0,\bfC_0$ their symmetric equivalents (w.r.t the point ($\frac{1}{2},\frac{1}{2}$)), and~$\bfA = \bfA_0 \cup \bfA_1$, $\bfB = \bfB_0 \cup \bfB_1$, and~$\bfC = \bfC_0 \cup \bfC_1$.
\begin{figure}[htbp]
    \centering
    \includegraphics[width=0.35\linewidth]{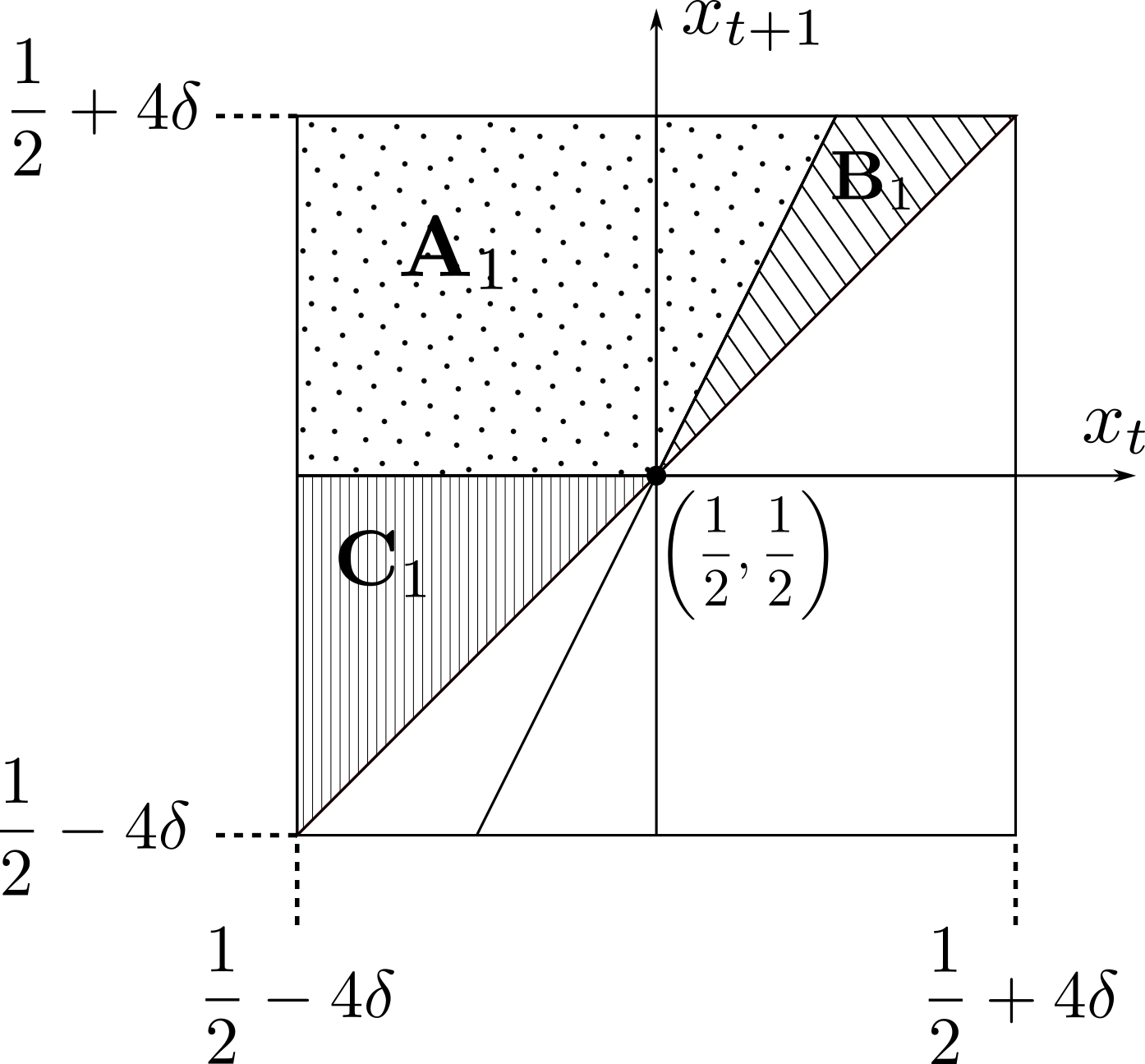}
    \caption{Partitioning the ~$\yellow'$ domain.}
    \label{fig:partition_yellow}
\end{figure}

\begin{remark}
    At various times throughout our analysis, we would like to calculate different statistical properties of the system at round $t+2$, conditioning on $(\xx{t},\xx{t+1})\in \calG$, as was done in e.g., Observation~\ref{lem:evolution}. 
    For the sake of clarity of presentation, in all subsequent cases, we shall 
    omit the conditioning notation.
    The reader should therefore keep in mind, that whenever such properties are calculated, they are actually done while conditioning on $\x_t = \xx{t}$ and $\x_{t+1} = \xx{t+1}$, where the point $(\xx{t},\xx{t+1})$ would always be clear from the context. For example, in Lemma~\ref{lem:A2} below, the probability $\bbP\pa{ (\x_{t+1},\x_{t+2}) \notin \yellow'}$ is calculated conditioning on $\x_t = \xx{t},~ \x_{t+1} = \xx{t+1}$, where 
    $(\xx{t},\xx{t+1})\in \bfA$, yet this conditioning is omitted.
\end{remark}

In the next lemma, we study the distribution of the future location of any point~$(\x_t,\x_{t+1})\in\bfA$. This area happens to be ideal to escape~$\yellow'$, because it allows the Markov chain to quickly build up ``speed''.
Item (a) in the next lemma says that, with some probability that depends on the current speed the following occur: (1) the speed in the following round increases by a factor of two, and (2) the process in the next round either remains in $\bfA$, or goes outside of ~$\yellow'$. Note that when the current speed is not too low, that is, when $\x_{t+1}-\x_t > 1/\sqrt{n}$, this combined event happens with constant probability. Item (b) says that with constant probability, (1) the speed in the next round would not be too low, and (2) the process either remains in $\bfA$, or goes outside of ~$\yellow'$.

\begin{lemma}\label{lem:A2}
    If~$(\x_t,\x_{t+1}) \in \bfA$, and provided that~$\delta$ is small enough and~$n$ is large enough,
    \begin{itemize}
        \item[(a)] We have
           $ \bbP\pa{ (\x_{t+1},\x_{t+2}) \notin \yellow' \setminus \bfA  \medcap |\x_{t+2} - \x_{t+1}| > 2 |\x_{t+1} - \x_t| } > 1-\exp \pa{- 3 n \cdot (\x_{t+1} - \x_t)^2 }$.
        \item[(b)] There exists a constant~$c_2 = c_2(c)>0$ s.t.
            $\bbP\pa{ (\x_{t+1},\x_{t+2}) \notin \yellow' \setminus \bfA \medcap |\x_{t+2}-\x_{t+1}| > 1/ \sqrt{n} } > c_2$.
    \end{itemize}
\end{lemma}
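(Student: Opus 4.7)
Without loss of generality assume $(\x_t,\x_{t+1})\in \bfA_1$ (by the symmetry of the dynamics about $1/2$). Write $p=\x_t$, $q=\x_{t+1}$, $s=q-p$, $q'=q-\tfrac12$. The two defining inequalities of $\bfA_1$ read $q'\geq 0$ and $s\geq p-\tfrac12=q'-s$, i.e.\ $0\leq q'\leq 2s$. Setting $D:=B_\ell(q)-B_\ell(p)$, Observation~\ref{lem:evolution} gives
\begin{equation*}
\bbE[\x_{t+2}]-q \;=\; \bigl(\bbP(D>0)-\tfrac12\bigr) - q' + q\,\bbP(D=0) + O(1/n).
\end{equation*}

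The plan is to show this mean drift is much larger than $2s$, and then apply Hoeffding for part~(a) and Berry--Esseen anti-concentration for part~(b). Since $p,q\in[\tfrac12-4\delta,\tfrac12+4\delta]$, $D$ has mean $\ell s$ and variance $\Theta(\ell)$, and a continuity-corrected local-limit / de Moivre--Laplace estimate gives, for some absolute $c_1>0$,
\begin{equation*}
\bbP(D>0) - \tfrac12 \;\geq\; c_1 \min\!\bigl(s\sqrt{\ell},1\bigr) - \tfrac{C}{\sqrt{\ell}},\qquad \bbP(D=0)=\tfrac{1}{\sqrt{\pi\ell}}+O(1/\ell),
\end{equation*}
where the $-C/\sqrt{\ell}$ correction is essentially absorbed by $q\,\bbP(D=0)\approx\tfrac{1}{2\sqrt{\pi\ell}}$ (since $q$ is near $\tfrac12$). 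Plugging in, using $q'\leq 2s$, and taking $\ell=c\log n$ with $c$ large (and $\delta$ small), I obtain
\begin{equation*}
\bbE[\x_{t+2}]-q \;\geq\; Ks \quad\text{with $K$ as large as desired},\qquad \bbE[\x_{t+2}] \;\geq\; q+q'.
\end{equation*}

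For part~(a): Observation~\ref{lem:evolution} writes $\x_{t+2}=\tfrac1n\sum X_i$ for independent $\{0,1\}$ variables, and Hoeffding gives
\begin{equation*}
\bbP\bigl(\x_{t+2}<q+2s\bigr) \;\leq\; \bbP\bigl(\x_{t+2}<\bbE[\x_{t+2}]-(K-2)s\bigr) \;\leq\; \exp\bigl(-2ns^2(K-2)^2\bigr),
\end{equation*}
which is at most $\exp(-3ns^2)$ once $(K-2)^2\geq 3/2$. On the complementary event, $\x_{t+2}-q>2s\geq q'=\x_{t+1}-\tfrac12$ and $\x_{t+2}>\tfrac12$, so $(\x_{t+1},\x_{t+2})\in\bfA_1\cup(\calG\setminus\yellow')$; and $|\x_{t+2}-\x_{t+1}|=\x_{t+2}-q>2s=2|\x_{t+1}-\x_t|$.

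For part~(b): each Bernoulli $X_i$ has success probability in a compact sub-interval of $(0,1)$ on $\yellow'$, so the standard deviation $\sigma_n$ of $\x_{t+2}$ satisfies $\sigma_n\geq c_4/\sqrt n$ for some $c_4>0$. Berry--Esseen for sums of independent (non-identically distributed) bounded summands then yields a constant $c_2>0$ with $\bbP(\x_{t+2}\geq \bbE[\x_{t+2}]+1/\sqrt n)\geq c_2$. Combined with $\bbE[\x_{t+2}]\geq q+q'\geq q$, this event forces $\x_{t+2}-q\geq q'+1/\sqrt n\geq 1/\sqrt n$ (hence $|\x_{t+2}-\x_{t+1}|>1/\sqrt n$) and $\x_{t+2}-q\geq q'$, placing $(\x_{t+1},\x_{t+2})$ in $\bfA_1\cup(\calG\setminus\yellow')$. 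The main technical obstacle is the quantitative local-limit estimate on $\bbP(D>0)-\tfrac12$ uniformly across $s\sqrt{\ell}\ll 1$, $\asymp 1$, and $\gg 1$, together with the cancellation between its continuity-correction error and the $q\bbP(D=0)$ term when $q\approx \tfrac12$; all other steps are standard concentration or anti-concentration of sums of independent Bernoullis.
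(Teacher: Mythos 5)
Your architecture matches the paper's: reduce to $\bfA_1$, establish a mean-drift bound $\bbE(\x_{t+2}) \geq \x_{t+1} + K(\x_{t+1}-\x_t)$ for a moderate constant $K$, get (a) from concentration of the independent indicators supplied by Observation~\ref{lem:evolution} together with the geometric fact that a gap larger than $2(\x_{t+1}-\x_t)$ keeps the next pair in $\bfA_1$ or expels it from $\yellow'$, and get (b) from anti-concentration at scale $1/\sqrt{n}$ above the mean. The gap is in the one step that is the actual heart of the lemma: the drift bound in the regime $0 < s \ll 1/\sqrt{\ell}$, where $s = \x_{t+1}-\x_t$ can be as small as $1/n$. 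There you need $\bbP(D>0)-\tfrac12 + q\,\bbP(D=0) \geq (K+2)s$, and your justification is a normal-approximation bound $\bbP(D>0)-\tfrac12 \geq c_1\min\pa{s\sqrt{\ell},1} - C/\sqrt{\ell}$ together with the claim that the $-C/\sqrt{\ell}$ error is ``essentially absorbed'' by $q\,\bbP(D=0)\approx \tfrac{1}{2\sqrt{\pi\ell}}$. With off-the-shelf constants this absorption is false: writing $D=\sum_i Y_i$ with $Y_i\in\{-1,0,1\}$, one has $\sigma^2=p(1-p)+q(1-q)\approx 1/2$ and $\rho=\bbE|Y_1-\mu|^3\approx 1/2$, so Theorem~\ref{thm:berry_esseen} only yields an error about $0.4748\,\rho/(\sigma^3\sqrt{\ell})\approx 0.67/\sqrt{\ell}$, strictly larger than $q\,\bbP(D=0)\approx 0.28/\sqrt{\ell}$; hence for $s\ll 1/\sqrt{\ell}$ your lower bound on the drift is negative while you need it to exceed $(K+2)s>0$. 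Repairing this requires a continuity-corrected local/Edgeworth estimate for the lattice variable $D$ with error $o(1/\sqrt{\ell})$, uniformly in $p,q$ (which vary with $n$) --- which is essentially re-proving the paper's Lemma~\ref{lem:handcrafted}. That lemma delivers exactly the needed inequality, $\prob{\ell}{\x_{t+1}}{>}{\ell}{\x_t} > \tfrac12 + \lambda s - \tfrac12\prob{\ell}{\x_{t+1}}{=}{\ell}{\x_t}$ for an arbitrarily large prescribed $\lambda$, by an elementary exact decomposition over the values of $|B_\ell(\x_{t+1})-B_\ell(\x_t)|$, with no normal-approximation error at all; the paper's proof of Lemma~\ref{lem:A2} then needs only three lines to reach $\bbE(\x_{t+2})>\x_{t+1}+4s$. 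You correctly identify this estimate as the main obstacle, but identifying it is not resolving it, and the specific cancellation you propose does not hold as stated.

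A secondary issue concerns (b): the claim that each Bernoulli success probability lies in a compact sub-interval of $(0,1)$ throughout $\yellow'$ is not true uniformly. When $s\sqrt{\ell}\to\infty$ (e.g., $s$ of order $\delta$ with $\ell=c\log n$), the probabilities $\bbP(D>0)$ and $\bbP(D\geq 0)$ approach $1$ polynomially fast in $n$, the variance of $\x_{t+2}$ falls below $1/n$, and anti-concentration by $+1/\sqrt{n}$ above the mean can fail. The paper's Lemma~\ref{lem:anti_concentration} carries the hypothesis $\bbE(\x_{t+2})\in[1/3,2/3]$ for precisely this reason; your argument needs either that restriction or a separate (easy) treatment of the large-$s$ regime, where $\x_{t+2}$ is close to $1$ w.h.p.\ and the event in (b) holds for trivial reasons.
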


Now, we can iteratively use the previous result to prove that any state in~$\bfA$ has a reasonable chance to escape~$\yellow'$. The proofs of both Lemmas~\ref{lem:A2} and~\ref{lem:A3} are deferred to Appendix~\ref{app:yellow_A}.
\begin{lemma} \label{lem:A3}
    There is a constant~$c_3 = c_3(c)$ s.t.~if~$(\x_{t_0},\x_{t_0+1}) \in \bfA$, then 
        $\bbP \pa{\exists t_1 < t_0 + \log n, (\x_{t_1},\x_{t_1+1}) \notin \yellow'} > c_3$.
\end{lemma}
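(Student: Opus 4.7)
My plan is to iterate Lemma~\ref{lem:A2} so as to drive the speed $|\x_{t+1} - \x_t|$ from an arbitrary starting value to something exceeding the side length $8\delta$ of the square $\yellow'$, at which point remaining inside $\yellow'$ becomes geometrically impossible. Fix $K = \ceil{\log_2(9\delta\sqrt{n})} = O(\log n)$, and for $k = 0, 1, \ldots, K$ let $E_k$ be the event that either $(\x_s, \x_{s+1}) \notin \yellow'$ for some $s \leq t_0+2+k$, or $(\x_{t_0+1+k}, \x_{t_0+2+k}) \in \bfA$ with $|\x_{t_0+2+k} - \x_{t_0+1+k}| \geq 2^k/\sqrt{n}$. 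In either disjunct we are in good shape: we are either already done, or we have a state in $\bfA$ with a quantitative lower bound on speed that feeds directly into the hypothesis of Lemma~\ref{lem:A2}(a).

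Applying Lemma~\ref{lem:A2}(b) at the initial state $(\x_{t_0}, \x_{t_0+1}) \in \bfA$ gives $\bbP(E_0) \geq c_2$. For $k \geq 0$, conditional on $E_k$ and on the process not having already exited $\yellow'$ (in which case $E_{k+1}$ is trivial), Lemma~\ref{lem:A2}(a) applied to $(\x_{t_0+1+k}, \x_{t_0+2+k})$ yields $\bbP(E_{k+1} \mid E_k) \geq 1 - \exp(-3n \cdot (2^k/\sqrt{n})^2) = 1 - \exp(-3 \cdot 4^k)$. Chaining the bounds gives
\[
\bbP(E_K) \geq c_2 \prod_{k=0}^{K-1}\pa{1 - \exp(-3 \cdot 4^k)} \geq c_2 \pa{1 - \sum_{k=0}^{\infty} \exp(-3 \cdot 4^k)} =: c_3 > 0,
\]
where the super-exponentially decaying tail sums to well under $1$, making $c_3$ a positive constant independent of $n$.

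It remains to observe that on $E_K$, the ``still in $\bfA$'' disjunct is self-contradictory: if $(\x_{t_0+1+K}, \x_{t_0+2+K}) \in \bfA \subseteq \yellow'$, then both coordinates lie in $[1/2-4\delta, 1/2+4\delta]$, forcing $|\x_{t_0+2+K}-\x_{t_0+1+K}| \leq 8\delta$, which contradicts the lower bound $|\x_{t_0+2+K}-\x_{t_0+1+K}| \geq 2^K/\sqrt{n} > 8\delta$ guaranteed by our choice of $K$. Hence on $E_K$ the process has exited $\yellow'$ by some round $t_1 \leq t_0 + 2 + K < t_0 + \log n$, and this occurs with probability at least $c_3$, as required.

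The main delicacy I anticipate is the event bookkeeping: each $E_k$ must cleanly absorb the ``already exited'' case as trivial success, so that the inductive bound $\bbP(E_{k+1} \mid E_k)$ holds uniformly over the realizations of $(\x_{t_0+1+k},\x_{t_0+2+k})$ consistent with $E_k$, and so that the conditioning in Lemma~\ref{lem:A2}(a) remains well-defined. Once the event structure is set up properly, the rest of the proof is a super-exponentially convergent union bound plus the geometric contradiction at $E_K$.
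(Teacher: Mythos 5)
Your proposal is correct and follows essentially the same route as the paper's proof: one application of Lemma~\ref{lem:A2}(b) to seed a speed of order $1/\sqrt{n}$, then repeated applications of Lemma~\ref{lem:A2}(a) so the speed doubles each round while staying in $\bfA$ or exiting $\yellow'$, a chained bound whose error terms $\exp(-3\cdot 4^k)$ sum to a constant less than $1$, and the final observation that after $O(\log n)$ doublings the speed would exceed the $8\delta$ extent of $\yellow'$, forcing an exit. The only cosmetic difference is that you track the absolute speed $2^k/\sqrt{n}$ while the paper phrases the events as round-to-round doubling, which is the same induction.
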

We are left with proving that the system cannot be stuck in~$\bfB$ or~$\bfC$ for too long. We start with~$\bfB$.
The analysis of this area is relatively complex, because it is difficult to rule out the possibility that the Markov chain remains there at a low speed. 
We prove that any state in~$\bfB$ must either make a step towards escaping~$\yellow'$, or have a good chance of leaving~$\bfB$. The proof of the following lemma is given in Section~\ref{sec:lem:B1}.
\begin{lemma} \label{lem:B1}
    There are constants~$c_4,c_5>0$ such that
    if~$(\x_{t},\x_{t+1}) \in \bfB$, then either
    \begin{itemize}
        \item[(a)] $|\x_{t+1}-1/2| > \pa{1+c_4/\sqrt{\ell}} |\x_{t}-1/2|$,
        or
        \item[(b)] $\bbP \pa{(\x_{t+1},\x_{t+2}) \notin \bfB} > c_5$.
    \end{itemize}
\end{lemma}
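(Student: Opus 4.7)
By the symmetry of the protocol under the involution $y \mapsto 1-y$, it suffices to prove the claim when $(\xx{t},\xx{t+1}) \in \bfB_1$. The plan is to set $p := \xx{t} - 1/2$, $q := \xx{t+1} - 1/2$, and $\Delta := q - p$; the definition of $\bfB_1$ forces $0 \leq \Delta < p \leq q \leq 4\delta$ and in particular $p > 0$ strictly. If (a) holds there is nothing to prove, so I assume its negation, which rearranges to $\Delta \leq (c_4/\sqrt{\ell})\, p$. Since $\xx{t+1} > 1/2$, the event $\{\x_{t+2} < \xx{t+1}\}$ precludes $(\xx{t+1},\x_{t+2}) \in \bfB_1$ (which requires $\x_{t+2} \geq \xx{t+1}$) and also $(\xx{t+1},\x_{t+2}) \in \bfB_0$ (whose speed condition requires $\xx{t+1} < 1/2$). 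So the target is to exhibit a uniform $c_5 > 0$ with $\bbP(\x_{t+2} < \xx{t+1}) > c_5$.

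The core step is a drift estimate. Writing $X \sim B_\ell(\xx{t})$ and $Y \sim B_\ell(\xx{t+1})$ independently, set $A = \bbP(Y>X)$, $B = \bbP(Y<X)$, $C = \bbP(Y=X)$. Substituting $\xx{t+1} = 1/2 + q$ into Eq.~\eqref{eq:next_expectation} and rearranging yields
\begin{equation*}
    \bbE(\x_{t+2}) - \xx{t+1} = \tfrac{1}{2}(A - B) \;-\; q(1 - C) \;+\; B/n.
\end{equation*}
Since $\xx{t},\xx{t+1} \in [1/2, 1/2 + 4\delta]$, the difference $Y - X$ has mean $\ell \Delta$ and variance $\Theta(\ell)$; a local CLT (equivalently, a Stirling expansion of the binomial mass near its mode) will give $|A - B| \leq K_1 \sqrt{\ell}\,\Delta$ and $C \leq K_2/\sqrt{\ell}$ for absolute constants $K_1,K_2$. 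Plugging in $\Delta \leq (c_4/\sqrt{\ell})\, q$ (using $p \leq q$) and choosing $c_4$ small and $\ell$ large, the first term contributes at most $q/4$ while the second is at least $q/2$, yielding $\bbE(\x_{t+2}) - \xx{t+1} \leq -q/4 + O(1/n)$.

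For anti-concentration I invoke Observation~\ref{lem:evolution}, which writes $\x_{t+2} = n^{-1} \sum_i X_i$ with independent Bernoullis whose parameters (given by Eq.~\eqref{eq:indiv_next_opinion}) stay uniformly bounded away from $0$ and $1$ throughout $\yellow'$, so $\var(\x_{t+2}) = \Theta(1/n)$. I split on the size of $q$: if $q \geq K/\sqrt{n}$ for a large constant $K$, Chebyshev's inequality applied at deviation $q/8$ gives $\bbP(\x_{t+2} \geq \xx{t+1}) \leq O(1/K^2)$; if $q < K/\sqrt{n}$, then $\bbE(\x_{t+2}) \leq \xx{t+1} + O(1/n)$ and $|\xx{t+1} - \bbE(\x_{t+2})|/\sqrt{\var(\x_{t+2})} = O(1)$, so Berry--Esseen for the Bernoulli sum yields $\bbP(\x_{t+2} < \xx{t+1}) \geq 1/2 - o(1)$. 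Combining the two regimes produces a uniform $c_5 > 0$.

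The hard part will be the uniform local-CLT estimate $|\bbP(Y > X) - \bbP(Y < X)| \leq K_1 \sqrt{\ell}\,\Delta$: because $Y - X$ is lattice-valued, continuity-correction artefacts and the behaviour when $\ell \Delta$ is small compared to $\sqrt{\ell}$ must be controlled uniformly over the allowed range of $\xx{t}$ and $\Delta$. A direct Stirling expansion of the binomial mass function near the mode should suffice, but the bookkeeping is delicate; everything else in the proof is routine once this estimate is in hand.
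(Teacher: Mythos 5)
Your proposal is correct, and it reaches the lemma by a genuinely different route from the paper's proof. The paper fixes $x=\x_t$ and analyzes the map $y\mapsto g(x,y)$ of Eq.~\eqref{eq:def_g}: it shows that $y\mapsto g(x,y)-y$ is strictly increasing on $[x,x+1/\sqrt{\ell}]$ (Claim~\ref{claim:last_minute_obstacle}, which rests on the derivative estimate of Claim~\ref{claim:derivative}), extracts an essentially unique fixed point $f(x)$ satisfying $f(x)-1/2>(1+c_4/\sqrt{\ell})(x-1/2)$ (Claims~\ref{claim:def_f} and~\ref{claim:for_clarity}), and concludes that if (a) fails then $\x_{t+1}\le f(\x_t)$ and hence $\bbE(\x_{t+2})=g(\x_t,\x_{t+1})\le\x_{t+1}$, with a separate boundary case $\x_t\in[1/2,1/2+4/n]$ handled via Lemma~\ref{lem:noise}. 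You obtain the same drift inequality directly: your identity $\bbE(\x_{t+2})-\x_{t+1}=\frac{1}{2}(A-B)-q(1-C)+B/n$ (in your notation) is a correct rearrangement of Eq.~\eqref{eq:next_expectation}, and under the negation of (a), $\Delta\le(c_4/\sqrt{\ell})q$ makes the first term at most $q/4$ while the second is at least $q/2$ for large $\ell$, so you bypass $f$, the monotonicity claim, and the $4/n$ boundary case (your small-$q$ regime absorbs it) --- a real simplification. Two repairs are needed, both minor. First, the estimate you defer as the ``hard part'', $A-B\le K_1\sqrt{\ell}\,\Delta$ together with $C=O(1/\sqrt{\ell})$, is precisely Lemma~\ref{lem:handcrafted_converse} (equivalently Eq.~\eqref{eq:conditioning_ub_final}), stated in Section~\ref{sec:lem:B1} and proved in the appendix by an exact identity plus a coupling/telescoping argument; its hypotheses are met here since $\Delta\le 4\delta c_4/\sqrt{\ell}\le 1/\sqrt{\ell}$ and, for $\delta$ small, $\x_t,\x_{t+1}\in[1/3,2/3]$, so you should cite it rather than redo a lattice local CLT. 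Second, your claim that the Bernoulli parameters of Observation~\ref{lem:evolution} are bounded away from $0$ and $1$ throughout $\yellow'$ is false as stated (near the boundary of $\yellow'$ the speed can be of order $\delta$, making them exponentially close to $1$); it does hold under your standing hypotheses, since $A-B\le 4K_1c_4\delta$ and $C\le K_2/\sqrt{\ell}$ force $B$, and hence both parameters, to be bounded away from $0$ and $1$, but the cleaner fix is to invoke the paper's Lemma~\ref{lem:anti_concentration}, which gives $\bbP\pa{\x_{t+2}\le\bbE(\x_{t+2})-1/\sqrt{n}}\ge\beta$ and replaces your Chebyshev/Berry--Esseen split in one step (also sidestepping the fact that Theorem~\ref{thm:berry_esseen} is stated for i.i.d.\ summands, whereas your sum is inhomogeneous).
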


Now, we can iteratively use the previous result to prove that any state in~$\bfB$ either leaves~$\bfB$ or escapes~\yellow' in a reasonable amount of time. The proof of Lemma~\ref{lem:B2} is deferred to Section~\ref{sec:yellow_B}.
\begin{lemma} \label{lem:B2}
    If~$(\x_{t_0},\x_{t_0+1}) \in \bfB$, then, w.h.p.,
        $\min \{ t \geq t_0, (\x_{t},\x_{t_1}) \notin \bfB \} < t_0 + \frac{\sqrt{c}}{c_4} \cdot \log^{3/2} n$.
\end{lemma}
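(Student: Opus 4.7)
The plan is to combine the deterministic growth from case~(a) of Lemma~\ref{lem:B1} with the probabilistic exit from case~(b). Set $Z_t := |\x_t - 1/2|$ and $\tau := \min\{t \geq t_0 : (\x_t, \x_{t+1}) \notin \bfB\}$. The crucial preliminary observation is that, while the chain remains in $\bfB$, the sequence $Z_t$ is non-decreasing. Indeed, in $\bfB_1$ one has $\x_t > 1/2$ and $\x_{t+1} \geq \x_t$, hence $Z_{t+1} \geq Z_t$; $\bfB_0$ is symmetric; and a transition between $\bfB_1$ and $\bfB_0$ would force the chain to cross $1/2$, which is impossible without first leaving $\bfB$. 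Moreover, $Z_{t_0} \geq 1/(2n)$ by the discreteness of $\calG$, and $Z_t \leq 4\delta$ throughout $\bfB \subseteq \yellow'$.

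Next I would separate, along any trajectory that stays in $\bfB$, the case~(a) rounds (where Lemma~\ref{lem:B1}(a) holds) from the case~(b) rounds (where (a) fails, so (b) necessarily holds). Each case~(a) round multiplies $Z$ by at least $1 + c_4/\sqrt{\ell}$ deterministically, so the number $A$ of case~(a) rounds in any trajectory staying in $\bfB$ during $[t_0, t_0+T-1]$ satisfies $(1+c_4/\sqrt{\ell})^A \, Z_{t_0} \leq 4\delta$, giving $A \leq \log(8\delta n)/\log(1+c_4/\sqrt{\ell}) = (1+o(1)) \cdot (\sqrt{c}/c_4)\log^{3/2} n$. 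Choosing $T^*$ to be a suitably large constant multiple of this quantity---the precise constant $(\sqrt{c}/c_4)$ in the lemma's bound can be recovered by tightening $c_4$ from Lemma~\ref{lem:B1}---I conclude that on the event $\{\tau > t_0 + T^*\}$, the number of case~(b) rounds during $[t_0, t_0+T^*-1]$ is at least $\rho T^*$ for some fixed constant $\rho > 0$.

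The final step is a standard supermartingale argument. Let $N_t$ count the case~(b) rounds while in $\bfB$ during $[t_0, t-1]$, and define $M_t := (1-c_5)^{-N_t} \, \bbOne\{\tau > t\}$. By Lemma~\ref{lem:B1}(b), $(M_t)$ is a nonnegative supermartingale with respect to the natural filtration $\mathcal{F}_t = \sigma(\x_0, \ldots, \x_{t+1})$: at any case~(b) round one has $\bbP(\tau > t+1 \mid \mathcal{F}_t) \leq 1 - c_5$, which exactly compensates the factor $(1-c_5)^{-1}$ coming from $N_{t+1} = N_t + 1$; otherwise $N$ is unchanged. Hence $\bbE[M_{t_0+T^*}] \leq M_{t_0} = 1$, and combining with $N_{t_0+T^*} \geq \rho T^*$ on $\{\tau > t_0+T^*\}$ yields $\bbP(\tau > t_0 + T^*) \leq (1-c_5)^{\rho T^*} \leq e^{-c_5 \rho T^*} = n^{-\Omega(\sqrt{\log n})}$, which is much stronger than the required $1/n^{\Omega(1)}$.

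The main obstacle will be the calibration of constants in the second paragraph: the bound $A \leq (1+o(1)) T^*$ is essentially tight, so forcing a constant fraction of the $T^*$ rounds to be case~(b) rounds requires enough slack in the constant $c_4$ from Lemma~\ref{lem:B1} (or equivalently, the stated $(\sqrt{c}/c_4) \log^{3/2} n$ in the lemma must be understood up to adjustment of that constant). Once this slack is in place, both the deterministic cap on case~(a) rounds and the supermartingale analysis for case~(b) rounds are routine.
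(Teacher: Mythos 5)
Your proof is correct and follows essentially the same strategy as the paper's: use the dichotomy of Lemma~\ref{lem:B1} to split rounds into case~(a) and case~(b), cap the number of case~(a) rounds by the deterministic multiplicative growth of $|\x_t-1/2|$ (monotone while in $\bfB$, bounded by $4\delta$), and accumulate a constant exit probability over the case~(b) rounds -- your supermartingale merely makes rigorous the paper's conditioning on the number of such rounds, and your use of discreteness (together with the strict inequality $\x_t>1/2$ forced by the definition of $\bfB_1$) for $Z_{t_0}\geq 1/(2n)$ replaces the paper's step of invoking Lemma~\ref{lem:noise} to first get $|\x_{t_0}-1/2|>1/\sqrt{n}$. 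The constant-factor slack you flag is real but harmless: only the $O(\log^{3/2} n)$ order is used in the proof of Lemma~\ref{lem:yellow2}, and the paper's own computation with the stated constant is equally tight.
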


We are left with proving that the system cannot stay in~$\bfC$ for too long. 
Fortunately, from this area, the Markov chain is naturally pushed towards~$\bfA$, which makes the analysis simple. The proof of Lemma~\ref{lem:C} is deferred to Section~\ref{sec:yellow_C}.

\begin{lemma} \label{lem:C}
    There is a constant~$c_6>0$ such that if~$(\x_{t},\x_{t+1}) \in \bfC$, then
    \begin{equation*}
        \max \big\{ \bbP \pa{(\x_{t+1},\x_{t+2}) \notin \yellow'\setminus \bfA} , \bbP \pa{(\x_{t+2},\x_{t+3}) \notin \yellow'\setminus\bfA} \big\} > c_6.
    \end{equation*}
\end{lemma}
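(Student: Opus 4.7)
\textbf{Plan of proof for Lemma~\ref{lem:C}.}

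Assume by symmetry $(\x_t,\x_{t+1})\in\bfC_1$, so $1/2-4\delta\le\x_t\le\x_{t+1}<1/2$; set $\epsilon=1/2-\x_{t+1}\in(0,4\delta]$ and $\Delta=\x_{t+1}-\x_t\ge 0$. When $\x_{t+1}<1/2$, the inequality $\x_{t+2}-\x_{t+1}\ge\x_{t+1}-1/2$ in the definition of $\bfA_1$ has negative right-hand side and is therefore automatic as soon as $\x_{t+2}\ge 1/2$; so $(\x_{t+1},\x_{t+2})\in\bfA_1$ iff $\x_{t+2}\ge 1/2$ (inside $\yellow'$), and a point outside $\yellow'$ already lies outside $\yellow'\setminus\bfA$. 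The same equivalence applies to $(\x_{t+2},\x_{t+3})$ if $\x_{t+2}<1/2$. Hence it suffices to show that either $\bbP(\x_{t+2}\ge 1/2)\ge c_6$ or $\bbP(\x_{t+2}<1/2,\ \x_{t+3}\ge 1/2)\ge c_6$.

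The engine is Observation~\ref{lem:evolution}. Writing $p=\prob{\ell}{\x_{t+1}}{>}{\ell}{\x_t}$, $p'=\prob{\ell}{\x_{t+1}}{<}{\ell}{\x_t}$, $q=\prob{\ell}{\x_{t+1}}{=}{\ell}{\x_t}$ and $\alpha=(p-p')/2\ge 0$, the identity $p=(1-q)/2+\alpha$ gives
\[
\bbE(\x_{t+2})=\tfrac{1}{2}+\alpha-q\epsilon+O(1/n),\qquad \var(\x_{t+2})=\Theta(1/n).
\]
A local CLT for $B_\ell(\x_{t+1})-B_\ell(\x_t)$, valid since $\x_t,\x_{t+1}\in[1/2-4\delta,1/2]$ keep both binomial parameters bounded away from $\{0,1\}$, yields $q=\Theta(1/\sqrt{\ell})$, $\alpha=\Theta(\sqrt{\ell}\,\Delta)$ when $\sqrt{\ell}\Delta=O(1)$, and $\alpha=\Omega(1)$ once $\sqrt{\ell}\Delta\gg 1$. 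Fix a large constant $K$ and split according to whether $\bbE(\x_{t+2})$ is close to $1/2$.

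\emph{Case 1: $q\epsilon-\alpha\le K/\sqrt{n}$.} Then $\bbE(\x_{t+2})\ge 1/2-K/\sqrt{n}$, and since the standard deviation of $\x_{t+2}$ is $\Theta(1/\sqrt{n})$, Berry--Esseen yields $\bbP(\x_{t+2}\ge 1/2)\ge c_6$ for some constant. \emph{Case 2: $q\epsilon-\alpha>K/\sqrt{n}$.} Then $q\epsilon>K/\sqrt{n}$, which combined with $q=\Theta(1/\sqrt{\ell})$ forces $\epsilon=\Omega(\sqrt{\ell/n})\gg 1/\sqrt{n}$. Let $G=\{|\x_{t+2}-\bbE(\x_{t+2})|\le C'/\sqrt{n}\}$ with $C'<K$; by Chebyshev $\bbP(G)\ge c'>0$. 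On $G$: $\x_{t+2}<1/2$; the new gap $\Delta_2=\x_{t+2}-\x_{t+1}=\epsilon(1-q)+\alpha\pm C'/\sqrt{n}$ is at least $\epsilon/4$ (using $q\le 1/2$ for $\ell$ large, and $\epsilon/4\gg C'/\sqrt{n}$); and the new distance $\epsilon_2=1/2-\x_{t+2}\le q\epsilon+C'/\sqrt{n}=O(\epsilon/\sqrt{\ell})$.

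Reapplying the drift formula to $(\x_{t+1},\x_{t+2})$ on $G$ gives $\bbE(\x_{t+3}\mid\x_{t+1},\x_{t+2})=\tfrac{1}{2}+\alpha_2-q_2\epsilon_2+O(1/n)$; the local CLT on $\Delta_2\ge\epsilon/4$ yields $\alpha_2\ge c''\sqrt{\ell}\,\Delta_2=\Omega(\sqrt{\ell}\,\epsilon)$, while $q_2\epsilon_2=O(\epsilon/\ell)+O(1/\sqrt{n\ell})$. Hence $\bbE(\x_{t+3}\mid G)\ge 1/2+\Omega(\sqrt{\ell}\,\epsilon)\ge 1/2+\Omega(\ell/\sqrt{n})$, which for $\ell=\Theta(\log n)$ is $\Omega(\log n)$ standard deviations above $1/2$, so Chernoff gives $\bbP(\x_{t+3}\ge 1/2\mid G)\ge 1-o(1)$. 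Integrating over $G$ produces $\bbP(\x_{t+2}<1/2,\,\x_{t+3}\ge 1/2)\ge c_6$, completing Case~2. The main technical obstacle is producing the uniform local CLT bounds $q=\Theta(1/\sqrt{\ell})$ and $\alpha\ge c(\sqrt{\ell}\,\Delta\wedge 1)$; both binomial parameters sit in $[1/2-4\delta,1/2]$, which makes these routine, but the degenerate subcase $\x_t=\x_{t+1}$ (where $\alpha=0$ exactly) must be handled by Case~1. A secondary nuisance is aligning all constants so that the same $c_6$ works across the two cases.
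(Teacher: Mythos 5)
Your argument is correct in outline, but it follows a genuinely different route from the paper's. The paper's proof applies Lemma~\ref{lem:handcrafted} (with $\lambda>2$) to get $\bbE(\x_{t+2})>\frac{1}{2}+2(\x_{t+1}-\x_t)-\pa{\frac{1}{2}-\x_{t+1}}\bbP\pa{B_\ell(\x_{t+1})=B_\ell(\x_t)}$ and splits on whether the gap $\x_{t+1}-\x_t$ exceeds the distance $\frac{1}{2}-\x_{t+1}$: if it does, the drift is already above $1/2$ and one anti-concentration step (Lemma~\ref{lem:anti_concentration}) lands in $\bfA_1$ or outside $\yellow'$; if not, the drift is at least the midpoint $\frac{1}{2}\pa{\frac{1}{2}+\x_{t+1}}$, so with constant probability the next pair satisfies the first case, and a second constant-probability step finishes. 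You instead work with the exact identity $\bbE(\x_{t+2})=\frac{1}{2}+\alpha-q\epsilon+O(1/n)$ and split on whether the deficit $q\epsilon-\alpha$ is below the noise scale $K/\sqrt{n}$; your reduction of membership in $\bfA_1$ to the single inequality $\x_{t+2}\geq 1/2$ (given $\x_{t+1}<1/2$) is a clean simplification, and your Case~2 --- where one typical step creates a gap $\Delta_2\geq\epsilon/4$ while shrinking the distance to $O(\epsilon/\sqrt{\ell})$ --- yields a crossing at time $t+3$ with probability $1-o(1)$ rather than the paper's constant probability. What your route buys is sharper quantitative control; what it costs is the two local-CLT inputs you declare routine.

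Two caveats on those inputs. The tie bound should be stated only as $q=O(1/\sqrt{\ell})$ (the matching lower bound fails when $\sqrt{\ell}\,\Delta\gg 1$, and is never used). More importantly, the bound $\alpha\geq c\,(\sqrt{\ell}\,\Delta\wedge 1)$ is \emph{not} a consequence of plain Berry--Esseen in the regime where you invoke it: in Case~2 you apply it with $\Delta_2$ as small as $\Theta(\sqrt{\ell/n})$, where the $O(1/\sqrt{\ell})$ Berry--Esseen error swamps $\sqrt{\ell}\,\Delta_2$. The bound is nevertheless true, and can be obtained from the derivative estimate in Claim~\ref{claim:derivative}: the map $y\mapsto\bbP\pa{B_\ell(y)>B_\ell(x)}-\bbP\pa{B_\ell(y)<B_\ell(x)}$ vanishes at $y=x$ and has derivative $\Omega(\sqrt{\ell})$ on $[x,x+1/\sqrt{\ell}]$; so the gap is fillable with material already in the paper, but it must be proved rather than asserted. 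Finally, your Case~1 needs anti-concentration at scale $K/\sqrt{n}$ rather than $1/\sqrt{n}$; Lemma~\ref{lem:anti_concentration} as stated gives only the latter, though its CLT proof extends verbatim to any constant multiple.
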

Eventually, we have all the necessary results to conclude the proof regarding the Yellow area.
\begin{proof} [Proof of Lemma~\ref{lem:yellow2}]
    By Lemma~\ref{lem:A3}, if~$(\x_{t_0},\x_{t_0+1}) \in \bfA$, then $\bbP \pa{\exists t_1 < t_0 + \log n, (\x_{t_1},\x_{t_1+1}) \notin \yellow'} > c_3 > 0$.
    By Lemma~\ref{lem:C}, this implies that if~$(\x_{t_0},\x_{t_0+1}) \in \bfA \cup \bfC$,
    \begin{equation} \label{eq:escape}
        \bbP \pa{\exists t_1 < t_0 + \log n +2, (\x_{t_1},\x_{t_1+1}) \notin \yellow'} > \min (c_3, c_3 \cdot c_6) =  c_3 \cdot c_6 > 0.
    \end{equation}
    By Lemma~\ref{lem:B2}, w.h.p., whenever the process is at~$\bfB$, it does not spend more than~$(\sqrt{c} / c_4)\cdot \log^{3/2} n$ consecutive rounds there. This means, that for any constant $c'>0$,  during~$c'\log^{5/2} n$ consecutive rounds, w.h.p., we must either leave~$\yellow'$ or be at~$\bfA \cup \bfC$ on at least~$\frac{c'\log^{5/2} n}{(\sqrt{c} / c_4)\cdot \log^{3/2} n} = \frac{c' c_4}{\sqrt{c}} \cdot \log n$ 
    distinct rounds. By Eq.~\eqref{eq:escape}, the probability that the system fails to escape~$\yellow'$ in each of these occasions is at most~$(1-c_3 \cdot c_6)^{(c' c_4/\sqrt{c}) \cdot \log n}$. 
    Taking $c'$ to be sufficiently large
     concludes the proof of Lemma~\ref{lem:yellow2}.
\end{proof}



\subsection{Proof of Lemma~\ref{lem:B1} --- The main lemma regarding area B}\label{sec:lem:B1}

The goal of this section is to prove Lemma~\ref{lem:B1}, which concerns Area B inside the Yellow domain. Without loss of generality, we may assume that~$(\x_t,\x_{t+1}) \in \bfB_1$ (the same arguments apply to~$\bfB_0$ symmetrically).
    
The proof shall use the following lemma which concerns competition between two coins, both being tossed $k$ times. One coin has a probability of $p$ to be ``heads'', and the other has probability $q$. We assume that $p<q$, although the difference is very small, $q-p \leq 1/ \sqrt{k}$. The lemma upper bounds the probability that the more likely coin wins, that is, that the coin with parameter $q$ falls on ``heads'' more times than the other coin. The proof is deferred to Appendix~\ref{app:underdog}.
    
\begin{lemma} \label{lem:handcrafted_converse}
    There exists a constant~$\alpha>1$, s.t. for every integer~$k$, every~$p,q \in [1/3,2/3]$ with~$p<q$ and~$q-p \leq 1/ \sqrt{k}$,
    we have
    \begin{equation*}
        \bbP \pa{B_k(p) < B_k(q)} < \frac{1}{2} + \alpha (q-p) \sqrt{k} - \frac{1}{2}\bbP\pa{ B_k(p) = B_k(q)}.
    \end{equation*}
\end{lemma}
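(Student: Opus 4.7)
The plan is to reformulate the inequality into a cleaner ``signed gap'' form, and then bound the resulting quantity through a derivative-plus-integration argument whose core is a local central limit theorem (local CLT) estimate. With $X \sim B_k(p)$ and $Y \sim B_k(q)$ independent, using $\bbP(X<Y) + \bbP(X>Y) + \bbP(X=Y) = 1$ one checks that the target inequality is equivalent to the strict bound
\begin{equation*}
\bbP(X<Y) - \bbP(X>Y) < 2\alpha(q-p)\sqrt{k},
\end{equation*}
and it is this form that we aim to prove.

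Next, fix $p$ and define $f(s) := \bbP(B_k(s) > X) - \bbP(B_k(s) < X)$, viewed as a function of $s \in [p, q]$, where $B_k(s)$ is taken independent of $X \sim B_k(p)$. By symmetry, $f(p) = 0$, so $f(q) = \int_p^q f'(s)\,ds$. Starting from the elementary identity
\begin{equation*}
\frac{d}{ds}\bbP(B_k(s) = j) = k\bigl[\bbP(B_{k-1}(s) = j-1) - \bbP(B_{k-1}(s) = j)\bigr],
\end{equation*}
a telescoping sum gives $\frac{d}{ds}\bbP(B_k(s) \geq j+1) = k\, \bbP(B_{k-1}(s) = j)$, which after integrating against the law of $X$ yields
\begin{equation*}
f'(s) = k \cdot \bigl( \bbP(B_{k-1}(s) = X) + \bbP(B_{k-1}(s) = X - 1) \bigr),
\end{equation*}
with $B_{k-1}(s)$ independent of $X$.

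The crux is then to bound each point-mass probability above by $O(1/\sqrt{k})$ uniformly in $s \in [p, q]$. Setting $Z := X - B_{k-1}(s)$, the translate $Z + (k-1)$ is distributed as $X + B_{k-1}(1-s)$, that is, a sum of $2k-1$ independent Bernoulli variables whose parameters all lie in $[1/3, 2/3]$ (this step uses $p, q \in [1/3, 2/3]$ crucially, as it guarantees $s, 1-s \in [1/3, 2/3]$). The local CLT for sums of independent Bernoulli variables with parameters in a compact sub-interval of $(0,1)$ then yields a uniform point-mass bound $\sup_m \bbP(X + B_{k-1}(1-s) = m) \leq C/\sqrt{k}$ for some universal constant $C$. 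Consequently $f'(s) \leq 2C\sqrt{k}$ on $[p, q]$, and integrating gives $f(q) \leq 2C(q-p)\sqrt{k}$; any $\alpha > C$ will then deliver the required strict inequality.

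The main technical obstacle will be invoking the local CLT cleanly for sums of non-identically-distributed Bernoulli summands; this is classical (it follows, for instance, from a Fourier-analytic estimate on characteristic functions, using that each Bernoulli summand's characteristic function is bounded away from $1$ in modulus on $[-\pi, \pi] \setminus \{0\}$, uniformly over parameters in $[1/3, 2/3]$), but its application requires all Bernoulli parameters to stay uniformly bounded away from $0$ and $1$, which is exactly what the hypothesis $p, q \in [1/3, 2/3]$ enforces. Note that the assumption $q - p \leq 1/\sqrt{k}$ is not actually used in the plan; it simply marks the regime where the conclusion is non-trivial, since otherwise the right-hand side already exceeds $1/2$.
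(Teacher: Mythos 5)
Your proof is correct, but it takes a genuinely different route from the paper's. The paper starts from an exact algebraic identity (its Eq.~\eqref{eq:conditioning_simp}) which expresses $\bbP(B_k(p)<B_k(q))-\bbP(B_k(q)<B_k(p))$ as a sum over $d=|B_k(q)-B_k(p)|$ of a ratio $\frac{(q(1-p))^d-(p(1-q))^d}{(q(1-p))^d+(p(1-q))^d}$; it then bounds each ratio by $\alpha d(q-p)$ (Claim~\ref{claim:ub_sequ}, elementary, $\alpha=9$), and controls the resulting $\bbE|B_k(q)-B_k(p)|$ by $2\sqrt{k}$ via a coupling/Jensen argument (Claim~\ref{lem:difference_binomial}), which is where the hypothesis $q-p\leq 1/\sqrt{k}$ enters. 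You instead fix $p$, view the signed gap $f(s)=\bbP(B_k(s)>X)-\bbP(B_k(s)<X)$ as a function of the second parameter $s$, differentiate, and show $f'(s)=k[\bbP(B_{k-1}(s)=X)+\bbP(B_{k-1}(s)=X-1)]=O(\sqrt{k})$ uniformly via a local CLT point-mass bound for a sum of $2k-1$ independent Bernoulli variables with parameters in $[1/3,2/3]$, then integrate from $p$ to $q$. I checked your derivative identity (it follows from $\tfrac{d}{ds}\bbP(B_k(s)=j)=k[\bbP(B_{k-1}(s)=j-1)-\bbP(B_{k-1}(s)=j)]$ plus telescoping) and the distributional rewriting $X-B_{k-1}(s)+(k-1)\overset{d}{=}X+B_{k-1}(1-s)$; both are right. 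Trade-offs: the paper's argument is fully self-contained and yields an explicit small constant, whereas yours outsources the key estimate to a (classical but heavier) local CLT for non-identically-distributed Bernoulli sums; on the other hand, your argument produces a bound that holds with no restriction on $q-p$, so you correctly observe that the hypothesis $q-p\leq 1/\sqrt{k}$ is not needed for your route, while the paper does rely on it.
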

    
Next, let us define
\begin{equation} \label{eq:def_g}
    g(x,y) = \prob{\ell}{y}{>}{\ell}{x} +y\cdot \prob{\ell}{y}{=}{\ell}{x} + \frac{1}{n} \pa{1-\prob{\ell}{y}{\geq}{\ell}{x} },
\end{equation}
so that, conditioning on~$(\x_t,\x_{t+1})$, $\bbE(\x_{t+2}) = g(\x_t,\x_{t+1})$ by Observation~\ref{lem:evolution}.
We start with the following claim, which will be used to prove the subsequent claim. The proof of Claim~\ref{claim:last_minute_obstacle} is deferred to Appendix~\ref{app:last_minute_obstacle}. 
\begin{claim} \label{claim:last_minute_obstacle}
    Let~$x \in [1/3,2/3]$. 
    On the interval~$[x,x+1 / \sqrt{\ell}]$, and for~$\ell$ large enough, $y \mapsto g(x,y)-y$ is a strictly increasing function of~$y$.
\end{claim}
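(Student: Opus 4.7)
The strategy is to establish that $\partial g(x,y)/\partial y > 1$ for every $y \in [x, x + 1/\sqrt{\ell}]$, which implies strict monotonicity of $y \mapsto g(x,y) - y$ on this interval since $g(x,\cdot)$ is a polynomial in $y$. Starting from~\eqref{eq:def_g} and applying the classical identity $\frac{d}{dy}\bbP(B_\ell(y) \geq k+1) = \ell\,\bbP(B_{\ell-1}(y) = k)$ (which follows from differentiating the binomial CDF term-by-term and telescoping, using $j\binom{\ell}{j}=\ell\binom{\ell-1}{j-1}$), the derivative admits the decomposition
\begin{equation*}
\frac{\partial g(x,y)}{\partial y} = \pa{1 - \tfrac{1}{n}} \cdot \ell \cdot \bbP\pa{B_{\ell-1}(y) = B_\ell(x)} + \bbP\pa{B_\ell(y) = B_\ell(x)} + \pa{y - \tfrac{1}{n}} \cdot \frac{d}{dy}\bbP\pa{B_\ell(y) = B_\ell(x)}.
\end{equation*}

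The first step is to lower bound the leading term. For $x \in [1/3, 2/3]$ and $|y - x| \leq 1/\sqrt{\ell}$, the independent binomials $B_{\ell-1}(y)$ and $B_\ell(x)$ have means in $[\ell/3 - 1, 2\ell/3]$ and variances of order $\ell$, so the local central limit theorem gives $\bbP(B_{\ell-1}(y) = B_\ell(x)) = \Theta(1/\sqrt{\ell})$, producing a contribution of at least $c\sqrt{\ell}$ to the derivative for some universal constant $c > 0$. The middle term is nonnegative, so it only helps.

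The main obstacle is controlling the third term, which may be negative and whose naive bound $O(\sqrt{\ell})$ would be comparable to the leading term rather than negligible. Differentiating once more yields $\frac{d}{dy}\bbP(B_\ell(y) = B_\ell(x)) = \ell\bigl[\bbP(B_{\ell-1}(y) = B_\ell(x) - 1) - \bbP(B_{\ell-1}(y) = B_\ell(x))\bigr]$. The ratio of adjacent PMF values satisfies $\bbP(B_{\ell-1}(y) = k-1)/\bbP(B_{\ell-1}(y) = k) = k(1-y)/((\ell-k)y)$, and a Taylor expansion around $k = (\ell-1)y$ shows that this ratio differs from~$1$ by $O\pa{(|k - (\ell-1)y| + 1)/\ell}$. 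Averaging over $k \sim B_\ell(x)$, and using that $|\bbE[k] - (\ell-1)y| \leq \ell|y - x| + 1 = O(\sqrt{\ell})$ on the interval while the standard deviation of $B_\ell(x)$ is $O(\sqrt{\ell})$, combined with the uniform local-CLT bound $\max_k \bbP(B_{\ell-1}(y) = k) = O(1/\sqrt{\ell})$, yields $\bigl|\frac{d}{dy}\bbP(B_\ell(y) = B_\ell(x))\bigr| = O(1)$. Therefore $\partial g(x,y)/\partial y \geq (1 - 1/n)\cdot c\sqrt{\ell} - O(1) > 1$ for all sufficiently large $\ell$, completing the proof.
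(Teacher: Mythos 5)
Your proof is correct, but it takes a genuinely different route from the paper's. The paper first rewrites
\begin{equation*}
    g(x,y) = \pa{y-\tfrac{1}{n}} \prob{\ell}{y}{\geq}{\ell}{x} + \pa{1-y}\prob{\ell}{y}{>}{\ell}{x} + \tfrac{1}{n},
\end{equation*}
and then differentiates; the point of this specific grouping is that all three resulting terms in $\partial g/\partial y$ are \emph{nonnegative}: the constant term equals $\prob{\ell}{y}{=}{\ell}{x}\geq 0$, the second has nonnegative coefficient $(y-1/n)$ and multiplies the derivative of a monotone function, and the third is $(1-y)\frac{d}{dy}\prob{\ell}{y}{>}{\ell}{x}$, which the paper lower-bounds by $\Omega(\sqrt{\ell})$ via a separate coupling-based derivative lemma (its Claim~\ref{claim:derivative}). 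The paper thus never has to estimate a term that could pull the derivative down. You instead differentiate the original form of Eq.~\eqref{eq:def_g}, which produces the signed term $(y-1/n)\,\frac{d}{dy}\bbP(B_\ell(y)=B_\ell(x))$, and the burden of the proof shifts to showing that this term is not merely $O(\sqrt{\ell})$ (the naive bound, which would be fatal) but in fact $O(1)$. Your cancellation argument for that is sound: using $\frac{d}{dy}\bbP(B_\ell(y)=k)=\ell[\bbP(B_{\ell-1}(y)=k-1)-\bbP(B_{\ell-1}(y)=k)]$, the ratio of adjacent PMF values is $1+\frac{k-\ell y}{(\ell-k)y}$, and combining $\max_k\bbP(B_{\ell-1}(y)=k)=O(1/\sqrt{\ell})$ with $\bbE_{k\sim B_\ell(x)}|k-\ell y|=O(\sqrt{\ell})$ (both valid on the stated interval, since $\ell|y-x|\leq\sqrt{\ell}$) gives the $O(1)$ bound. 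Also note that you obtain the $\Theta(\sqrt{\ell})$ lower bound on the leading term directly from $\ell\cdot\bbP(B_{\ell-1}(y)=B_\ell(x))=\Theta(\sqrt{\ell})$, whereas the paper derives the equivalent fact from scratch via its coupling argument; both ultimately rest on a local-CLT estimate (the paper's Observation~\ref{claim:proba_equal}). In short: the paper's approach buys you a decomposition with no sign issues at the cost of a dedicated coupling lemma, while your approach uses only the standard binomial derivative identity but then needs a careful cancellation estimate to neutralize the extra signed term. To make your write-up airtight you should also note that the tail contributions (where $\ell-k$ is small and the ratio expansion degrades) are negligible because $\bbP(B_\ell(x)=k)$ decays super-polynomially there.
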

The next claim concerns the fixed points of $g(x,y)$ as a function of~$y$.
\begin{claim} \label{claim:def_f}
    For any given~$x \in [1/2+4/n,1/2+4\delta]$,
    as a function of~$y$, the equation~$y = g(x,y)$ has at most one solution on the interval~$[x,x+1/ \sqrt{\ell}]$. Moreover, in the case that it has no solution, then~$g(x,x+1/\sqrt{\ell}) < x+1/ \sqrt{\ell}$.
\end{claim}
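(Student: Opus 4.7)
The plan is to deduce Claim~\ref{claim:def_f} directly from Claim~\ref{claim:last_minute_obstacle} together with an explicit evaluation of $g(x,y)-y$ at the left endpoint $y=x$. First, observe that for $\delta$ small enough and $n$ large enough, the interval $[1/2+4/n, 1/2+4\delta]$ is contained in $[1/3, 2/3]$, so Claim~\ref{claim:last_minute_obstacle} applies, and the map $y \mapsto g(x,y)-y$ is strictly increasing on $[x, x+1/\sqrt{\ell}]$. A strictly monotone function has at most one zero on any interval, which yields the uniqueness statement.

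For the second part, suppose the equation $y=g(x,y)$ has no solution on $[x,x+1/\sqrt{\ell}]$. Since $g(x,y)-y$ is continuous in $y$ (the relevant binomial probabilities are polynomials in $y$) and strictly increasing, its sign is constant on the interval, and if I can show that $g(x,x)-x<0$ then I immediately obtain $g(x,x+1/\sqrt{\ell}) < x+1/\sqrt{\ell}$, as desired. Thus the main task is to evaluate $g(x,x)-x$ and show it is negative.

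To do so, I will use the symmetry $\mathbb{P}(B_\ell(x) > B_\ell(x)) = \mathbb{P}(B_\ell(x) < B_\ell(x))$. Setting $q := \mathbb{P}(B_\ell(x)=B_\ell(x))$ and $p := \mathbb{P}(B_\ell(x) > B_\ell(x)) = (1-q)/2$, the definition~\eqref{eq:def_g} of $g$ gives
\begin{equation*}
    g(x,x) - x = p + xq + \tfrac{1}{n}(1-p-q) - x = (1-q)\left(\tfrac{1}{2}-x\right) + \tfrac{p}{n} = (1-q)\left(\tfrac{1}{2}-x+\tfrac{1}{2n}\right).
\end{equation*}
Since $x \geq 1/2+4/n$, the bracket is at most $-4/n + 1/(2n) = -7/(2n) < 0$. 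Since $1-q>0$ (the binomial $B_\ell(x)$ is not a.s.\ constant for $x\in(0,1)$), we conclude $g(x,x)-x < 0$, which finishes the proof.

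The argument is essentially calculation-free once Claim~\ref{claim:last_minute_obstacle} is in hand; the only subtlety is ensuring $x>1/2$ is exploited quantitatively (the reason for requiring $x\geq 1/2+4/n$ rather than $x\geq 1/2$ is precisely to dominate the additive $1/(2n)$ coming from the $1/n$-correction in the definition of $g$). There is no real obstacle here, as the work has already been done in Claim~\ref{claim:last_minute_obstacle}; this claim is really just its corollary together with a one-line check at the endpoint $y=x$.
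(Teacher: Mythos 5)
Your proof is correct and follows essentially the same route as the paper: uniqueness from the strict monotonicity of $y \mapsto g(x,y)-y$ supplied by Claim~\ref{claim:last_minute_obstacle}, and the no-solution case handled by checking the sign at the left endpoint, where your factorization $g(x,x)-x = (1-q)\left(\tfrac{1}{2}-x+\tfrac{1}{2n}\right)$ is just a slightly different packaging of the paper's computation $x = x(2p+q) \geq (1+8/n)p + xq > g(x,x)$, both exploiting $x \geq 1/2 + 4/n$ to absorb the $1/n$-correction. No gap.
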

\begin{proof}
    First, we claim that~$g(x,x) < x$. 
    Let~$p = \bbP(B_\ell(x) > B_\ell(x))$ and~$q = \bbP(B_\ell(x) = B_\ell(x))$. We rearrange the definition of~$g$ slightly to obtain~$g(x,x) = p + x\cdot q + \frac{p}{n}$.
    Moreover, $x = x \cdot (2 p + q) \geq \pa{1+8/n} \cdot p + x\cdot q > g(x,x)$,
    where the first inequality is because~$x \geq 1/2+4/n$.   Next, let~$h(y) = g(x,y)-y$. Function~$h$ is continuous, and what we just showed implies~$h(x) < 0$.
    Moreover, by Claim~\ref{claim:last_minute_obstacle}, we know that~$h$ is strictly increasing on~$[x,x+1/\sqrt{\ell}]$.
    Therefore, either~$h(x+1/ \sqrt{\ell}) \geq 0$, in which case there is a unique~$y^\star \in [x,x+1/\sqrt{\ell}]$ such that~$h(y^\star) = 0$;
    or~$h(x+1/ \sqrt{\ell}) < 0$, i.e., $g(x,x+1/\sqrt{\ell}) < x+1/ \sqrt{\ell}$, in which case the equation~$y = g(x,y)$ has no solution on the interval.
\end{proof}
For every~$x \in [1/2+4/n,1/2+4\delta]$, let~$f(x)$ be the solution of the equation~$y = g(x,y)$ in the interval~$[x,x+1/\sqrt{\ell}]$ if it exists, and~$f(x) = x+1/\sqrt{\ell}$ otherwise. Note that by Claim~\ref{claim:def_f}, with this definition we always have
\begin{equation} \label{eq:f_prop}
    g(x,f(x)) \leq f(x).
\end{equation}

\begin{claim} \label{claim:for_clarity}
    For any~$x \in [1/2+4/n,1/2+4\delta]$, it holds that
    \begin{equation*}
        f(x)-x > \frac{1}{2\alpha \sqrt{\ell}} \pa{x-\frac{1}{2}},
    \end{equation*}
    where~$\alpha > 1$ is the constant stated in Lemma~\ref{lem:handcrafted_converse}.
\end{claim}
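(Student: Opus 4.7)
The plan is to pin down the location of $f(x)$ by exploiting the two facts already established about the auxiliary function $h(y) := g(x, y) - y$ on $[x, x+1/\sqrt{\ell}]$: namely, $h$ is strictly increasing there (Claim~\ref{claim:last_minute_obstacle}) and $h(x) < 0$ (as shown inside the proof of Claim~\ref{claim:def_f}). Set $y^\star := x + (x-1/2)/(2\alpha\sqrt{\ell})$. If I can show $h(y^\star) < 0$, then since the root $f(x)$ of $h$ (when it lies in the interval) is unique and $h$ is increasing, I immediately get $f(x) > y^\star$, which is the claim.

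First, I would dispatch the degenerate case where no fixed point exists in the interval, so $f(x) = x + 1/\sqrt{\ell}$. Here $f(x) - x = 1/\sqrt{\ell}$, and one checks directly that $1/\sqrt{\ell} > (x-1/2)/(2\alpha\sqrt{\ell})$ since $x - 1/2 \leq 4\delta < 2\alpha$ (using $\alpha > 1$ and $\delta$ small). The same bound also certifies that $y^\star \in [x, x+1/\sqrt{\ell}]$, which is needed to invoke the monotonicity of $h$ in the main case.

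In the main case, I would bound $g(x, y^\star)$ from above using Lemma~\ref{lem:handcrafted_converse} with $p = x$ and $q = y^\star$ (both lie in $[1/3, 2/3]$ for $\delta$ small, and $q - p \leq 1/\sqrt{\ell}$ by the check above). This gives
\[
    \bbP\pa{B_\ell(y^\star) > B_\ell(x)} < \tfrac{1}{2} + \alpha(y^\star - x)\sqrt{\ell} - \tfrac{1}{2}\bbP\pa{B_\ell(y^\star) = B_\ell(x)}.
\]
Substituting into the definition \eqref{eq:def_g} of $g$ and using $\alpha(y^\star - x)\sqrt{\ell} = (x-1/2)/2$ together with $y^\star - 1/2 = (x-1/2)(1 + 1/(2\alpha\sqrt{\ell}))$, the inequality $g(x, y^\star) < y^\star$ reduces after routine rearrangement (and dividing by $x-1/2$) to
\[
    \Bigl(1 + \tfrac{1}{2\alpha\sqrt{\ell}}\Bigr)\Bigl(1 - \bbP\pa{B_\ell(y^\star) = B_\ell(x)}\Bigr) > \tfrac{1}{2} + \tfrac{1}{n(x - 1/2)}.
\]

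Finally, I would verify this last inequality. The right-hand side is at most $3/4$ using the hypothesis $x - 1/2 \geq 4/n$. For the left-hand side, the standard anti-concentration bound $\max_k \bbP(B_\ell(x) = k) = O(1/\sqrt{\ell})$ (valid for $x$ bounded away from $0$ and $1$) yields $\bbP(B_\ell(y^\star) = B_\ell(x)) = O(1/\sqrt{\ell})$, hence the left-hand side is $1 - O(1/\sqrt{\ell}) > 3/4$ for $\ell = \Theta(\log n)$ large enough. The only delicate point, and the one I expect will be the main obstacle in writing out the argument cleanly, is threading the additive $1/n$ correction in the definition of $g$ through the algebra: the hypothesis $x \geq 1/2 + 4/n$ is exactly tuned so that $1/(n(x-1/2)) \leq 1/4$, which in turn leaves a comfortable gap above $1/2$ for the Lemma~\ref{lem:handcrafted_converse} bound to fill.
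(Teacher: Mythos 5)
Your argument is correct, and it is a genuine (if modest) variant of the paper's proof rather than a reproduction of it. The paper applies Lemma~\ref{lem:handcrafted_converse} directly at the fixed point, taking $q=f(x)$ in the equation $f(x)=g(x,f(x))$, and then rearranges to solve for $f(x)-x$ in terms of $f(x)-1/2$ (using $\bbP\pa{B_\ell(f(x))=B_\ell(x)}<1/2$ and $f(x)>x$ to convert back to $x-1/2$); it never invokes the monotonicity of $y\mapsto g(x,y)-y$ inside this proof. You instead evaluate at the explicit test point $y^\star=x+(x-1/2)/(2\alpha\sqrt{\ell})$, show $g(x,y^\star)<y^\star$ via the same lemma, and then import Claim~\ref{claim:last_minute_obstacle} to push the root above $y^\star$; your range checks ($x,y^\star\in[1/3,2/3]$, $y^\star-x\le 1/\sqrt{\ell}$, and $\bbP\pa{B_\ell(y^\star)=B_\ell(x)}=O(1/\sqrt{\ell})$, which is no less rigorous than the paper's bare assertion that this probability is below $1/2$) all go through, and the degenerate case $f(x)=x+1/\sqrt{\ell}$ is handled the same way as in the paper. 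What your route buys is the constant as literally stated: you obtain $f(x)-x>\frac{1}{2\alpha\sqrt{\ell}}(x-1/2)$ directly, whereas the paper's own computation, after absorbing the $O(1/n)$ correction via $x-1/2\ge 4/n$, actually concludes with the weaker factor $\frac{1}{4\alpha\sqrt{\ell}}$ (consistent with its later choice $c_4=1/(4\alpha)$), so the claim's displayed constant appears to be a benign typo that your argument happens to repair. The cost is a dependence on Claim~\ref{claim:last_minute_obstacle}, which the paper's version of this particular proof avoids, keeping it self-contained modulo the coin-competition lemma; since that claim is proved independently and is used elsewhere anyway, this is not a circularity concern.
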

\begin{proof}
    If~$f(x)$ is not a solution to~$y = g(x,y)$, then by definition~$f(x) = x+1/ \sqrt{\ell}$, i.e.,
    \begin{equation*}
        f(x) - x = \frac{1}{\sqrt{\ell}} > \frac{1}{2\alpha \sqrt{\ell}} \pa{x-\frac{1}{2}},
    \end{equation*}
    and so the statement holds. Otherwise, then~$f(x) = g(x,f(x))$ and belongs to~$[x,x+1/\sqrt{\ell}]$.
    By Lemma~\ref{lem:handcrafted_converse}, there exists~$\alpha>0$ s.t.
    \begin{equation*}
        \prob{\ell}{f(x)}{>}{\ell}{x} < \frac{1}{2} + \alpha (f(x)-x) \sqrt{\ell} - \frac{1}{2}\prob{\ell}{f(x)}{=}{\ell}{x}.
    \end{equation*}
    This can be plugged into the definition of~$f$ (Eq.~\eqref{eq:def_g}) to give
    \begin{equation*}
        f(x) < \frac{1}{2} + \alpha (f(x)-x) \sqrt{\ell} + \pa{ f(x) - \frac{1}{2}} \prob{\ell}{f(x)}{=}{\ell}{x} +\frac{1}{n}
    \end{equation*}
    which we can rewrite,
    \begin{equation*}
        \pa{1-\prob{\ell}{f(x)}{=}{\ell}{x}} \pa{f(x)-\frac{1}{2}} < \alpha (f(x)-x) \sqrt{\ell} + \frac{1}{n}.
    \end{equation*}
    This gives
    \begin{equation*}
        f(x)-x > \frac{1-\prob{\ell}{f(x)}{=}{\ell}{x}}{\alpha \sqrt{\ell}} \pa{f(x)-\frac{1}{2}} -\frac{1}{\alpha \cdot n \sqrt{\ell}} > \frac{1}{2\alpha \sqrt{\ell}} \pa{x-\frac{1}{2}-\frac{2}{n}},
    \end{equation*}
    where the last inequality comes from the upper bound~$\bbP\pa{ B_\ell(f(x)) = B_\ell(x)} < 1/2$ (which is true when~$\ell$ is large enough), and from the fact that~$f(x)>x$.
    Since~$(x-1/2) \geq 4/n$, this implies
    \begin{equation*}
        f(x)-x > \frac{1}{4\alpha \sqrt{\ell}} \pa{x-\frac{1}{2}},
    \end{equation*}
    as desired. This completes the proof of Claim~\ref{claim:for_clarity}.
\end{proof}
    
Next, rewriting~$f(x)-x = \pa{f(x)-1/2} - \pa{x-1/2}$, we get from Claim~\ref{claim:for_clarity} that for every~$x \in [1/2+4/n,1/2+4\delta]$,
\begin{equation} \label{eq:aux_B1}
    \pa{f(x)-\frac{1}{2}} > \pa{ 1 + \frac{1}{4\alpha\sqrt{\ell}} } \cdot \pa{x-\frac{1}{2}}.
\end{equation}
We are now ready to  conclude the proof of Lemma~\ref{lem:B1}. Let~$c_4 = 1/4\alpha$.
\begin{itemize}
    \item If~$\x_t \in [1/2,1/2+4/n]$, then by definition of~$\bfB$, $\x_{t+1} \in [1/2,1/2+8/n]$. For the same reason, for~$(\x_{t+1},\x_{t+2})$ to be in~$\bfB$, it is necessary that~$\x_{t+2} \in [1/2,1/2+16/n]$. By Lemma~\ref{lem:noise} (see Appendix~ref{sec:noise}), there is a constant probability that it is not the case, and so~{\em (b)} (in the statement of the lemma) holds.
    \item Otherwise, if~$\x_t \in [1/2+4/n,1/2+4\delta]$ and~$\x_{t+1} > f(\x_t)$, then by Eq.~\eqref{eq:aux_B1},
    \begin{equation*}
        \x_{t+1}-\frac{1}{2} >f(\x_t)-\frac{1}{2} > \pa{1+\frac{c_4}{\sqrt{\ell}}} \pa{\x_{t}-\frac{1}{2}},
    \end{equation*}
    and so~{\em (a)} holds.
    \item Else, $f(\x_t) \geq \x_{t+1}$. Moreover, by the definitions of~$f$ and~$\bfB_1$, we have the following relation:
    \begin{equation}\label{eq:extra_clarity}
        \x_t+1/ \sqrt{\ell} \geq f(\x_t) \geq \x_{t+1} \geq \x_t.
    \end{equation}
    By Eq.~\eqref{eq:f_prop}, $g(\x_t,f(x_t)) \leq f(\x_t)$, i.e., $g(\x_t,f(x_t)) - f(\x_t) \leq 0$.
    By Claim~\ref{claim:last_minute_obstacle}, for~$\ell$ large enough, function $y \mapsto g(\x_t,y) - y$ is strictly increasing on~$[\x_t,\x_t+1/ \sqrt{\ell}]$.
    Eq.~\eqref{eq:extra_clarity} ensures that~$\x_{t+1}$ and~$f(\x_t)$ are within this interval, so $g(\x_t,\x_{t+1}) - \x_{t+1} \leq g(\x_t,f(x_t)) - f(\x_t) \leq 0$, i.e., $g(\x_t,\x_{t+1}) \leq \x_{t+1}$.
    Recall that~$\bbE(\x_{t+2}) = g(\x_t,\x_{t+1})$ -- therefore, $\bbE(\x_{t+2}) \leq \x_{t+1}$.
    By Lemma~\ref{lem:anti_concentration} (see Appendix \ref{sec:noise}), there is a constant probability~$c_5$ that~$\x_{t+2} < \x_{t+1}$. 
    If this the case, since~$x_{t+1} > 1/2$ and by the definition of~$\bfB$, we get that~$(x_{t+1},x_{t+2}) \notin \bfB$ and~{\em (b)} holds.
\end{itemize}
This concludes the proof of Lemma~\ref{lem:B1}. \qed

\section{Proof of Lemma~\ref{lem:cyan} --- Cyan area} \label{sec:lem:cyan}

The goal of this section is to prove Lemma~\ref{lem:cyan}, which concerns Area~$\cyan{}$.
We distinguish between two cases.
    
{\em Case 1.}
$\x_{t_0} \geq 1/\log(n)$.
In this case, by definition of~$\cyan{1}$, we must have~$\x_{t_0+1} < 1/ \log(n)$. Note that in this case,
for~$n$ large enough, $\x_{t_0+1}-\delta < 1/\log(n) - \delta < 0$.
Then,
\begin{itemize}
    \item either~$\x_{t_0+2} < \x_{t_0+1}+\delta$. In this case, $\x_{t_0+1}-\delta < 0 < \x_{t_0+2} < \x_{t_0+1}+\delta$, and so~$(\x_{t_0+1},\x_{t_0+2}) \in \cyan{1}$ (but this time Case 2 applies). 
    \item or~$\x_{t_0+2} \geq \x_{t_0+1}+\delta$, and so~$(\x_{t_0+1},\x_{t_0+2}) \in \green{1}$.
    \item (We can't have~$\x_{t_0+2} = 0$ because the source is assumed to have opinion~$1$.)
\end{itemize}
    
{\em Case 2.}
$\x_{t_0} < 1/\log(n)$.
Let~$\gamma = \gamma(c) = (1-1/e) \cdot \exp(-2c)/2$ and let~$K = K(c) = c \cdot \exp\pa{-2c}/2$. We will study separately three ranges of value for~$\x_{t+1}$.
Claim~\ref{claim:cyan_small} below concerns small values of~$\x_{t+1}$, Claim~\ref{claim:cyan_intermediate} concerns intermediate values of~$\x_{t+1}$, and Claim~\ref{claim:cyan_large} concerns large values of~$\x_{t+1}$. Their proofs follow from simple applications of Chernoff's bound, and so they are deferred to Appendix~\ref{app:cyan}. 
    
\begin{claim}  \label{claim:cyan_small}
    If~$\x_t < 1/ \log(n)$, and if~$0 < \x_{t+1} \leq 1/ \ell$, then
        $\bbP \pa{ \x_{t+2} > \frac{K}{2} \x_{t+1} \log n } > 1-\exp \pa{-\frac{K}{8} \log n}$.
\end{claim}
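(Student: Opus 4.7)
The plan is to lower bound, for each non-source agent~$i$, the probability $\bbP(Y_{t+2}^{(i)} = 1)$ via the following clean sufficient event: agent~$i$'s round-$t$ sub-sample contains no $1$s (so $\cntt_t = 0$ for that agent) \emph{and} its round-$(t+1)$ sub-sample contains at least one $1$ (so $\cnt_{t+1} \geq 1$ for that agent). On this event $\cnt_{t+1} > \cntt_t$, so the FET rule deterministically sets $Y_{t+2}^{(i)} = 1$, irrespective of $Y_{t+1}^{(i)}$. Since the two sub-samples are drawn in different rounds, they are independent conditional on $(\x_t,\x_{t+1})$, which yields
\[
\bbP\bigl(Y_{t+2}^{(i)} = 1\bigr) \;\geq\; (1-\x_t)^\ell \cdot \bigl(1 - (1-\x_{t+1})^\ell\bigr).
\]

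Next I bound each factor using the hypotheses. From $\x_t < 1/\log n$ and $\ell = c\log n$, the estimate $\ln(1-x) \geq -x/(1-x)$ gives $(1-\x_t)^\ell \geq \exp(-c/(1-1/\log n)) \geq e^{-2c}$ for $n$ sufficiently large. From $\x_{t+1} \leq 1/\ell$ we have $\ell\,\x_{t+1} \in [0,1]$, and the concavity inequality $1 - e^{-x} \geq (1-1/e)\,x$ on $[0,1]$ yields $1 - (1-\x_{t+1})^\ell \geq 1 - e^{-\ell \x_{t+1}} \geq (1-1/e)\, \ell\, \x_{t+1}$. Multiplying the two factors and recalling $2\gamma = (1-1/e)\,e^{-2c}$,
\[
\bbP(Y_{t+2}^{(i)} = 1) \;\geq\; 2\gamma\, \ell\, \x_{t+1}.
\]
Setting $X := \sum_{i \text{ non-source}} Y_{t+2}^{(i)}$, so that $n\x_{t+2} = 1 + X$, Observation~\ref{lem:evolution} identifies $X$ as a sum of $n-1$ independent Bernoulli variables, and
\[
\bbE[X] \;\geq\; 2(n-1)\gamma\, \ell\, \x_{t+1} \;=\; 2(n-1)(1-1/e)\,K \log n \cdot \x_{t+1},
\]
invoking the identity $2\gamma c = 2(1-1/e)\,K$.

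The final step is a multiplicative Chernoff lower tail. Let $T := (nK/2)\,\x_{t+1}\log n$; the target $\x_{t+2} > (K/2)\,\x_{t+1}\log n$ is implied by $X \geq T$. One computes $\bbE[X]/T \geq 4(n-1)(1-1/e)/n$, which exceeds~$2$ as soon as $n$ is large enough; hence the $\delta = 1/2$ form of Chernoff gives $\bbP(X < T) \leq \bbP(X \leq \bbE[X]/2) \leq \exp(-\bbE[X]/8)$. Since $\x_{t+1} \geq 1/n$, the same chain of inequalities yields $\bbE[X] \geq K\log n$, so $\exp(-\bbE[X]/8) \leq \exp(-K\log n/8)$, which is precisely the tail bound in the claim.

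The proof is mostly bookkeeping; the only delicate aspect is that the constants $\gamma = (1-1/e)\,e^{-2c}/2$ and $K = c\,e^{-2c}/2$ have been calibrated so that, simultaneously, the multiplicative slack $\bbE[X]/T$ sits comfortably above~$2$ (enabling the convenient $\delta=1/2$ Chernoff form) and the resulting exponent $\bbE[X]/8$ matches the advertised $K\log n/8$. Any looser numerical constant in the lower bound on $(1-\x_t)^\ell$ or on $1 - (1-\x_{t+1})^\ell$ would break this alignment, so verifying these two arithmetic inequalities is really the only place where the argument has to be tight.
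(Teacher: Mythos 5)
Your proof is correct and follows essentially the same route as the paper's: lower-bound $\bbP(Y_{t+2}^{(i)}=1)$ by the event \emph{(no $1$'s in the round-$t$ subsample)}\,$\cap$\,\emph{(at least one $1$ in the round-$(t+1)$ subsample)}, bound the two independent factors by $e^{-2c}$ and $\Theta(\ell\,\x_{t+1})$ respectively, and finish with a multiplicative Chernoff lower tail. The only cosmetic difference is that you bound $1-(1-\x_{t+1})^\ell$ via the chord inequality $1-e^{-x}\geq(1-1/e)x$ on $[0,1]$, whereas the paper uses the truncated expansion $(1-\x_{t+1})^\ell < 1-\ell\x_{t+1}+\tfrac12\ell^2\x_{t+1}^2$ to get the slightly weaker constant $\tfrac12\ell\x_{t+1}$; both suffice, and indeed your tighter constant is what makes the $\delta=1/2$ Chernoff form drop out cleanly with a bit of slack, whereas with the paper's $\tfrac12$-factor the ratio $\bbE[X]/T$ only tends to $2$ so one has to take $\epsilon$ slightly below $1/2$ (still fine, since $T/2\geq K\log n/4 > K\log n/8$). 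In particular your closing remark that a looser constant ``would break the alignment'' is a little overstated — the paper's looser bound still works — but this does not affect the correctness of your argument.
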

    
\begin{claim}  \label{claim:cyan_intermediate} 
    If~$\x_t < 1/ \log(n)$, and if~$1/\ell < \x_{t+1} \leq \gamma$, then
        $\bbP \pa{ \x_{t+2} > \gamma } > 1-\exp \pa{-\frac{\gamma n}{8}} > 1-\exp \pa{-\frac{K}{8} \log n}$.
\end{claim}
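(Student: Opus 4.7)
The plan is to derive a constant lower bound (depending only on $c$) on the probability that each individual non-source agent has opinion $1$ at round $t+2$, and then to conclude by a Chernoff concentration argument using the independence structure guaranteed by Observation~\ref{lem:evolution}. Since all the non-source contributions to $\x_{t+2}$ are conditionally independent and share the same lower bound, $\x_{t+2}$ will concentrate above $\gamma$ with overwhelming probability.

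The key step is to bound $\prob{\ell}{\xx{t+1}}{>}{\ell}{\xx{t}}$ from below, which I would do by restricting to the intersection of the independent events $\{B_\ell(\xx{t+1}) \geq 1\}$ and $\{B_\ell(\xx{t}) = 0\}$ and bounding each factor separately. The hypothesis $\xx{t+1} > 1/\ell$ yields $\bbP(B_\ell(\xx{t+1}) \geq 1) = 1 - (1-\xx{t+1})^\ell \geq 1 - e^{-\xx{t+1}\ell} \geq 1 - 1/e$. The hypothesis $\xx{t} < 1/\log n$, together with $\ell = c \log n$, yields $\bbP(B_\ell(\xx{t}) = 0) = (1-\xx{t})^\ell \geq \exp(-2\xx{t}\ell) \geq e^{-2c}$, using the elementary inequality $\log(1-x) \geq -2x$ valid for $x \leq 1/2$. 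Invoking Observation~\ref{lem:evolution} and multiplying gives $\bbP(Y_{t+2}^{(i)} = 1 \mid \xx{t},\xx{t+1}) \geq (1-1/e)\,e^{-2c} = 2\gamma$ for every non-source agent $i$.

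By the independence part of Observation~\ref{lem:evolution}, $\x_{t+2}$ is distributed as the average of $n$ independent $\{0,1\}$-valued random variables whose means are each at least $2\gamma$ (with the source contributing a deterministic $1$). A multiplicative Chernoff bound applied to this average then yields $\bbP(\x_{t+2} \leq \gamma) \leq \exp(-\gamma n / 8)$. The second inequality in the claim follows at once from $\gamma n \geq K \log n$, which holds for all sufficiently large $n$ because $\gamma/K = (1-1/e)/c$ is a positive constant while $n/\log n \to \infty$. I do not anticipate a significant technical obstacle: the argument is a clean separation into two independent binomial events followed by standard binary concentration, and the hypotheses on $\xx{t}$ and $\xx{t+1}$ are designed precisely to make each of the two factors bounded below by a constant.
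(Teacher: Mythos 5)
Your proof is correct and follows essentially the same route as the paper's: lower-bound $\prob{\ell}{\x_{t+1}}{>}{\ell}{\x_t}$ by the product of $\bbP(B_\ell(\x_t)=0)\geq e^{-2c}$ and $\bbP(B_\ell(\x_{t+1})\geq 1)\geq 1-1/e$, invoke Observation~\ref{lem:evolution} to reduce to a sum of independent indicators with mean at least $2\gamma n$, and finish with multiplicative Chernoff. The only differences are cosmetic: you use $\log(1-x)\geq -2x$ and $1-x\leq e^{-x}$ where the paper uses $(1-1/\log n)^{c\log n}$ and $(1-1/\ell)^\ell$ asymptotics, and you explicitly verify the chain $\gamma n \geq K\log n$, which the paper leaves implicit.
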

    
\begin{claim} \label{claim:cyan_large}
    If~$\x_t < 1/ \log(n)$, and if~$\x_{t+1} > \gamma$, then
        $\bbP \pa{ \x_{t+2} > \frac{1}{2} } > 1-\exp \pa{-\frac{n}{18}} > 1-\exp \pa{-\frac{K}{8} \log n}$.
\end{claim}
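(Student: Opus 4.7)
The plan is to use Observation~\ref{lem:evolution} to realise $\x_{t+2}$ as an empirical mean of $n$ independent $\{0,1\}$-valued variables whose expectation is much larger than $1/2$, then conclude with Hoeffding's inequality. Conditioning on $\x_t = \xx{t} < 1/\log n$ and $\x_{t+1} = \xx{t+1} > \gamma$, the observation gives $\bbE(\x_{t+2}) \geq \prob{\ell}{\xx{t+1}}{>}{\ell}{\xx{t}}$, so the whole analysis reduces to lower-bounding this probability by a constant strictly greater than $1/2$ (it will actually be close to $1$).

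To bound $\prob{\ell}{\xx{t+1}}{>}{\ell}{\xx{t}}$ from below I would introduce an intermediate threshold at $\gamma \ell / 2$ and apply Chernoff to each of the two binomials separately. Since $\xx{t+1} > \gamma$ and $\ell = c \log n$, the binomial $B_\ell(\xx{t+1})$ has mean at least $\gamma \ell = \Theta(\log n)$, so a standard multiplicative Chernoff bound yields $\bbP(B_\ell(\xx{t+1}) \leq \gamma \ell / 2) \leq \exp(-\gamma \ell /8) = n^{-\Omega(1)}$. In the other direction, since $\xx{t} < 1/\log n$, the binomial $B_\ell(\xx{t})$ has mean at most $c$, which is much smaller than the threshold $\gamma \ell / 2 = \Theta(\log n)$; an upper-tail Chernoff estimate then gives $\bbP(B_\ell(\xx{t}) \geq \gamma \ell / 2) = n^{-\omega(1)}$. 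A union bound combines the two and yields $\prob{\ell}{\xx{t+1}}{>}{\ell}{\xx{t}} \geq 1 - n^{-\Omega(1)} \geq 2/3$ for $n$ large enough.

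Finally, Observation~\ref{lem:evolution} furnishes independent $\{0,1\}$-valued variables $X_1,\ldots,X_n$ with $\x_{t+2} = \frac{1}{n}\sum_{i=1}^n X_i$ and $\bbE(\x_{t+2}) \geq 2/3$. Applying Hoeffding's inequality to this sum yields
\[
\bbP\pa{\x_{t+2} \leq \tfrac{1}{2}} \leq \exp\pa{-2n\pa{\tfrac{2}{3} - \tfrac{1}{2}}^2} = \exp(-n/18),
\]
which is exactly the bound claimed (the rightmost inequality of the claim is immediate for $n$ large since $\gamma$ and $K$ are fixed constants). I do not foresee any genuine obstacle here: the argument is just two Chernoff bounds plus one Hoeffding bound. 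The only mild subtlety is that $\gamma = (1-1/e)\exp(-2c)/2$ depends on the constant $c$, but once $c$ is chosen, $\gamma$ is a fixed positive constant, so all the bounds hold uniformly in $n$.
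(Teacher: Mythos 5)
Your proposal is correct and follows the same overall strategy as the paper: lower-bound $\bbE(\x_{t+2})$ by a constant strictly above $1/2$ via the two-binomial comparison $\prob{\ell}{\x_{t+1}}{>}{\ell}{\x_t}$ (using Observation~\ref{lem:evolution}), then finish with a concentration bound on the sum of $n$ independent indicators. The only real difference is how the comparison probability is bounded: the paper applies its ready-made coin-competition bound (Lemma~\ref{lem:topdog}) once, using the gap $\x_{t+1}-\x_t \geq \gamma - 1/\log n$, to get $\prob{\ell}{\x_{t+1}}{>}{\ell}{\x_t} > 3/4$, whereas you re-derive an equivalent estimate elementarily by splitting at the threshold $\gamma\ell/2$ and applying Chernoff to each binomial separately; you then use H{\oe}ffding instead of Chernoff for the final step, which conveniently reproduces the stated constant $\exp(-n/18)$ exactly. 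Both are fine; the paper's route is a one-liner given its lemma, yours is self-contained. One immaterial quibble: with the multiplicative Chernoff bound stated in the paper, the upper tail for $B_\ell(\x_t)$ at threshold $\gamma\ell/2$ gives $n^{-\Omega(1)}$ rather than $n^{-\omega(1)}$ (the stronger rate needs a Poisson-type tail bound), but either suffices for your conclusion.
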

We say that a round~$t$ is {\em successful} if~$(\x_{t},\x_{t+1}) \in \cyan{1}$, and the event of either Claim~\ref{claim:cyan_small}, \ref{claim:cyan_intermediate} or~\ref{claim:cyan_large} happens. Formally, 
\begin{equation*}
\begin{cases}
    \x_{t+1} \leq 1/ \ell \text{ and } \x_{t+2} > K \x_{t+1} \log n \text{ (corresponding to Claim~\ref{claim:cyan_small}), or} \\
    1/ \ell <~ \x_{t+1} \leq \gamma \text{ and } \x_{t+2} > \gamma \text{ (corresponding to Claim~\ref{claim:cyan_intermediate}), or} \\
    \gamma <~ \x_{t+1} \text{ and } (\x_{t+1},\x_{t+2}) \notin \cyan{1} \text{ (corresponding to Claim~\ref{claim:cyan_large}).}
\end{cases}
\end{equation*}
Let~$X$ be the number of successful rounds starting from~$t_0$.
This definition implies that necessarily,
\begin{equation*}
    X < \frac{\log(n / \ell)}{\log(K \cdot \log n/2)}+2 := X_{\max}.
\end{equation*}
Indeed, since~$\x_{t_0+1} > 1/n$ (by definition of~$\cyan{1}$),
\begin{itemize}
    \item $\log(n / \ell) / \log(K \cdot \log(n)/2)$ rounds are always enough to get~$\x_{t+1} > 1/ \ell$
    \item one more round is enough to get~$\x_{t+1} > \gamma$
    \item one more round is enough to get~$\x_{t+1} > 1/2$, in which case~$(\x_t,\x_{t+1}) \notin \cyan{1}$.
\end{itemize}
Therefore, by Claims~\ref{claim:cyan_small}, \ref{claim:cyan_intermediate} and~\ref{claim:cyan_large}, the probability that, starting from~$t_0$, all rounds are successful until the system is out of~$\cyan{1}$ is at least
$\pa{ 1-\exp \pa{-\frac{K}{8} \log n} }^{X_{\max}} \geq
    1 - X_{\max} \cdot \exp \pa{-\frac{K}{8} \log n} = 1-1/n^{\Omega(1)}$.
Moreover, for any successful round~$t$, $\x_{t+2} > \x_{t+1}$ (by definition of a successful round) and~$\x_{t+1} < \delta + 1/ \log(n)$ (this is a straightforward consequence of the definition of~$\cyan{1}$). Thus, by construction of the partition, we must have~$(\x_{t+1},\x_{t+2}) \in \cyan{1} \cup \green{1} \cup \purple{1}$.
This implies that~$(\x_{t_1},\x_{t_1+1}) \in \green{1} \cup \purple{1}$, which concludes the proof of Lemma~\ref{lem:cyan}.

\section{Discussion and Future Work}
This paper considers a natural problem of information spreading in a self-stabilizing context, where it is assumed that a source agent has useful knowledge about the environment, and others would like to learn this information without being able to distinguish the source from non-source agents. Motivated by biological scenarios, our focus is on solutions that utilize passive communication. We identify an extremely simple algorithm, called FET (Protocol \ref{protocol:FET}), which has a natural appeal: In each round, each (non-source) agent estimates the current tendency direction of the dynamics, and then adapts to the emerging trend. The correct operation of the algorithm does not require that the source actively cooperates with the algorithm, and instead, only assumes that it maintains its correct option throughout the execution. 

Different performance parameters may be further optimized in future work. For example, our analysis uses $O(\log n)$ samples per round, and it would be interesting to see whether the problem can be solved in poly-logarithmic time w.h.p, by using only a constant number of samples per round. Also, we do not exclude the possibility that a tighter analysis of Algorithm FET would reduce our bound on the running time. In addition, our framework assumes the presence of a single source agent, but as mentioned, it can also allow for a constant number of sources, as long as it is guaranteed that all sources agree on the correct opinion. No attempt has been made to consider a larger regime of sources (beyond a constant), although we believe that such a framework is also manageable. 

Finally, as a more philosophical remark, we note that early adapting to emerging trends is a common strategy in humans, which is, in some sense, encouraged by modern economic systems. For example, investing in a successful company can yield large revenues, especially if such an investment is made before others notice its high potential. 
On a global scale, the collective benefits of this strategy are typically associated with economic growth. This paper shows that such a strategy can also have a collective benefit that traces back to basic aspects of collective decision-making, suggesting the possibility that it may have evolved via group-selection.  With this in mind, it would be interesting to empirically check whether such a strategy exists also in other animal groups, e.g., fish schools \cite{sumpter2008consensus} or ants \cite{rajendran2022ants}.

\section*{Acknowledgements}

The authors would particularly like to thank Pierre Fraigniaud and Ofer Feinerman for helpful and exciting discussions. In addition, we would like to thank Uriel Feige, Uri Zwick, Bernard Haeupler, and Emanuele Natale for preliminary discussions.

\bibliographystyle{ACM-Reference-Format}
\bibliography{references}


\begin{thebibliography}{37}


\ifx \showCODEN    \undefined \def \showCODEN     #1{\unskip}     \fi
\ifx \showDOI      \undefined \def \showDOI       #1{#1}\fi
\ifx \showISBNx    \undefined \def \showISBNx     #1{\unskip}     \fi
\ifx \showISBNxiii \undefined \def \showISBNxiii  #1{\unskip}     \fi
\ifx \showISSN     \undefined \def \showISSN      #1{\unskip}     \fi
\ifx \showLCCN     \undefined \def \showLCCN      #1{\unskip}     \fi
\ifx \shownote     \undefined \def \shownote      #1{#1}          \fi
\ifx \showarticletitle \undefined \def \showarticletitle #1{#1}   \fi
\ifx \showURL      \undefined \def \showURL       {\relax}        \fi
\providecommand\bibfield[2]{#2}
\providecommand\bibinfo[2]{#2}
\providecommand\natexlab[1]{#1}
\providecommand\showeprint[2][]{arXiv:#2}

\bibitem[\protect\citeauthoryear{Alistarh, Aspnes, Eisenstat, Gelashvili, and
  Rivest}{Alistarh et~al\mbox{.}}{2016}]%
        {DBLP:journals/corr/AlistarhAEGR16}
\bibfield{author}{\bibinfo{person}{Dan Alistarh}, \bibinfo{person}{James
  Aspnes}, \bibinfo{person}{David Eisenstat}, \bibinfo{person}{Rati
  Gelashvili}, {and} \bibinfo{person}{Ronald~L. Rivest}.}
  \bibinfo{year}{2016}\natexlab{}.
\newblock \showarticletitle{Time-Space Trade-offs in Population Protocols}.
\newblock \bibinfo{journal}{\emph{CoRR}}  \bibinfo{volume}{abs/1602.08032}
  (\bibinfo{year}{2016}).
\newblock
\showeprint[arXiv]{1602.08032}
\urldef\tempurl%
\url{http://arxiv.org/abs/1602.08032}
\showURL{%
\tempurl}


\bibitem[\protect\citeauthoryear{Alistarh and Gelashvili}{Alistarh and
  Gelashvili}{2015}]%
        {DBLP:conf/icalp/AlistarhG15}
\bibfield{author}{\bibinfo{person}{Dan Alistarh} {and} \bibinfo{person}{Rati
  Gelashvili}.} \bibinfo{year}{2015}\natexlab{}.
\newblock \showarticletitle{Polylogarithmic-Time Leader Election in Population
  Protocols}. In \bibinfo{booktitle}{\emph{Automata, Languages, and Programming
  - 42nd International Colloquium, {ICALP} 2015, Kyoto, Japan, July 6-10, 2015,
  Proceedings, Part {II}}} \emph{(\bibinfo{series}{Lecture Notes in Computer
  Science}, Vol.~\bibinfo{volume}{9135})},
  \bibfield{editor}{\bibinfo{person}{Magn{\'{u}}s~M. Halld{\'{o}}rsson},
  \bibinfo{person}{Kazuo Iwama}, \bibinfo{person}{Naoki Kobayashi}, {and}
  \bibinfo{person}{Bettina Speckmann}} (Eds.). \bibinfo{publisher}{Springer},
  \bibinfo{pages}{479--491}.
\newblock
\urldef\tempurl%
\url{https://doi.org/10.1007/978-3-662-47666-6\_38}
\showDOI{\tempurl}


\bibitem[\protect\citeauthoryear{Alistarh and Gelashvili}{Alistarh and
  Gelashvili}{2018}]%
        {DBLP:journals/sigact/AlistarhG18}
\bibfield{author}{\bibinfo{person}{Dan Alistarh} {and} \bibinfo{person}{Rati
  Gelashvili}.} \bibinfo{year}{2018}\natexlab{}.
\newblock \showarticletitle{Recent Algorithmic Advances in Population
  Protocols}.
\newblock \bibinfo{journal}{\emph{{SIGACT} News}} \bibinfo{volume}{49},
  \bibinfo{number}{3} (\bibinfo{year}{2018}), \bibinfo{pages}{63--73}.
\newblock
\urldef\tempurl%
\url{https://doi.org/10.1145/3289137.3289150}
\showDOI{\tempurl}


\bibitem[\protect\citeauthoryear{Angluin, Aspnes, Diamadi, Fischer, and
  Peralta}{Angluin et~al\mbox{.}}{2006}]%
        {angluin2006computation}
\bibfield{author}{\bibinfo{person}{Dana Angluin}, \bibinfo{person}{James
  Aspnes}, \bibinfo{person}{Zo{\"e} Diamadi}, \bibinfo{person}{Michael~J
  Fischer}, {and} \bibinfo{person}{Ren{\'e} Peralta}.}
  \bibinfo{year}{2006}\natexlab{}.
\newblock \showarticletitle{Computation in networks of passively mobile
  finite-state sensors}.
\newblock \bibinfo{journal}{\emph{Distributed computing}} \bibinfo{volume}{18},
  \bibinfo{number}{4} (\bibinfo{year}{2006}), \bibinfo{pages}{235--253}.
\newblock


\bibitem[\protect\citeauthoryear{Angluin, Aspnes, and Eisenstat}{Angluin
  et~al\mbox{.}}{2008}]%
        {DBLP:journals/dc/AngluinAE08}
\bibfield{author}{\bibinfo{person}{Dana Angluin}, \bibinfo{person}{James
  Aspnes}, {and} \bibinfo{person}{David Eisenstat}.}
  \bibinfo{year}{2008}\natexlab{}.
\newblock \showarticletitle{A simple population protocol for fast robust
  approximate majority}.
\newblock \bibinfo{journal}{\emph{Distributed Comput.}} \bibinfo{volume}{21},
  \bibinfo{number}{2} (\bibinfo{year}{2008}), \bibinfo{pages}{87--102}.
\newblock
\urldef\tempurl%
\url{https://doi.org/10.1007/s00446-008-0059-z}
\showDOI{\tempurl}


\bibitem[\protect\citeauthoryear{Aspnes and Ruppert}{Aspnes and
  Ruppert}{2007}]%
        {DBLP:journals/eatcs/AspnesR07}
\bibfield{author}{\bibinfo{person}{James Aspnes} {and} \bibinfo{person}{Eric
  Ruppert}.} \bibinfo{year}{2007}\natexlab{}.
\newblock \showarticletitle{An Introduction to Population Protocols}.
\newblock \bibinfo{journal}{\emph{Bull. {EATCS}}}  \bibinfo{volume}{93}
  (\bibinfo{year}{2007}), \bibinfo{pages}{98--117}.
\newblock


\bibitem[\protect\citeauthoryear{Ayalon, Sternklar, Fonio, Korman, Gov, and
  Feinerman}{Ayalon et~al\mbox{.}}{2021}]%
        {DBLP:journals/fams/AyalonSFKGF21}
\bibfield{author}{\bibinfo{person}{Oran Ayalon}, \bibinfo{person}{Yigal
  Sternklar}, \bibinfo{person}{Ehud Fonio}, \bibinfo{person}{Amos Korman},
  \bibinfo{person}{Nir~S. Gov}, {and} \bibinfo{person}{Ofer Feinerman}.}
  \bibinfo{year}{2021}\natexlab{}.
\newblock \showarticletitle{Sequential Decision-Making in Ants and Implications
  to the Evidence Accumulation Decision Model}.
\newblock \bibinfo{journal}{\emph{Frontiers Appl. Math. Stat.}}
  \bibinfo{volume}{7} (\bibinfo{year}{2021}), \bibinfo{pages}{672773}.
\newblock
\urldef\tempurl%
\url{https://doi.org/10.3389/fams.2021.672773}
\showDOI{\tempurl}


\bibitem[\protect\citeauthoryear{Barclay}{Barclay}{1982}]%
        {barclay1982interindividual}
\bibfield{author}{\bibinfo{person}{Robert~MR Barclay}.}
  \bibinfo{year}{1982}\natexlab{}.
\newblock \showarticletitle{Interindividual use of echolocation calls:
  eavesdropping by bats}.
\newblock \bibinfo{journal}{\emph{Behavioral Ecology and Sociobiology}}
  \bibinfo{volume}{10}, \bibinfo{number}{4} (\bibinfo{year}{1982}),
  \bibinfo{pages}{271--275}.
\newblock


\bibitem[\protect\citeauthoryear{Bastide, Giakkoupis, and Saribekyan}{Bastide
  et~al\mbox{.}}{2021}]%
        {bastide2021self}
\bibfield{author}{\bibinfo{person}{Paul Bastide}, \bibinfo{person}{George
  Giakkoupis}, {and} \bibinfo{person}{Hayk Saribekyan}.}
  \bibinfo{year}{2021}\natexlab{}.
\newblock \showarticletitle{Self-Stabilizing Clock Synchronization with 1-bit
  Messages}. In \bibinfo{booktitle}{\emph{Proceedings of the 2021 ACM-SIAM
  Symposium on Discrete Algorithms (SODA)}}. SIAM, \bibinfo{pages}{2154--2173}.
\newblock


\bibitem[\protect\citeauthoryear{Becchetti, Clementi, and Natale}{Becchetti
  et~al\mbox{.}}{2020}]%
        {becchetti2020consensus}
\bibfield{author}{\bibinfo{person}{Luca Becchetti}, \bibinfo{person}{Andrea
  Clementi}, {and} \bibinfo{person}{Emanuele Natale}.}
  \bibinfo{year}{2020}\natexlab{}.
\newblock \showarticletitle{Consensus dynamics: An overview}.
\newblock \bibinfo{journal}{\emph{ACM SIGACT News}} \bibinfo{volume}{51},
  \bibinfo{number}{1} (\bibinfo{year}{2020}), \bibinfo{pages}{58--104}.
\newblock


\bibitem[\protect\citeauthoryear{Boczkowski, Feinerman, Korman, and
  Natale}{Boczkowski et~al\mbox{.}}{2018a}]%
        {DBLP:conf/innovations/BoczkowskiFKN18}
\bibfield{author}{\bibinfo{person}{Lucas Boczkowski}, \bibinfo{person}{Ofer
  Feinerman}, \bibinfo{person}{Amos Korman}, {and} \bibinfo{person}{Emanuele
  Natale}.} \bibinfo{year}{2018}\natexlab{a}.
\newblock \showarticletitle{Limits for Rumor Spreading in Stochastic
  Populations}. In \bibinfo{booktitle}{\emph{9th Innovations in Theoretical
  Computer Science Conference, {ITCS} 2018, January 11-14, 2018, Cambridge, MA,
  {USA}}} \emph{(\bibinfo{series}{LIPIcs}, Vol.~\bibinfo{volume}{94})},
  \bibfield{editor}{\bibinfo{person}{Anna~R. Karlin}} (Ed.).
  \bibinfo{publisher}{Schloss Dagstuhl - Leibniz-Zentrum f{\"{u}}r Informatik},
  \bibinfo{pages}{49:1--49:21}.
\newblock
\urldef\tempurl%
\url{https://doi.org/10.4230/LIPIcs.ITCS.2018.49}
\showDOI{\tempurl}


\bibitem[\protect\citeauthoryear{Boczkowski, Korman, and Natale}{Boczkowski
  et~al\mbox{.}}{2019}]%
        {DBLP:journals/dc/BoczkowskiKN19}
\bibfield{author}{\bibinfo{person}{Lucas Boczkowski}, \bibinfo{person}{Amos
  Korman}, {and} \bibinfo{person}{Emanuele Natale}.}
  \bibinfo{year}{2019}\natexlab{}.
\newblock \showarticletitle{Minimizing message size in stochastic communication
  patterns: fast self-stabilizing protocols with 3 bits}.
\newblock \bibinfo{journal}{\emph{Distributed Comput.}} \bibinfo{volume}{32},
  \bibinfo{number}{3} (\bibinfo{year}{2019}), \bibinfo{pages}{173--191}.
\newblock
\urldef\tempurl%
\url{https://doi.org/10.1007/s00446-018-0330-x}
\showDOI{\tempurl}


\bibitem[\protect\citeauthoryear{Boczkowski, Natale, Feinerman, and
  Korman}{Boczkowski et~al\mbox{.}}{2018b}]%
        {DBLP:journals/ploscb/BoczkowskiNFK18}
\bibfield{author}{\bibinfo{person}{Lucas Boczkowski}, \bibinfo{person}{Emanuele
  Natale}, \bibinfo{person}{Ofer Feinerman}, {and} \bibinfo{person}{Amos
  Korman}.} \bibinfo{year}{2018}\natexlab{b}.
\newblock \showarticletitle{Limits on reliable information flows through
  stochastic populations}.
\newblock \bibinfo{journal}{\emph{PLoS Comput. Biol.}} \bibinfo{volume}{14},
  \bibinfo{number}{6} (\bibinfo{year}{2018}).
\newblock
\urldef\tempurl%
\url{https://doi.org/10.1371/journal.pcbi.1006195}
\showDOI{\tempurl}


\bibitem[\protect\citeauthoryear{Censor-Hillel, Haeupler, Kelner, and
  Maymounkov}{Censor-Hillel et~al\mbox{.}}{2012}]%
        {censor2012global}
\bibfield{author}{\bibinfo{person}{Keren Censor-Hillel},
  \bibinfo{person}{Bernhard Haeupler}, \bibinfo{person}{Jonathan Kelner}, {and}
  \bibinfo{person}{Petar Maymounkov}.} \bibinfo{year}{2012}\natexlab{}.
\newblock \showarticletitle{Global computation in a poorly connected world:
  fast rumor spreading with no dependence on conductance}. In
  \bibinfo{booktitle}{\emph{Proceedings of the forty-fourth annual ACM
  symposium on Theory of computing}}. \bibinfo{pages}{961--970}.
\newblock


\bibitem[\protect\citeauthoryear{Chen, Cummings, Doty, and Soloveichik}{Chen
  et~al\mbox{.}}{2014}]%
        {chen2014speed}
\bibfield{author}{\bibinfo{person}{Ho-Lin Chen}, \bibinfo{person}{Rachel
  Cummings}, \bibinfo{person}{David Doty}, {and} \bibinfo{person}{David
  Soloveichik}.} \bibinfo{year}{2014}\natexlab{}.
\newblock \showarticletitle{Speed faults in computation by chemical reaction
  networks}. In \bibinfo{booktitle}{\emph{International Symposium on
  Distributed Computing}}. Springer, \bibinfo{pages}{16--30}.
\newblock


\bibitem[\protect\citeauthoryear{Chierichetti, Giakkoupis, Lattanzi, and
  Panconesi}{Chierichetti et~al\mbox{.}}{2018}]%
        {chierichetti2018rumor}
\bibfield{author}{\bibinfo{person}{Flavio Chierichetti},
  \bibinfo{person}{George Giakkoupis}, \bibinfo{person}{Silvio Lattanzi}, {and}
  \bibinfo{person}{Alessandro Panconesi}.} \bibinfo{year}{2018}\natexlab{}.
\newblock \showarticletitle{Rumor spreading and conductance}.
\newblock \bibinfo{journal}{\emph{Journal of the ACM (JACM)}}
  \bibinfo{volume}{65}, \bibinfo{number}{4} (\bibinfo{year}{2018}),
  \bibinfo{pages}{1--21}.
\newblock


\bibitem[\protect\citeauthoryear{Cvikel, Berg, Levin, Hurme, Borissov, Boonman,
  Amichai, and Yovel}{Cvikel et~al\mbox{.}}{2015}]%
        {cvikel2015bats}
\bibfield{author}{\bibinfo{person}{Noam Cvikel}, \bibinfo{person}{Katya~Egert
  Berg}, \bibinfo{person}{Eran Levin}, \bibinfo{person}{Edward Hurme},
  \bibinfo{person}{Ivailo Borissov}, \bibinfo{person}{Arjan Boonman},
  \bibinfo{person}{Eran Amichai}, {and} \bibinfo{person}{Yossi Yovel}.}
  \bibinfo{year}{2015}\natexlab{}.
\newblock \showarticletitle{Bats aggregate to improve prey search but might be
  impaired when their density becomes too high}.
\newblock \bibinfo{journal}{\emph{Current Biology}} \bibinfo{volume}{25},
  \bibinfo{number}{2} (\bibinfo{year}{2015}), \bibinfo{pages}{206--211}.
\newblock


\bibitem[\protect\citeauthoryear{Danchin, Giraldeau, Valone, and
  Wagner}{Danchin et~al\mbox{.}}{2004}]%
        {danchin2004public}
\bibfield{author}{\bibinfo{person}{Etienne Danchin}, \bibinfo{person}{Luc-Alain
  Giraldeau}, \bibinfo{person}{Thomas~J Valone}, {and}
  \bibinfo{person}{Richard~H Wagner}.} \bibinfo{year}{2004}\natexlab{}.
\newblock \showarticletitle{Public information: from nosy neighbors to cultural
  evolution}.
\newblock \bibinfo{journal}{\emph{Science}} \bibinfo{volume}{305},
  \bibinfo{number}{5683} (\bibinfo{year}{2004}), \bibinfo{pages}{487--491}.
\newblock


\bibitem[\protect\citeauthoryear{Demers, Greene, Hauser, Irish, Larson,
  Shenker, Sturgis, Swinehart, and Terry}{Demers et~al\mbox{.}}{1987}]%
        {demers1987epidemic}
\bibfield{author}{\bibinfo{person}{Alan Demers}, \bibinfo{person}{Dan Greene},
  \bibinfo{person}{Carl Hauser}, \bibinfo{person}{Wes Irish},
  \bibinfo{person}{John Larson}, \bibinfo{person}{Scott Shenker},
  \bibinfo{person}{Howard Sturgis}, \bibinfo{person}{Dan Swinehart}, {and}
  \bibinfo{person}{Doug Terry}.} \bibinfo{year}{1987}\natexlab{}.
\newblock \showarticletitle{Epidemic algorithms for replicated database
  maintenance}. In \bibinfo{booktitle}{\emph{Proceedings of the sixth annual
  ACM Symposium on Principles of distributed computing}}.
  \bibinfo{pages}{1--12}.
\newblock


\bibitem[\protect\citeauthoryear{Dijkstra}{Dijkstra}{1974}]%
        {dijkstra}
\bibfield{author}{\bibinfo{person}{Edsger~W. Dijkstra}.}
  \bibinfo{year}{1974}\natexlab{}.
\newblock \showarticletitle{Self-stabilizing Systems in Spite of Distributed
  Control}.
\newblock \bibinfo{journal}{\emph{Commun. {ACM}}} \bibinfo{volume}{17},
  \bibinfo{number}{11} (\bibinfo{year}{1974}), \bibinfo{pages}{643--644}.
\newblock
\urldef\tempurl%
\url{https://doi.org/10.1145/361179.361202}
\showDOI{\tempurl}


\bibitem[\protect\citeauthoryear{Doerr, Goldberg, Minder, Sauerwald, and
  Scheideler}{Doerr et~al\mbox{.}}{2011}]%
        {doerr2011stabilizing}
\bibfield{author}{\bibinfo{person}{Benjamin Doerr}, \bibinfo{person}{Leslie~Ann
  Goldberg}, \bibinfo{person}{Lorenz Minder}, \bibinfo{person}{Thomas
  Sauerwald}, {and} \bibinfo{person}{Christian Scheideler}.}
  \bibinfo{year}{2011}\natexlab{}.
\newblock \showarticletitle{Stabilizing consensus with the power of two
  choices}. In \bibinfo{booktitle}{\emph{Proceedings of the twenty-third annual
  ACM symposium on Parallelism in algorithms and architectures}}.
  \bibinfo{pages}{149--158}.
\newblock


\bibitem[\protect\citeauthoryear{Dutta, Pandurangan, Rajaraman, Sun, and
  Viola}{Dutta et~al\mbox{.}}{2013}]%
        {dutta2013complexity}
\bibfield{author}{\bibinfo{person}{Chinmoy Dutta}, \bibinfo{person}{Gopal
  Pandurangan}, \bibinfo{person}{Rajmohan Rajaraman}, \bibinfo{person}{Zhifeng
  Sun}, {and} \bibinfo{person}{Emanuele Viola}.}
  \bibinfo{year}{2013}\natexlab{}.
\newblock \showarticletitle{On the complexity of information spreading in
  dynamic networks}. In \bibinfo{booktitle}{\emph{Proceedings of the
  Twenty-Fourth Annual ACM-SIAM Symposium on Discrete Algorithms}}. SIAM,
  \bibinfo{pages}{717--736}.
\newblock


\bibitem[\protect\citeauthoryear{Feinerman, Haeupler, and Korman}{Feinerman
  et~al\mbox{.}}{2017}]%
        {DBLP:journals/dc/FeinermanHK17}
\bibfield{author}{\bibinfo{person}{Ofer Feinerman}, \bibinfo{person}{Bernhard
  Haeupler}, {and} \bibinfo{person}{Amos Korman}.}
  \bibinfo{year}{2017}\natexlab{}.
\newblock \showarticletitle{Breathe before speaking: efficient information
  dissemination despite noisy, limited and anonymous communication}.
\newblock \bibinfo{journal}{\emph{Distributed Comput.}} \bibinfo{volume}{30},
  \bibinfo{number}{5} (\bibinfo{year}{2017}), \bibinfo{pages}{339--355}.
\newblock
\urldef\tempurl%
\url{https://doi.org/10.1007/s00446-015-0249-4}
\showDOI{\tempurl}


\bibitem[\protect\citeauthoryear{Feinerman and Korman}{Feinerman and
  Korman}{2017}]%
        {feinerman2017individual}
\bibfield{author}{\bibinfo{person}{Ofer Feinerman} {and} \bibinfo{person}{Amos
  Korman}.} \bibinfo{year}{2017}\natexlab{}.
\newblock \showarticletitle{Individual versus collective cognition in social
  insects}.
\newblock \bibinfo{journal}{\emph{Journal of Experimental Biology}}
  \bibinfo{volume}{220}, \bibinfo{number}{1} (\bibinfo{year}{2017}),
  \bibinfo{pages}{73--82}.
\newblock


\bibitem[\protect\citeauthoryear{Georgiou, Gilbert, Guerraoui, and
  Kowalski}{Georgiou et~al\mbox{.}}{2013}]%
        {DBLP:journals/jacm/GeorgiouGGK13}
\bibfield{author}{\bibinfo{person}{Chryssis Georgiou}, \bibinfo{person}{Seth
  Gilbert}, \bibinfo{person}{Rachid Guerraoui}, {and}
  \bibinfo{person}{Dariusz~R. Kowalski}.} \bibinfo{year}{2013}\natexlab{}.
\newblock \showarticletitle{Asynchronous gossip}.
\newblock \bibinfo{journal}{\emph{J. {ACM}}} \bibinfo{volume}{60},
  \bibinfo{number}{2} (\bibinfo{year}{2013}), \bibinfo{pages}{11:1--11:42}.
\newblock
\urldef\tempurl%
\url{https://doi.org/10.1145/2450142.2450147}
\showDOI{\tempurl}


\bibitem[\protect\citeauthoryear{Georgiou, Gilbert, and Kowalski}{Georgiou
  et~al\mbox{.}}{2011}]%
        {DBLP:journals/dc/GeorgiouGK11}
\bibfield{author}{\bibinfo{person}{Chryssis Georgiou}, \bibinfo{person}{Seth
  Gilbert}, {and} \bibinfo{person}{Dariusz~R. Kowalski}.}
  \bibinfo{year}{2011}\natexlab{}.
\newblock \showarticletitle{Meeting the deadline: on the complexity of
  fault-tolerant continuous gossip}.
\newblock \bibinfo{journal}{\emph{Distributed Comput.}} \bibinfo{volume}{24},
  \bibinfo{number}{5} (\bibinfo{year}{2011}), \bibinfo{pages}{223--244}.
\newblock
\urldef\tempurl%
\url{https://doi.org/10.1007/s00446-011-0144-6}
\showDOI{\tempurl}


\bibitem[\protect\citeauthoryear{Giakkoupis}{Giakkoupis}{2014}]%
        {giakkoupis2014tight}
\bibfield{author}{\bibinfo{person}{George Giakkoupis}.}
  \bibinfo{year}{2014}\natexlab{}.
\newblock \showarticletitle{Tight bounds for rumor spreading with vertex
  expansion}. In \bibinfo{booktitle}{\emph{Proceedings of the twenty-fifth
  annual ACM-SIAM symposium on Discrete algorithms}}. SIAM,
  \bibinfo{pages}{801--815}.
\newblock


\bibitem[\protect\citeauthoryear{Giakkoupis and Woelfel}{Giakkoupis and
  Woelfel}{2011}]%
        {giakkoupis2011randomness}
\bibfield{author}{\bibinfo{person}{George Giakkoupis} {and}
  \bibinfo{person}{Philipp Woelfel}.} \bibinfo{year}{2011}\natexlab{}.
\newblock \showarticletitle{On the randomness requirements of rumor spreading}.
  In \bibinfo{booktitle}{\emph{Proceedings of the Twenty-Second Annual ACM-SIAM
  Symposium on Discrete Algorithms}}. SIAM, \bibinfo{pages}{449--461}.
\newblock


\bibitem[\protect\citeauthoryear{Giraldeau and Caraco}{Giraldeau and
  Caraco}{2018}]%
        {giraldeau2018social}
\bibfield{author}{\bibinfo{person}{Luc-Alain Giraldeau} {and}
  \bibinfo{person}{Thomas Caraco}.} \bibinfo{year}{2018}\natexlab{}.
\newblock \bibinfo{booktitle}{\emph{Social foraging theory}}.
\newblock \bibinfo{publisher}{Princeton University Press}.
\newblock


\bibitem[\protect\citeauthoryear{Karp, Schindelhauer, Shenker, and
  Vocking}{Karp et~al\mbox{.}}{2000}]%
        {karp2000randomized}
\bibfield{author}{\bibinfo{person}{Richard Karp}, \bibinfo{person}{Christian
  Schindelhauer}, \bibinfo{person}{Scott Shenker}, {and}
  \bibinfo{person}{Berthold Vocking}.} \bibinfo{year}{2000}\natexlab{}.
\newblock \showarticletitle{Randomized rumor spreading}. In
  \bibinfo{booktitle}{\emph{Proceedings 41st Annual Symposium on Foundations of
  Computer Science}}. IEEE, \bibinfo{pages}{565--574}.
\newblock


\bibitem[\protect\citeauthoryear{Korman, Greenwald, and Feinerman}{Korman
  et~al\mbox{.}}{2014}]%
        {DBLP:journals/ploscb/KormanGF14}
\bibfield{author}{\bibinfo{person}{Amos Korman}, \bibinfo{person}{Efrat
  Greenwald}, {and} \bibinfo{person}{Ofer Feinerman}.}
  \bibinfo{year}{2014}\natexlab{}.
\newblock \showarticletitle{Confidence Sharing: An Economic Strategy for
  Efficient Information Flows in Animal Groups}.
\newblock \bibinfo{journal}{\emph{PLoS Comput. Biol.}} \bibinfo{volume}{10},
  \bibinfo{number}{10} (\bibinfo{year}{2014}).
\newblock
\urldef\tempurl%
\url{https://doi.org/10.1371/journal.pcbi.1003862}
\showDOI{\tempurl}


\bibitem[\protect\citeauthoryear{Liggett and Liggett}{Liggett and
  Liggett}{1985}]%
        {liggett1985interacting}
\bibfield{author}{\bibinfo{person}{Thomas~Milton Liggett} {and}
  \bibinfo{person}{Thomas~M Liggett}.} \bibinfo{year}{1985}\natexlab{}.
\newblock \bibinfo{booktitle}{\emph{Interacting particle systems}}.
  Vol.~\bibinfo{volume}{2}.
\newblock \bibinfo{publisher}{Springer}.
\newblock


\bibitem[\protect\citeauthoryear{Rajendran, Haluts, Gov, and
  Feinerman}{Rajendran et~al\mbox{.}}{2022}]%
        {rajendran2022ants}
\bibfield{author}{\bibinfo{person}{Harikrishnan Rajendran},
  \bibinfo{person}{Amir Haluts}, \bibinfo{person}{Nir~S Gov}, {and}
  \bibinfo{person}{Ofer Feinerman}.} \bibinfo{year}{2022}\natexlab{}.
\newblock \showarticletitle{Ants resort to majority concession to reach
  democratic consensus in the presence of a persistent minority}.
\newblock \bibinfo{journal}{\emph{Current Biology}} (\bibinfo{year}{2022}).
\newblock


\bibitem[\protect\citeauthoryear{Razin, Eckmann, and Feinerman}{Razin
  et~al\mbox{.}}{2013}]%
        {razin2013desert}
\bibfield{author}{\bibinfo{person}{Nitzan Razin}, \bibinfo{person}{Jean-Pierre
  Eckmann}, {and} \bibinfo{person}{Ofer Feinerman}.}
  \bibinfo{year}{2013}\natexlab{}.
\newblock \showarticletitle{Desert ants achieve reliable recruitment across
  noisy interactions}.
\newblock \bibinfo{journal}{\emph{Journal of the Royal Society Interface}}
  \bibinfo{volume}{10}, \bibinfo{number}{82} (\bibinfo{year}{2013}),
  \bibinfo{pages}{20130079}.
\newblock


\bibitem[\protect\citeauthoryear{Sumpter, Krause, James, Couzin, and
  Ward}{Sumpter et~al\mbox{.}}{2008}]%
        {sumpter2008consensus}
\bibfield{author}{\bibinfo{person}{David~JT Sumpter}, \bibinfo{person}{Jens
  Krause}, \bibinfo{person}{Richard James}, \bibinfo{person}{Iain~D Couzin},
  {and} \bibinfo{person}{Ashley~JW Ward}.} \bibinfo{year}{2008}\natexlab{}.
\newblock \showarticletitle{Consensus decision making by fish}.
\newblock \bibinfo{journal}{\emph{Current Biology}} \bibinfo{volume}{18},
  \bibinfo{number}{22} (\bibinfo{year}{2008}), \bibinfo{pages}{1773--1777}.
\newblock


\bibitem[\protect\citeauthoryear{Wilkinson}{Wilkinson}{1992}]%
        {wilkinson1992information}
\bibfield{author}{\bibinfo{person}{Gerald~S Wilkinson}.}
  \bibinfo{year}{1992}\natexlab{}.
\newblock \showarticletitle{Information transfer at evening bat colonies}.
\newblock \bibinfo{journal}{\emph{Animal Behaviour}}  \bibinfo{volume}{44}
  (\bibinfo{year}{1992}), \bibinfo{pages}{501--518}.
\newblock


\bibitem[\protect\citeauthoryear{Yick, Mukherjee, and Ghosal}{Yick
  et~al\mbox{.}}{2008}]%
        {yick2008wireless}
\bibfield{author}{\bibinfo{person}{Jennifer Yick}, \bibinfo{person}{Biswanath
  Mukherjee}, {and} \bibinfo{person}{Dipak Ghosal}.}
  \bibinfo{year}{2008}\natexlab{}.
\newblock \showarticletitle{Wireless sensor network survey}.
\newblock \bibinfo{journal}{\emph{Computer networks}} \bibinfo{volume}{52},
  \bibinfo{number}{12} (\bibinfo{year}{2008}), \bibinfo{pages}{2292--2330}.
\newblock


\end{thebibliography}
\clearpage


\centerline{\huge{Appendix}}
\appendix
\section{Probabilistic tools}
\subsection{Some well-known theorems}
\begin{theorem} \label{thm:mult_chernoff_bound} [Multiplicative Chernoff's Bound]
    Let~$X_1,\ldots,X_n$ be independent binary random variables, let~$X = \sum_{i=1}^n X_i$ and~$\mu = \bbE(X)$. Then it holds for all~$\delta > 0$ that
    \begin{equation*}
        \bbP \pa{ X \geq (1+\delta) \mu } \leq \exp \pa{ - \min\{ \delta, \delta^2 \} \cdot \frac{\mu}{3} },
    \end{equation*}
    and for all~$0 < \epsilon < 1$,
    \begin{equation*}
         \bbP \pa{ X \leq (1-\epsilon) \mu } \leq \exp \pa{-\epsilon^2 \cdot \frac{\mu}{2}}.
    \end{equation*}
\end{theorem}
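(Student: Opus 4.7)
The statement is the standard multiplicative Chernoff bound, so my plan is to follow the textbook moment-generating-function argument (the ``Chernoff method''), and then massage the resulting expression into the specific form the paper uses.

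First I would prove the following key single-variable lemma: for any binary random variable $X_i$ with $\mathbb{E}(X_i)=p_i$ and any real $t$, one has $\mathbb{E}(e^{t X_i}) = 1 + p_i(e^t-1) \leq \exp(p_i(e^t-1))$, where the inequality uses $1+x \leq e^x$. Multiplying over $i$ and using independence gives $\mathbb{E}(e^{tX}) \leq \exp(\mu(e^t-1))$. Then Markov's inequality applied to $e^{tX}$ yields, for every $t>0$ and every $a>0$,
\begin{equation*}
\mathbb{P}(X \geq a) \leq e^{-ta}\,\mathbb{E}(e^{tX}) \leq \exp\bigl(\mu(e^t-1) - ta\bigr),
\end{equation*}
and similarly $\mathbb{P}(X \leq a) \leq \exp(\mu(e^t-1) - ta)$ for every $t<0$.

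For the upper tail, I would set $a=(1+\delta)\mu$ and optimize by choosing $t=\ln(1+\delta)>0$. This produces the standard bound $\mathbb{P}(X\geq (1+\delta)\mu) \leq \bigl(e^{\delta}/(1+\delta)^{1+\delta}\bigr)^{\mu}$. The remaining work is purely analytic: show that
\begin{equation*}
\frac{e^{\delta}}{(1+\delta)^{1+\delta}} \leq \exp\!\left(-\frac{\min\{\delta,\delta^2\}}{3}\right).
\end{equation*}
I would split into the two cases $\delta\in(0,1]$ and $\delta>1$. For $\delta\in(0,1]$, compare $\delta - (1+\delta)\ln(1+\delta)$ to $-\delta^2/3$ using the Taylor expansion $\ln(1+\delta) = \delta - \delta^2/2 + \delta^3/3 - \cdots$. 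For $\delta>1$, I would show that $(1+\delta)\ln(1+\delta) - \delta \geq \delta/3$ by checking this at $\delta=1$ and verifying that the derivative of the left side exceeds $1/3$ for $\delta\geq 1$. For the lower tail, I would similarly set $a=(1-\epsilon)\mu$ with $t=\ln(1-\epsilon)<0$, obtain $\bigl(e^{-\epsilon}/(1-\epsilon)^{1-\epsilon}\bigr)^{\mu}$, and then use the series $\ln(1-\epsilon) = -\epsilon - \epsilon^2/2 - \epsilon^3/3 - \cdots$ to verify that $-\epsilon - (1-\epsilon)\ln(1-\epsilon) \leq -\epsilon^2/2$ on $(0,1)$.

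The main obstacle, if any, is the elementary but slightly finicky calculus step showing the two upper-tail regimes $\delta^2$ versus $\delta$ can be unified with the constant $1/3$. There is nothing deep going on, but one has to be careful with the case $\delta>1$ since the usual Taylor bound breaks down, and one must verify the slope condition explicitly; the lower tail is easier because $-\ln(1-\epsilon)$ has an all-positive Taylor series, making the $\epsilon^2/2$ constant immediate. Since this is a classical result stated here only to be invoked as a black box, I would expect the proof to take at most a page and to cite standard references for the analytic steps rather than redo them in full.
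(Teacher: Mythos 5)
Your proposal is correct: the moment-generating-function argument, the choice $t=\ln(1+\delta)$ (resp.\ $t=\ln(1-\epsilon)$), and the two analytic verifications (the case split at $\delta=1$ for the constant $1/3$, and the all-positive series for $-\ln(1-\epsilon)$ giving the constant $1/2$) all go through as you describe. The paper itself gives no proof of this statement --- it is listed in the appendix among well-known probabilistic tools and invoked as a black box --- so your standard Chernoff-method derivation is exactly the expected justification.
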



\begin{theorem} \label{thm:hoeffding_bound} [H{\oe}ffding's bound]
    Let~$X_1,\ldots,X_n$ be independent random variables such that for every~$1 \leq i \leq n$, $a_i \leq X_i \leq b_i$ almost surely. Let~$X = \sum_{i=1}^n X_i$ and~$\mu = \bbE(X)$. Then it holds for all~$\delta > 0$ that
    \begin{equation*}
        \bbP \pa{ X-\mu \geq \delta } \leq \exp \pa{ - \frac{2 \delta^2}{\sum_{i=1}^n (b_i-a_i)^2 } }.
    \end{equation*}
\end{theorem}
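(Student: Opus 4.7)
The final statement is the classical Hoeffding inequality, so my plan is to follow the standard Chernoff-style exponential moment argument, which has three clean ingredients.

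First, I would apply the generic Chernoff trick: for any $\lambda>0$, using Markov's inequality on $e^{\lambda(X-\mu)}$ gives
\begin{equation*}
    \bbP(X-\mu \geq \delta) \;=\; \bbP\!\pa{e^{\lambda(X-\mu)} \geq e^{\lambda \delta}} \;\leq\; e^{-\lambda \delta}\,\bbE\!\left[e^{\lambda(X-\mu)}\right].
\end{equation*}
By independence of the $X_i$, the moment generating function factorizes: $\bbE[e^{\lambda(X-\mu)}] = \prod_{i=1}^n \bbE[e^{\lambda(X_i-\bbE X_i)}]$. This reduces the problem to controlling the MGF of a single centered bounded random variable.

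Second, I would prove (or invoke) Hoeffding's lemma: if $Y$ is a random variable with $\bbE Y = 0$ and $a \leq Y \leq b$ almost surely, then
\begin{equation*}
    \bbE[e^{\lambda Y}] \;\leq\; \exp\!\pa{\frac{\lambda^2 (b-a)^2}{8}}.
\end{equation*}
The standard proof uses convexity of $y \mapsto e^{\lambda y}$ to bound $e^{\lambda y} \leq \tfrac{b-y}{b-a} e^{\lambda a} + \tfrac{y-a}{b-a} e^{\lambda b}$, takes expectations to eliminate the linear-in-$y$ term (thanks to $\bbE Y = 0$), and then bounds the resulting cumulant generating function $\varphi(\lambda) = -\lambda p a + \log(1-p+pe^{\lambda(b-a)})$ (with $p = -a/(b-a)$) by computing $\varphi(0)=\varphi'(0)=0$ and $\varphi''(\lambda) \leq (b-a)^2/4$, so that Taylor's theorem yields $\varphi(\lambda)\leq \lambda^2(b-a)^2/8$.

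Third, applying the lemma to $Y_i = X_i - \bbE X_i$ (which satisfies $a_i - \bbE X_i \leq Y_i \leq b_i - \bbE X_i$, still an interval of width $b_i-a_i$) gives
\begin{equation*}
    \bbP(X-\mu \geq \delta) \;\leq\; \exp\!\pa{-\lambda \delta + \frac{\lambda^2}{8}\sum_{i=1}^n (b_i-a_i)^2}.
\end{equation*}
Finally I would optimize by choosing $\lambda = 4\delta / \sum_{i=1}^n (b_i-a_i)^2$, which minimizes the right-hand side and produces exactly the claimed bound
\begin{equation*}
    \bbP(X-\mu \geq \delta) \;\leq\; \exp\!\pa{-\frac{2\delta^2}{\sum_{i=1}^n (b_i-a_i)^2}}.
\end{equation*}
There is no real obstacle here since this is a textbook derivation; the only step with any content is Hoeffding's lemma itself, and in a paper of this kind I would simply cite it rather than reproduce the convexity/Taylor-remainder calculation.
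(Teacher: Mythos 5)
Your derivation is correct: it is the canonical Chernoff/moment-generating-function proof of Hoeffding's inequality (Markov on $e^{\lambda(X-\mu)}$, factorization by independence, Hoeffding's lemma for each centered bounded summand, then optimizing $\lambda = 4\delta/\sum_{i=1}^n (b_i-a_i)^2$), and the optimization does yield exactly the stated exponent. The paper itself gives no proof of this statement --- it is listed among the well-known probabilistic tools and used as a black box --- so there is nothing to compare against, and citing Hoeffding's lemma rather than reproving it, as you suggest, is exactly what one would do here.
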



\begin{theorem} \label{thm:central_limit} [Central Limit]
    Let~$X_1,\ldots,X_n$ be i.i.d. random variables with~$\bbE(X_1)=\mu$ and~$\var(X_1) = \sigma^2 < +\infty$. Then as~$n$ tends to infinity, the random variables~$\sqrt{n} \pa{ \frac{1}{n} \sum_{i=1}^n X_i - \mu}$ converges in distribution to~$\calN(0,\sigma^2)$.
\end{theorem}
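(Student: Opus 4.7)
The plan is to use the method of characteristic functions combined with L\'evy's continuity theorem, which is the standard route to this result; the whole argument is driven by a single second-order Taylor expansion of the characteristic function near the origin.

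First I would reduce to the zero-mean case by setting $Y_i = X_i - \mu$, so that the $Y_i$ are i.i.d.\ with $\bbE(Y_1)=0$ and $\var(Y_1)=\sigma^2$, and then rewrite the quantity of interest as $S_n := \sqrt{n}\pa{\frac{1}{n}\sum_{i=1}^{n} X_i - \mu} = \frac{1}{\sqrt{n}} \sum_{i=1}^n Y_i$. The goal becomes to show that $S_n \tendsto{n}{\infty} \calN(0,\sigma^2)$ in distribution.

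Next, letting $\phi(t) = \bbE\pa{e^{itY_1}}$, independence of the $Y_i$ gives
\begin{equation*}
\bbE\pa{e^{itS_n}} = \phi(t/\sqrt{n})^n.
\end{equation*}
The crucial analytic step is to establish the expansion $\phi(s) = 1 - \tfrac{1}{2}\sigma^2 s^2 + o(s^2)$ as $s\to 0$, which should follow from $\phi(0)=1$, $\phi'(0)=i\bbE(Y_1)=0$, and $\phi''(0)=-\bbE(Y_1^2)=-\sigma^2$. Substituting $s=t/\sqrt{n}$ and using $\pa{1+z_n/n}^n \to e^z$ whenever $z_n\to z$, I would conclude that $\bbE(e^{itS_n})$ tends to $\exp\pa{-\tfrac{1}{2}\sigma^2 t^2}$ for every fixed $t\in\bbR$, which is precisely the characteristic function of $\calN(0,\sigma^2)$.

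The proof would then conclude by invoking L\'evy's continuity theorem: pointwise convergence of characteristic functions to a limit that is continuous at the origin implies convergence in distribution. The main obstacle will be justifying the Taylor expansion under the bare assumption of a finite second moment (no finite third moment is assumed), since differentiating $\phi$ twice under the expectation sign is not automatic when higher moments are absent. The standard fix is the pointwise inequality $\bigl|e^{ix} - 1 - ix - \tfrac{(ix)^2}{2}\bigr| \leq \min\pa{|x|^2,\,|x|^3/6}$ applied with $x = sY_1$; taking expectations and then invoking dominated convergence on the normalised remainder delivers the required $o(s^2)$ control using only $\bbE(Y_1^2)<\infty$.
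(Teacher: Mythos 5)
Your proof is correct: the paper states this Central Limit theorem as a classical tool in its probabilistic-tools appendix and gives no proof of its own, so there is nothing to diverge from. Your characteristic-function route via L\'evy's continuity theorem is the standard argument, and you correctly identify and resolve the only delicate point --- justifying the expansion $\phi(s)=1-\tfrac{1}{2}\sigma^2 s^2+o(s^2)$ under a finite second moment alone, via the bound $\bigl|e^{ix}-1-ix-\tfrac{(ix)^2}{2}\bigr|\leq \min\pa{|x|^2,|x|^3/6}$ and dominated convergence --- so no further changes are needed.
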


Let~$\Phi$ be the cumulative distribution function (c.d.f.) of the standard normal distribution:
\begin{equation*}
    \Phi(x) =  \frac{1}{\sqrt{2\pi}} \int_{-\infty}^x e^{-t^2/2} dt.
\end{equation*}

\begin{theorem} \label{thm:berry_esseen} [Berry-Esseen]
    Let~$X_1,\ldots,X_n$ be i.i.d. random variables, with~$\bbE(X_1)=0$, $\var(X_1) = \bbE(X_1^2) > 0$, and~$\bbE(|X_1|^3) = \rho < +\infty$. Let~$X = \sum_{i=1}^n X_i$ and~$F$ be the c.d.f. of~$X/(\sigma \sqrt{n})$.
    Then it holds that
    \begin{equation*}
        |F(x) - \Phi(x)| \leq \frac{C \rho}{\sigma^3 \sqrt{n}},
    \end{equation*}
    where, e.g.,~$C = 0.4748$.
\end{theorem}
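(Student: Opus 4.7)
The plan is to use the method of characteristic functions combined with Esseen's smoothing inequality. Let $\phi(t) = \bbE[e^{i t X_1 / \sigma}]$ be the characteristic function of $X_1/\sigma$. By independence, the characteristic function of the normalized sum $X/(\sigma \sqrt{n})$ is $f_n(t) = \phi(t/\sqrt{n})^n$, and the goal is to show it is quantitatively close to $g(t) = e^{-t^2/2}$, the characteristic function of the standard normal, in a way that transfers to the cumulative distribution functions. First I would invoke Esseen's smoothing inequality, which states that for any $T > 0$,
\begin{equation*}
\sup_x |F(x) - \Phi(x)| \leq \frac{1}{\pi} \int_{-T}^T \left| \frac{f_n(t) - g(t)}{t} \right| dt + \frac{24}{\pi T \sqrt{2\pi}},
\end{equation*}
using $\|\Phi'\|_\infty = 1/\sqrt{2\pi}$. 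This reduces the problem to bounding the integral and then picking $T$ of order $\sigma^3 \sqrt{n}/\rho$ to balance the two terms.

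Next I would perform a Taylor expansion of $\phi$ around $0$. Since $\bbE[X_1] = 0$ and $\bbE[X_1^2] = \sigma^2$, the third-moment hypothesis yields $\phi(s) = 1 - s^2/2 + R(s)$ with $|R(s)| \leq \rho |s|^3/(6\sigma^3)$. Substituting $s = t/\sqrt{n}$ and estimating $n \log \phi(t/\sqrt{n}) = -t^2/2 + O\pa{\rho |t|^3/(\sigma^3 \sqrt{n})}$ on the range $|t| \leq T$, where the Taylor correction stays small, one obtains a pointwise bound of the form
\begin{equation*}
|f_n(t) - g(t)| \leq C_1 \cdot e^{-t^2/4} \cdot \frac{\rho |t|^3}{\sigma^3 \sqrt{n}}.
\end{equation*}
Dividing by $|t|$, the integrand is controlled by an $L^1$ Gaussian-weighted function of $t$, so the integral contributes a term of order $\rho/(\sigma^3 \sqrt{n})$. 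Combined with the $1/T$ tail under the chosen $T$, this produces a bound of the qualitative Berry--Esseen form $\sup_x |F(x) - \Phi(x)| = O\pa{\rho/(\sigma^3 \sqrt{n})}$.

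The main obstacle is pushing the constant down to the sharp value $C = 0.4748$. A direct implementation of the scheme above yields an order-one constant well into the single digits (essentially the approach of Berry and of Esseen in the 1940s). Shevtsova's value requires several refinements: exploiting additional cancellation in the Taylor expansion beyond the modulus bound on $R$; splitting $[-T,T]$ into an inner region where the Taylor estimate is tight and an outer region where one uses a direct bound such as $|\phi(s)| \leq 1 - c s^2$ near the origin, together with a non-triviality bound for $s$ away from $0$; and numerically optimizing the truncation $T$. For the paper's purposes, where Berry--Esseen is invoked as a black-box anti-concentration tool, any finite $C$ suffices, so the elementary version of the argument above already delivers what is needed.
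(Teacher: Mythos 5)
The paper does not prove this statement at all: it is quoted in the appendix as a classical, well-known theorem and used purely as a black box (its only role is inside Lemma~\ref{lem:underdog}, where any absolute constant~$C$ would do). Your sketch is the canonical proof route --- Esseen's smoothing inequality reducing the Kolmogorov distance to an integral of $|f_n(t)-g(t)|/|t|$ over $[-T,T]$, a third-order Taylor expansion of the characteristic function giving $|f_n(t)-g(t)|\leq C_1 e^{-t^2/4}\rho|t|^3/(\sigma^3\sqrt{n})$ on the relevant range, and the choice $T\asymp \sigma^3\sqrt{n}/\rho$ --- and the outline is sound; the one technical point to make explicit is that the pointwise bound (and the use of $n\log\phi(t/\sqrt{n})$) is only valid for $|t|\leq c\,\sigma^3\sqrt{n}/\rho$ with $c$ a small absolute constant, which is consistent with your choice of $T$ because $\rho\geq\sigma^3$ by Lyapunov's inequality. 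The only substantive caveat is the constant: the elementary argument you describe yields some absolute $C$ of moderate size, whereas the value $C=0.4748$ quoted in the statement is Shevtsova's sharp numerical bound, whose derivation you correctly identify as requiring a much more delicate analysis that your sketch does not (and need not) carry out. Since the paper invokes the theorem only qualitatively, your version proves everything the paper actually uses; just be aware that as literally stated, with that specific constant, the result is being cited rather than reproved, by you and by the paper alike.
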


\subsection{Competition between coins}

Consider two coins such that one coin has a greater probability of yielding ``heads'', and  toss them $k$ times each.

\subsubsection{Lower bounds on the probability that the best coin wins}

In Lemmas \ref{lem:topdog} and \ref{lem:handcrafted} we 
aim to lower bound the probability that the more likely coin yields more ``heads'', or in other words, we lower bound the probability that the favorite coin wins. Lemma \ref{lem:topdog} is particularly effective when the difference between $p$ and $q$ is sufficiently large. Its proof is based on a simple application of H{\oe}ffding's inequality.
\begin{lemma} \label{lem:topdog}
    For every~$p,q \in [0,1]$ s.t.~$p < q$ and every integer~$k$, we have
    \begin{equation*} 
        \prob{k}{p}{<}{k}{q} \geq  1- \exp \pa{-\frac{1}{2} k (q - p)^2}.
    \end{equation*}
\end{lemma}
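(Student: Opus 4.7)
The plan is to realize the difference $B_k(q)-B_k(p)$ as a sum of independent bounded random variables with a positive mean, and then apply Hœffding's inequality (Theorem~\ref{thm:hoeffding_bound}) directly. Concretely, I would couple the two binomials by writing $B_k(p) = \sum_{i=1}^k W_i$ and $B_k(q) = \sum_{i=1}^k Z_i$, where $W_i \sim \mathrm{Bern}(p)$, $Z_i \sim \mathrm{Bern}(q)$, and all $2k$ variables are mutually independent. Setting $Y_i := Z_i - W_i$, the variables $Y_1,\ldots,Y_k$ are independent, each takes values in $[-1,1]$, and $\mathbb{E}[Y_i] = q-p$, so the random variable $X := \sum_{i=1}^k Y_i = B_k(q)-B_k(p)$ has mean $\mu = k(q-p) > 0$.

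Next I would observe that the event $\{B_k(p) < B_k(q)\}$ is exactly $\{X > 0\}$, so it suffices to upper bound $\mathbb{P}(X \leq 0) = \mathbb{P}(X - \mu \leq -\mu)$. Applying Theorem~\ref{thm:hoeffding_bound} to the $Y_i$'s with $a_i = -1$, $b_i = 1$ (so $\sum_i (b_i-a_i)^2 = 4k$) and deviation $\delta = \mu = k(q-p)$ yields
\begin{equation*}
    \mathbb{P}(X \leq 0) \;\leq\; \exp\!\left(-\frac{2 \mu^2}{4k}\right) \;=\; \exp\!\left(-\frac{1}{2} k (q-p)^2\right).
\end{equation*}
Taking complements gives the claimed lower bound $\mathbb{P}(B_k(p)<B_k(q)) \geq 1-\exp(-\tfrac{1}{2}k(q-p)^2)$.

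This is essentially a one-line calculation once the coupling is in place, so there is no real obstacle; the only minor point to be careful about is the direction of the inequality ($X \leq 0$ vs.\ $X < 0$), but since we are bounding from above the probability of a larger event, the argument is valid and yields the strict-inequality statement in the lemma.
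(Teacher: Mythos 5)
Your proof is correct and essentially identical to the paper's: both realize $B_k(q)-B_k(p)$ (or its negation) as a sum of $k$ i.i.d.\ variables bounded in $[-1,1]$ with mean $\pm(q-p)$, and then invoke H{\oe}ffding with range $2$ per summand to land on the exponent $\tfrac{1}{2}k(q-p)^2$. The only cosmetic difference is the sign convention: the paper sets $Y_i = W_i - Z_i$ so that $\{B_k(p)\geq B_k(q)\}$ becomes an upper-tail event $\{\sum Y_i - \mu \geq (q-p)k\}$, matching the one-sided form of Theorem~\ref{thm:hoeffding_bound} as stated, whereas you take $Y_i = Z_i - W_i$ and bound a lower-tail event $\{X-\mu\leq -\mu\}$, which strictly requires applying the stated one-sided bound to $-X$ (or invoking the symmetric form of H{\oe}ffding). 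That is a trivial bookkeeping step and does not affect correctness.
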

\begin{proof}
    Let~$Y_i$, $i \in \{ 1,\ldots,k \}$ be i.i.d. random variables with
    \begin{equation*}
        Y_i =
        \begin{cases}
            1 & \text{w.p. } p(1-q), \\
            0 & \text{w.p. } p q + (1-p)(1-q), \\
            -1 & \text{w.p. } (1-p)q.
        \end{cases}
    \end{equation*}
    Then
    \begin{align*}
        \prob{k}{p}{\geq}{k}{q} = \bbP \pa{ \sum_{i=1}^k Y_i \geq 0 } = \bbP \pa{ \frac{1}{k} \sum_{i=1}^k \pa{Y_i - (p - q)} \geq (q - p) }.    \end{align*}
    Since each~$Y_i$ is bounded, and~$\bbE(Y_i) = (p - q)$, we can apply H{\oe}ffding's inequality (Theorem~\ref{thm:hoeffding_bound}) to get
    \begin{equation*}
        \prob{k}{p}{\geq}{k}{q} \leq \exp \pa{- \frac{2k^2 (q - p)^2}{4k}} =  \exp \pa{-\frac{1}{2} k (q - p)^2 }.
    \end{equation*}
\end{proof}
Lemma \ref{lem:topdog} is not particularly effective when $p$ and $q$ are close to each other. For such cases, we shall use the following lemma.
\begin{lemma} \label{lem:handcrafted}
    Let~$\lambda>0$. There exist~$\epsilon = \epsilon(\lambda)$ and $K = K(\lambda)$, s.t. for every~$p,q \in [1/2-\epsilon,1/2+\epsilon]$ with~$p<q$, and every~$k > K$,
    \begin{equation*}
        \prob{k}{p}{<}{k}{q} > \frac{1}{2} + \lambda \cdot (q-p) - \frac{1}{2}\bbP\pa{ B_k(p) = B_k(q)}.
    \end{equation*}
\end{lemma}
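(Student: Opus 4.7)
The starting point is an algebraic simplification. Using $\mathbb{P}(B_k(p)<B_k(q))+\mathbb{P}(B_k(p)>B_k(q))+\mathbb{P}(B_k(p)=B_k(q))=1$, the target inequality is equivalent to
\[
h(q) := \mathbb{P}(B_k(p) < B_k(q)) - \mathbb{P}(B_k(p) > B_k(q)) > 2\lambda(q-p).
\]
I treat $p$ as fixed, so that $h$ becomes a smooth function of $q$ with $h(p)=0$ by symmetry. The task therefore reduces to lower bounding $h(q)-h(p)$ for $q>p$ in the given range.

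I would split the argument into two regimes according to whether $q-p$ is large or small relative to $1/\sqrt{k}$. Fix a constant $C>0$ with $1-2e^{-C^2/2}\geq 1/2$ (for instance $C=2\sqrt{\log 2}$). In the \emph{large} regime $q-p\geq C/\sqrt{k}$, Lemma~\ref{lem:topdog} gives $\mathbb{P}(B_k(p)\geq B_k(q))\leq e^{-k(q-p)^2/2}\leq e^{-C^2/2}$, and consequently $h(q)\geq 1-2e^{-C^2/2}\geq 1/2$. On the other hand $2\lambda(q-p)\leq 4\lambda\epsilon\leq 1/2$ provided $\epsilon\leq 1/(8\lambda)$, so the desired inequality holds with this choice of $\epsilon$.

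In the \emph{small} regime $q-p<C/\sqrt{k}$, I would work with the derivative of $h$. Writing $h(q)=\sum_i\mathbb{P}(B_k(p)=i)\,[\mathbb{P}(B_k(q)>i)-\mathbb{P}(B_k(q)<i)]$, differentiating term by term, and using the elementary identity $\frac{d}{dq}\mathbb{P}(B_k(q)>i)=k\,\mathbb{P}(B_{k-1}(q)=i)$ (together with its symmetric counterpart for the lower tail, both obtained by telescoping the derivative of the binomial PMF), one obtains
\[
h'(r) \;=\; k\cdot \mathbb{P}\left(B_k(p)-B_{k-1}(r)\in\{0,1\}\right),
\]
with the two binomials independent. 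Let $Z:=B_k(p)-B_{k-1}(r)$. Its mean $\mu_Z=kp-(k-1)r$ lies within $C\sqrt{k}+1$ of $1/2$ throughout the parameter range, while its variance $\sigma_Z^2\approx k/2$, so $j\in\{0,1\}$ are only $O(1)$ standard deviations from $\mu_Z$. A local CLT estimate for sums of independent Bernoulli variables (derivable from Stirling's formula, uniform for success probabilities in a compact subinterval of $(0,1)$) then yields $\mathbb{P}(Z=j)\geq c'/\sqrt{k}$ for $j\in\{0,1\}$, with $c'=c'(C)>0$. Hence $h'(r)\geq c\sqrt{k}$ for some $c>0$ uniformly in $r\in[p,q]$, and integrating gives $h(q)\geq c\sqrt{k}(q-p)>2\lambda(q-p)$ as soon as $k\geq K:=(2\lambda/c)^2$.

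The main technical obstacle is the pointwise lower bound $\mathbb{P}(Z=j)\geq c'/\sqrt{k}$. The Berry-Esseen bound quoted in the paper only controls CDFs up to error $O(1/\sqrt{k})$, which matches the order of the probability being estimated and is therefore too coarse for this step: one needs a local rather than a global CLT. The required sharpening is a classical local CLT for binomials, obtainable via a direct Stirling expansion of $\binom{k-1}{i}r^i(1-r)^{k-1-i}$ and a convolution argument for $Z$, uniform for $p,r$ bounded away from $0$ and $1$. Once this pointwise estimate is established, taking $\epsilon:=1/(8\lambda)$ and $K$ to be the maximum of the thresholds arising in the two regimes yields the conclusion.
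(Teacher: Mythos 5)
Your proposal is correct, but it takes a genuinely different route from the paper. The paper also starts from the reduction to $\bbP\pa{B_k(p)<B_k(q)}-\bbP\pa{B_k(q)<B_k(p)}>2\lambda(q-p)$, but then conditions on the gap $d=|B_k(q)-B_k(p)|$ and writes the left-hand side as $\sum_d \bbP\pa{|B_k(q)-B_k(p)|=d}\cdot\frac{(q(1-p))^d-(p(1-q))^d}{(q(1-p))^d+(p(1-q))^d}$; it then uses monotonicity of the ratio in $d$, a limiting computation as $p,q\to 1/2$ (Claim~\ref{claim:lb_sequ}) to lower bound it at a fixed threshold $D=D(\lambda)$, and the fact that $\bbP\pa{|B_k(q)-B_k(p)|<D}\to 0$ as $k\to\infty$. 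That argument is entirely elementary (no CLT-type input) and this is precisely where $\epsilon(\lambda)$ and $K(\lambda)$ come from. You instead split on $q-p$ versus $1/\sqrt{k}$, dispatch the large-gap regime with H{\oe}ffding (the paper's Lemma~\ref{lem:topdog}), and in the small-gap regime use the exact derivative identity $h'(r)=k\,\bbP\pa{B_k(p)-B_{k-1}(r)\in\{0,1\}}$ together with a local limit estimate $\bbP\pa{Z=j}\geq c'/\sqrt{k}$, then integrate. Your identity is correct, and what your route buys is a quantitatively stronger conclusion: in the small-gap regime you get an advantage of order $\sqrt{k}\,(q-p)$, matching the converse bound of Lemma~\ref{lem:handcrafted_converse} up to constants, with $\lambda$ entering only through the choice of $K$; the paper's proof only delivers a constant-$\lambda$ advantage and needs $\epsilon$ to shrink as $\lambda$ grows. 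Two loose ends, neither fatal: (i) the pointwise bound you flag as the technical crux is indeed not covered by Berry-Esseen, but it is essentially the paper's own Observation~\ref{claim:proba_equal} (De Moivre--Laplace) applied to each binomial and combined by convolution over a window of width $\Theta(\sqrt{k})$ --- the same ingredient the paper uses in its coupling proof of Claim~\ref{claim:derivative}, which your derivative step closely parallels --- so invoking it is consistent with the paper's level of rigor; (ii) you should cap $\epsilon$, e.g.\ take $\epsilon=\min\{1/(8\lambda),1/6\}$, so that $p,q$ stay bounded away from $0$ and $1$ and the local estimate is uniform, which is a trivial adjustment since the lemma only asserts existence of some $\epsilon(\lambda)$.
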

\begin{proof}
    For every~$p,q \in [0,1]$, we have
    \begin{align*}
        \prob{k}{q}{<}{k}{p} &= \sum_{d=1}^k \bbP \pa{B_k(q) = B_k(p) - d} \\
        &= \sum_{d=1}^k  \bbP \pa{|B_k(q) - B_k(p)| = d} \cdot \frac{\bbP \pa{B_k(q) = B_k(p) - d}}{\bbP \pa{|B_k(q) - B_k(p)| = d}} \\
        &= \sum_{d=1}^k  \bbP \pa{|B_k(q) - B_k(p)| = d} \cdot \frac{\bbP \pa{B_k(q) = B_k(p) - d}}{\bbP \pa{B_k(q) = B_k(p) - d}+\bbP \pa{B_k(p) = B_k(q) - d}},
    \end{align*}
    so
    \begin{multline} \label{eq:conditioning}
        \prob{k}{p}{<}{k}{q} - \prob{k}{q}{<}{k}{p} \\= \sum_{d=1}^k  \bbP \pa{|B_k(q) - B_k(p)| = d} \cdot \frac{\bbP \pa{B_k(p) = B_k(q) - d} - \bbP \pa{B_k(q) = B_k(p) - d}}{\bbP \pa{B_k(p) = B_k(q) - d}+\bbP \pa{B_k(q) = B_k(p) - d}}.
    \end{multline}
    Let us compute~$\bbP \pa{B_k(q) = B_k(p) - d}$:
    \begin{align*}
        \bbP \pa{B_k(q) = B_k(p) - d} &= \sum_{i=0}^{k-d} \bbP \pa{B_k(q) = i} \cdot \bbP \pa{B_k(p) = i+d} \\
        &= \sum_{i=0}^{k-d} \binom{k}{i} \binom{k}{i+d} q^i (1-q)^{k-i} p^{i+d} (1-p)^{k-i-d} \\
        &= \pa{ p(1-q) }^d \sum_{i=0}^{k-d} \binom{k}{i} \binom{k}{i+d} (qp)^i ((1-q)(1-p))^{k-i-d} \\
        &:= \pa{ p(1-q) }^d A_{k,d,p,q},
    \end{align*}
    where
    \begin{equation*}
        A_{k,d,p,q} := \sum_{i=0}^{k-d} \binom{k}{i} \binom{k}{i+d} (qp)^i  ((1-q)(1-p))^{k-i-d}.
    \end{equation*}
    Since~$A_{k,d,p,q}$ is symmetric w.r.t.~$p,q$, i.e., $A_{k,d,p,q} = A_{k,d,q,p}$, we can simplify Eq.~\eqref{eq:conditioning} as
    \begin{equation} \label{eq:conditioning_simp}
        \bbP \pa{B_k(p) < B_k(q)} - \bbP \pa{B_k(q) < B_k(p)} = \sum_{d=1}^k  \bbP \pa{|B_k(q) - B_k(p)| = d} \cdot \frac{\pa{ q(1-p) }^d - \pa{ p(1-q) }^d}{\pa{ q(1-p) }^d + \pa{ p(1-q) }^d}.
    \end{equation}
    Intuitively, the quantity
    \begin{equation*}
        \frac{\pa{ q(1-p) }^d - \pa{ p(1-q) }^d}{\pa{ q(1-p) }^d + \pa{ p(1-q) }^d}
    \end{equation*}
    can be seen as the ``advantage'' given by playing with the better coin ($q$) in a $k$-coin-tossing contest, knowing that one coin hit ``head'' $d$ times more than the other. Before we continue, we need the following simple claim. 
    \begin{claim} \label{claim:incr_sequ}
        For every~$a,b \in [0,1]$ with~$a>b$, the sequence
        \begin{equation*}
            \pa{\frac{a^n - b^n}{a^n + b^n} }, n \in \bbN
        \end{equation*}
        is increasing in~$n$.
    \end{claim}
    \begin{proof}
        Rewrite
        \begin{equation*}
            \frac{a^n - b^n}{a^n + b^n} = \frac{2a^n}{a^n+b^n}-1 = 2\cdot\frac{1}{1+(b/a)^n}-1,
        \end{equation*}
        and notice that, since~$a>b$, $\pa{(b/a)^n}$,~$n \in \bbN$ is a decreasing sequence.
    \end{proof}
    \begin{claim} \label{claim:lb_sequ}
        Let~$0 < \gamma < 1$ and~$d\in \bbN$. There exists~$\epsilon = \epsilon(\gamma,d)$, such that for every~$p,q \in [1/2-\epsilon,1/2+\epsilon]$ with~$p<q$,
        \begin{equation*}
            \frac{\pa{ q(1-p) }^d - \pa{ p(1-q) }^d}{\pa{ q(1-p) }^d + \pa{ p(1-q) }^d} > (q-p)\cdot 2d\gamma.
        \end{equation*}
    \end{claim}
    \begin{proof}
        First, by a telescopic argument:
        \begin{align*}
            \pa{ q(1-p) }^d - \pa{p(1-q) }^d &= \pa{q(1-p) - p(1-q)} \sum_{i=0}^{d-1} \pa{ q(1-p) }^{d-1-i} \cdot \pa{ p(1-q) }^i \\
            &= \pa{q-p} \sum_{i=0}^{d-1} \pa{ q(1-p) }^{d-1-i} \cdot \pa{ p(1-q) }^i.
        \end{align*}
        Note that
        \begin{equation*}
            \lim_{p,q \rightarrow 1/2} \sum_{i=0}^{d-1} \pa{ q(1-p) }^{d-1-i} \cdot \pa{ p(1-q) }^i = \sum_{i=0}^{d-1} \pa{ \frac{1}{4} }^{d-1-i} \cdot \pa{ \frac{1}{4} }^i = \sum_{i=0}^{d-1} \pa{\frac{1}{2}}^{2d-2} = d\cdot \pa{\frac{1}{2}}^{2d-2},
        \end{equation*}
        and that
        \begin{equation*}
            \lim_{p,q \rightarrow 1/2} \pa{ q(1-p) }^d + \pa{ p(1-q) }^d = \pa{ \frac{1}{4} }^d + \pa{ \frac{1}{4} }^d = \pa{\frac{1}{2}}^{2d-1}.
        \end{equation*}
        Hence, since~$\gamma<1$, and provided that~$p,q$ are close enough to~$1/2$, we obtain 
        \begin{equation*}
            \frac{\pa{ q(1-p) }^d - \pa{ p(1-q) }^d}{\pa{ q(1-p) }^d + \pa{ p(1-q) }^d} > (q-p)\cdot 2d\gamma,
        \end{equation*}
        which completes the proof of Claim \ref{claim:lb_sequ}.
    \end{proof}
    Next, let~$\lambda>0$ as in the Lemma's statement, and let~$\lambda' = \lambda+1$.
    Denote~$D = \lceil \lambda' \rceil+1 > \lambda'$ and~$\gamma = \lambda'/D < 1$. By Claim~\ref{claim:lb_sequ}, there exists~$\epsilon = \epsilon(\gamma,D) = \epsilon(\lambda)$, s.t. for~$p,q \in [1/2-\epsilon,1/2+\epsilon]$,
    \begin{equation} \label{eq:apply_previous_claim}
        \frac{\pa{ q(1-p) }^D - \pa{ p(1-q) }^D}{\pa{ q(1-p) }^D + \pa{ p(1-q) }^D} > (q-p)\cdot 2\lambda'.
    \end{equation}
    Now we derive a lower bound on Eq.~\eqref{eq:conditioning_simp}:
    \begin{align*}
        \bbP \pa{B_k(p) < B_k(q)} - \bbP \pa{B_k(q) < B_k(p)} &= \sum_{d=1}^k  \bbP \pa{|B_k(q) - B_k(p)| = d} \cdot \frac{\pa{ q(1-p) }^d - \pa{ p(1-q) }^d}{\pa{ q(1-p) }^d + \pa{ p(1-q) }^d} & \text{(Eq.~\eqref{eq:conditioning_simp})}\\
        &\geq \sum_{d=D}^k  \bbP \pa{|B_k(q) - B_k(p)| = d} \cdot \frac{\pa{ q(1-p) }^d - \pa{ p(1-q) }^d}{\pa{ q(1-p) }^d + \pa{ p(1-q) }^d} & \\
        &\geq \sum_{d=D}^k  \bbP \pa{|B_k(q) - B_k(p)| = d} \cdot \frac{\pa{ q(1-p) }^D - \pa{ p(1-q) }^D}{\pa{ q(1-p) }^D + \pa{ p(1-q) }^D} & \text{(by Claim~\ref{claim:incr_sequ})}\\
        &> (q-p)\cdot 2\lambda' \sum_{d=D}^k  \bbP \pa{|B_k(q) - B_k(p)| = d} & \text{(by Eq.~\eqref{eq:apply_previous_claim})}\\
        &= (q-p)\cdot 2\lambda' \cdot \pa{ 1 - \bbP \pa{|B_k(q) - B_k(p)| < D} }. &
    \end{align*}
    Since~$\lambda'>\lambda$, and since~$\bbP \pa{|B_k(q) - B_k(p)| < D}$ tends to~$0$ as~$k$ tends to~$+\infty$, there exists~$K = K(\lambda)$ s.t. for all~$k>K$,
    \begin{equation} \label{eq:final_eq}
        \bbP \pa{B_k(p) < B_k(q)} - \bbP \pa{B_k(q) < B_k(p)} > (q-p)\cdot 2\lambda.
    \end{equation}
    Eventually, we write
    \begin{align*}
        \bbP \pa{B_k(p) < B_k(q)} &= 1 - \bbP \pa{B_k(q) < B_k(p)} - \bbP\pa{ B_k(p) = B_k(q) } &\\
        &> 1 - \bbP \pa{B_k(p) < B_k(q)} + 2\lambda(q-p) - \bbP\pa{ B_k(p) = B_k(q)}. & \text{(by Eq.~\eqref{eq:final_eq})}
    \end{align*}
    Hence,
    \begin{equation*}
        \bbP \pa{B_k(p) < B_k(q)} > \frac{1}{2} + \lambda(q-p) - \frac{1}{2}\bbP\pa{ B_k(p) = B_k(q)},
    \end{equation*}
    which concludes the proof of Lemma \ref{lem:handcrafted}.
\end{proof}

\subsubsection{Lower bounds on the probability that the worse coin wins} \label{app:underdog}

We now deal with the opposite problem, that is, to lower bound the probability that the underdog coin wins. Formally,
\begin{lemma} \label{lem:underdog}
    For every~$p,q \in [0,1]$ s.t.~$p < q$ and every integer~$k$, we have
    \begin{equation*}
        \bbP \pa{ B_k(p) > B_k(q) } \geq 1-\Phi\pa{\frac{\sqrt{k} (q-p)}{\sigma}} - \frac{C}{\sigma \sqrt{k}},
    \end{equation*}
    where~$C = 0.4748$ and~$\sigma = \sqrt{p(1-p) + q(1-q)}$.
\end{lemma}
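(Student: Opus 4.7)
The plan is to express $B_k(p) - B_k(q)$ as a sum of i.i.d.\ bounded random variables and apply the Berry-Esseen theorem (Theorem~\ref{thm:berry_esseen}). I would introduce mutually independent Bernoulli families $(X_i)_{i \leq k}$ and $(Z_i)_{i \leq k}$ with parameters $p$ and $q$ respectively, and set $Y_i = X_i - Z_i \in \{-1,0,1\}$. A direct computation yields $\bbE Y_i = p-q$ and $\var(Y_i) = p(1-p) + q(1-q) = \sigma^2$, and then
\[
\bbP(B_k(p) > B_k(q)) = \bbP\!\left(\sum_{i=1}^k Y_i > 0\right) = \bbP\!\left( \frac{1}{\sigma\sqrt{k}} \sum_{i=1}^k (Y_i - \bbE Y_i) > \frac{\sqrt{k}\,(q-p)}{\sigma} \right).
\]

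Next I would invoke Berry-Esseen on the centered variables $\tilde Y_i := Y_i - (p-q)$. If $F$ denotes the c.d.f.\ of $\frac{1}{\sigma \sqrt k}\sum_i \tilde Y_i$ and $\rho := \bbE|\tilde Y_1|^3$, Theorem~\ref{thm:berry_esseen} yields $F(x) \leq \Phi(x) + C\rho/(\sigma^3 \sqrt k)$. Taking complements at $x = \sqrt{k}(q-p)/\sigma$ gives
\[
\bbP(B_k(p) > B_k(q)) \geq 1 - \Phi\!\left(\frac{\sqrt{k}(q-p)}{\sigma}\right) - \frac{C \rho}{\sigma^3 \sqrt{k}}.
\]

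The remaining step --- and the main technical obstacle --- is to prove $\rho \leq \sigma^2$, so that the error term collapses to the announced $C/(\sigma \sqrt{k})$. This is nontrivial because the crude bound $|\tilde Y_i| \leq 1 + |p-q|$ only delivers $\rho \leq (1 + |p-q|)\sigma^2$, which is not tight enough. I would carry out a direct algebraic expansion: writing $a = p(1-q)$, $b = (1-p)q$, $c = pq + (1-p)(1-q)$ and $\epsilon = q-p$, and using $a-b = -\epsilon$ together with $a+b = 1-c$, a short calculation yields
\[
\rho - \sigma^2 \;=\; \epsilon^2 \bigl( 1 - \epsilon^2 - c(3 - \epsilon) \bigr).
\]
The target therefore reduces to the inequality $c \geq (1 - \epsilon^2)/(3 - \epsilon)$. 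To verify this over the admissible region, I reparametrize with $s = p + q$, so that $c = 1 - s + (s^2 - \epsilon^2)/2$, and observe that $c$ is a convex function of $s \in [\epsilon, 2-\epsilon]$ minimized at $s = 1$, where it equals $(1 - \epsilon^2)/2$. The required inequality then reduces to $(1-\epsilon^2)/2 \geq (1-\epsilon^2)/(3-\epsilon)$, i.e.\ $3 - \epsilon \geq 2$, which holds since $\epsilon \leq 1$. Combining this with the Berry-Esseen step closes the proof.
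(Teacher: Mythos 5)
Your argument follows the paper's route exactly: the same signed-difference decomposition $Y_i = X_i - Z_i$, the same reduction to a tail event for the normalized sum, the same appeal to Berry--Esseen, and the same reduction of the error term to the inequality $\rho \leq \sigma^2$. The one place you go beyond the paper is that the paper simply asserts ``$\rho < \sigma^2$'' without justification, whereas you actually prove it. Your algebra checks out: with $\sigma^2 = (a+b) - \epsilon^2 = (1-c) - \epsilon^2$ and $\rho = a(1+\epsilon)^3 + b(1-\epsilon)^3 + c\epsilon^3$, expanding and using $a - b = -\epsilon$, $a+b = 1-c$ does give $\rho - \sigma^2 = \epsilon^2\bigl(1 - \epsilon^2 - c(3-\epsilon)\bigr)$, and the reparametrization $c = 1 - s + (s^2-\epsilon^2)/2$ with $s \in [\epsilon, 2-\epsilon]$, minimum $(1-\epsilon^2)/2$ at $s=1$, together with $3-\epsilon \geq 2$, closes the inequality. (Implicitly one needs $\sigma > 0$, i.e.\ $(p,q)\neq(0,1)$, for the statement to be meaningful and for Berry--Esseen to apply; you can take this for granted.) So this is the paper's proof, but with the missing verification of $\rho \leq \sigma^2$ supplied.
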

\begin{proof}
    Let~$Y_i$, $i \in \{ 1,\ldots,k \}$ be i.i.d. random variables with
    \begin{equation*}
        Y_i =
        \begin{cases}
            1 & \text{w.p. } p(1-q), \\
            0 & \text{w.p. } p q + (1-p)(1-q), \\
            -1 & \text{w.p. } (1-p)q.
        \end{cases}
    \end{equation*}
    Let~$\mu = \bbE(Y_1) = (p-q)$, $\sigma = \sqrt{\var(Y_1)} = \sqrt{p(1-p) + q(1-q)}$, and~$ \rho = \bbE(|Y_1-\mu|^3)$.
    We have
    \begin{align*}
        \bbP \pa{ B_k(p) > B_k(q) } = \bbP \pa{ \sum_{i=1}^k Y_i > 0 } &= \bbP \pa{ \frac{1}{\sqrt{k}} \sum_{i=1}^k \pa{Y_i - (p-q)} > \sqrt{k} (q-p)} \\
        &= \bbP \pa{ \frac{1}{\sigma \sqrt{k}} \sum_{i=1}^k \pa{Y_i - (p-q)} > \frac{\sqrt{k} (q-p)}{\sigma}} \\
        &= \bbP \pa{ Z > \frac{\sqrt{k} (q-p)}{\sigma}},
    \end{align*}
    where
    \begin{equation*}
        Z = \frac{1}{\sigma \sqrt{k}} \sum_{i=1}^k \pa{Y_i - (p-q)}.
    \end{equation*}
    By the Berry-Esseen theorem (Theorem~\ref{thm:berry_esseen}),
    \begin{equation*}
        \left| \bbP \pa{ Z < \frac{\sqrt{k} (q-p)}{\sigma}} - \Phi \pa{\frac{\sqrt{k} (q-p)}{\sigma}} \right| < \frac{C\rho}{\sigma^3 \sqrt{k}},
    \end{equation*}
    implying that
    \begin{equation*}
        \left| \left( 1 - \Phi \pa{\frac{\sqrt{k} (q-p)}{\sigma}} \right)-\bbP \pa{ Z > \frac{\sqrt{k} (q-p)}{\sigma}} \right| < \frac{C\rho}{\sigma^3 \sqrt{k}},
    \end{equation*}
    and so
    \begin{equation*}
        \bbP \pa{ Z > \frac{\sqrt{k} (q-p)}{\sigma}} > 1-\Phi\pa{\frac{\sqrt{k} (q-p)}{\sigma}} - \frac{C\rho}{\sigma^3 \sqrt{k}},
    \end{equation*}
    where, e.g., $C = 0.4748$. 
    Since~$\rho < \sigma^2$, we end up with
    \begin{equation*}
        \bbP \pa{ B_k(p) > B_k(q) } \geq 1-\Phi\pa{\frac{\sqrt{k} (q-p)}{\sigma}} - \frac{C}{\sigma \sqrt{k}},
    \end{equation*}
    which concludes the proof.
\end{proof}
Just as Lemma~\ref{lem:handcrafted} was a version of Lemma~\ref{lem:topdog} optimized for cases where~$p$ and~$q$ are close to each other, Lemma~\ref{lem:handcrafted_converse} (stated in Section~\ref{sec:lem:B1}) complements Lemma~\ref{lem:underdog} in such situations.
\begin{proof} [Proof of Lemma~\ref{lem:handcrafted_converse}]
    Recall that (see Eq.~\eqref{eq:conditioning_simp} in the proof of Lemma~\ref{lem:handcrafted}):
    \begin{equation} \label{eq:conditioning_simp_recall}
        \bbP \pa{B_k(p) < B_k(q)} - \bbP \pa{B_k(q) < B_k(p)} = \sum_{d=1}^k  \bbP \pa{|B_k(q) - B_k(p)| = d} \cdot \frac{\pa{ q(1-p) }^d - \pa{ p(1-q) }^d}{\pa{ q(1-p) }^d + \pa{ p(1-q) }^d}.
    \end{equation}
    The following claim is analogous to Claim~\ref{claim:lb_sequ}, but this time we are looking for an upper bound (instead of a lower bound) on the same quantity.
    \begin{claim} \label{claim:ub_sequ}
        There exists a constant~$\alpha>1$, s.t. for every integer~$k$, every~$p,q \in [1/3,2/3]$ with~$p<q$, and all~$d\in \bbN$,
        \begin{equation*}
            \frac{\pa{ q(1-p) }^d - \pa{ p(1-q) }^d}{\pa{ q(1-p) }^d + \pa{ p(1-q) }^d} < \alpha d \cdot (q-p).
        \end{equation*}
    \end{claim}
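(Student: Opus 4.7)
The plan is to factor both numerator and denominator in a way that exposes a $\tanh$ structure, and then apply two standard one-line inequalities. Write $a = q(1-p)$ and $b = p(1-q)$. The key identity is $a - b = q - p$, and since $p < q$ we have $a > b > 0$, so $r := a/b > 1$. Dividing numerator and denominator of the left-hand side by $b^d$ gives
\begin{equation*}
    \frac{(q(1-p))^d - (p(1-q))^d}{(q(1-p))^d + (p(1-q))^d} \;=\; \frac{r^d - 1}{r^d + 1}.
\end{equation*}

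Next, I would observe that, writing $r^{d/2} = e^{(d/2)\ln r}$,
\begin{equation*}
    \frac{r^d - 1}{r^d + 1} \;=\; \frac{r^{d/2} - r^{-d/2}}{r^{d/2} + r^{-d/2}} \;=\; \tanh\!\left( \frac{d \ln r}{2} \right).
\end{equation*}
Now I would apply two standard inequalities: first, $\tanh(x) \leq x$ for all $x \geq 0$ (since $\tanh(0) = 0$ and $\tanh'(x) = 1 - \tanh^2(x) \leq 1$); second, $\ln(1+u) \leq u$ for $u \geq 0$, applied with $u = r - 1$. This yields
\begin{equation*}
    \frac{r^d - 1}{r^d + 1} \;\leq\; \frac{d \ln r}{2} \;\leq\; \frac{d (r-1)}{2}.
\end{equation*}

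Finally, I would translate $r - 1$ back into $q - p$. Since $r - 1 = (a-b)/b = (q-p)/b$, this reduces to lower-bounding $b = p(1-q)$ on the region $p, q \in [1/3, 2/3]$. A direct check shows that $\min_{p,q \in [1/3,2/3]} p(1-q) = (1/3)\cdot(1/3) = 1/9$, so $b \geq 1/9$ and therefore $r - 1 \leq 9(q-p)$. Combining,
\begin{equation*}
    \frac{(q(1-p))^d - (p(1-q))^d}{(q(1-p))^d + (p(1-q))^d} \;\leq\; \frac{9}{2}\, d\,(q-p),
\end{equation*}
so the claim holds with $\alpha = 9/2 > 1$. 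There is no serious obstacle here; the only conceptual step is recognizing the $\tanh$ reformulation, which lets $\tanh(x) \leq x$ absorb all the $d$-dependence into the single factor $d \ln r / 2$, avoiding a messier case split on whether $r^d$ is close to $1$ or large.
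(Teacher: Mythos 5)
Your proof is correct, but it takes a genuinely different route from the paper. You set $a=q(1-p)$, $b=p(1-q)$, use $a-b=q-p$, and rewrite the left-hand side as $\frac{r^d-1}{r^d+1}=\tanh\pa{\frac{d\ln r}{2}}$ with $r=a/b$, after which the two elementary bounds $\tanh x\leq x$ and $\ln r\leq r-1$, plus the lower bound $b=p(1-q)\geq 1/9$ on the square $[1/3,2/3]^2$, give the claim with $\alpha=9/2$. The paper instead stays purely algebraic: it factors $a^d-b^d=(a-b)\sum_{i=0}^{d-1}a^{d-1-i}b^i$, bounds the sum by $d\,a^{d-1}$, and absorbs the missing factor via $1/(q(1-p))\leq 9$, yielding $\alpha=9$. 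Both arguments hinge on the same two structural facts --- the identity $q(1-p)-p(1-q)=q-p$ and a $1/9$ lower bound on one of the two products coming from $p,q\in[1/3,2/3]$ --- so they are equally elementary and equally uniform in $d$ and $k$; your $\tanh$ reformulation packages the $d$-dependence into a single concavity inequality and even gives a slightly better constant, while the paper's telescoping is perhaps more self-contained (no transcendental functions) and reuses the same factorization already employed in the matching lower bound (Claim~\ref{claim:lb_sequ}). One cosmetic point: your final display has ``$\leq$'' while the claim asserts ``$<$'', but since $\tanh x<x$ for $x>0$, strictness does hold for every $d\geq 1$ and $p<q$, which is all that is used (at $d=0$ both sides vanish, an edge case the paper's statement technically shares and its application for $d\in\{1,\ldots,k\}$ never touches).
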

    \begin{proof}
        First, we use a well-known identity:
        \begin{align*}
            \pa{ q(1-p) }^d - \pa{ p(1-q) }^d &= \pa{q(1-p) - p(1-q)} \sum_{i=0}^{d-1} \pa{ q(1-p) }^{d-1-i} \cdot \pa{ p(1-q) }^i \\
            &= \pa{q-p} \sum_{i=0}^{d-1} \pa{ q(1-p) }^{d-1-i} \cdot \pa{ p(1-q) }^i \\
            &\leq d\cdot (q-p) \pa{ q(1-p) }^{d-1} \\
            &\leq \alpha d\cdot (q-p) \pa{ q(1-p) }^d,
        \end{align*}
        where~$\alpha$ is any upper bound on~$1/(q(1-p))$, e.g., $\alpha = 9$. Hence,
        \begin{equation*}
            \frac{\pa{ q(1-p) }^d - \pa{ p(1-q) }^d}{\pa{ q(1-p) }^d + \pa{ p(1-q) }^d} \leq \alpha d\cdot (q-p) \cdot \frac{\pa{ q(1-p) }^d}{\pa{ q(1-p) }^d + \pa{ p(1-q) }^d} \leq \alpha d \cdot (q-p),
        \end{equation*}
        which concludes the proof of Claim~\ref{claim:ub_sequ}.
    \end{proof}
    Using Claim~\ref{claim:ub_sequ} on Eq.~\eqref{eq:conditioning_simp_recall}, we obtain
    \begin{equation} \label{eq:conditioning_ub}
        \bbP \pa{B_k(p) < B_k(q)} - \bbP \pa{B_k(q) < B_k(p)} \leq \alpha \cdot(q-p) \sum_{d=1}^k  d \cdot \bbP \pa{|B_k(q) - B_k(p)| = d} .
    \end{equation}
    \begin{claim} \label{lem:difference_binomial}
        For every~$p,q \in [1/3,2/3]$ with~$p<q$, and every integer~$k$,
        \begin{equation*}
            \bbE \pa{ |B_k(p) - B_k(q)| } \leq \sqrt{2kq(1-q)} + k\cdot(q-p).
        \end{equation*}
    \end{claim}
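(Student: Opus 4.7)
The plan is to combine the triangle inequality with Jensen's inequality $\bbE|Z| \leq \sqrt{\bbE(Z^2)}$. Writing $X := B_k(p)$ and $Y := B_k(q)$ as independent binomials, I would decompose
\[
|X - Y| \leq \bigl|(X - Y) - k(p - q)\bigr| + k(q - p),
\]
since $p < q$. Taking expectations and noting that $X - Y$ has mean $k(p - q)$ and variance $\var(X) + \var(Y) = k p(1-p) + k q(1-q)$ by independence, Jensen applied to the centered variable yields
\[
\bbE|X - Y| \leq \sqrt{k\bigl(p(1-p) + q(1-q)\bigr)} + k(q - p).
\]

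To reach the stated bound, I would then replace $p(1-p) + q(1-q)$ by $2q(1-q)$ inside the square root. The cleanest case is $p < q \leq 1/2$: monotonicity of $x \mapsto x(1-x)$ on $[0,1/2]$ gives $p(1-p) \leq q(1-q)$ directly. The case $p \leq 1/2 < q$ with $p + q \leq 1$ reduces similarly via $|p - 1/2| \geq |q - 1/2|$. For the remaining configurations inside $[1/3, 2/3]$, I would use the Lipschitz-type identity
\[
|p(1-p) - q(1-q)| = (q - p)\,|1 - p - q| \leq q - p,
\]
together with $\sqrt{a + b} \leq \sqrt{a} + \sqrt{b}$, to pull out a small residual and absorb it into the linear term $k(q-p)$ already present.

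The main obstacle is the non-monotonicity of $x(1-x)$ across $1/2$: the inequality $p(1-p) + q(1-q) \leq 2 q(1-q)$ can genuinely fail when both $p$ and $q$ lie strictly above $1/2$. The resolution is either the Lipschitz correction sketched above, or to restrict attention to the regime in which this bound is actually invoked, namely Lemma~\ref{lem:handcrafted_converse}, where $q - p \leq 1/\sqrt{k}$ and $p, q$ are both within $O(1/\sqrt{\ell})$ of $1/2$; in that regime the gap $p(1-p) - q(1-q) = (p - q)(1 - p - q)$ is negligible compared to $q(1-q) \approx 1/4$, and the bound follows immediately.
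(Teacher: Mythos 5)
Your first step is fine: with $X=B_k(p)$ and $Y=B_k(q)$ independent, the triangle inequality and Jensen give $\bbE|X-Y| \le \sqrt{k\pa{p(1-p)+q(1-q)}} + k(q-p)$. The gap is in the last step, exactly where you anticipate trouble. When $p+q>1$ (both points may lie in $(1/2,2/3]$ under the hypothesis) one has $p(1-p)-q(1-q)=(q-p)(p+q-1)>0$, and your Lipschitz correction, after splitting the square root, leaves the residual $\sqrt{k(q-p)(p+q-1)}$. This residual cannot be ``absorbed into the linear term $k(q-p)$ already present'': that term is already fully spent accounting for the mean $|\bbE(X-Y)|=k(q-p)$, so your derivation ends with $\sqrt{2kq(1-q)}+\sqrt{k(q-p)(p+q-1)}+k(q-p)$, strictly larger than the claimed bound whenever $p+q>1$, with no slack left to hide the extra term. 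The fallback you offer does not rescue the claim as stated either: the claim is asserted for every $k$ and every $p<q$ in $[1/3,2/3]$, and Lemma~\ref{lem:handcrafted_converse} assumes nothing more (in its application through Claim~\ref{claim:for_clarity} the points may be a constant $4\delta$, not $O(1/\sqrt{\ell})$, away from $1/2$); moreover ``negligible compared to $1/4$'' is an asymptotic statement that never yields the exact inequality $p(1-p)\le q(1-q)$, which is simply false above $1/2$, so at best you would prove a weaker bound and would then need to re-verify the downstream estimate $\bbE|B_k(q)-B_k(p)|\le 2\sqrt{k}$ with the extra term --- a check you do not carry out.

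The paper sidesteps the issue by a coupling rather than a direct variance computation: it realizes $B_k(p)$ as $X^{(2)}-Z$ with $X^{(2)}\sim\calB_k(q)$ and $Z\sim\calB_k(q-p)$, so that $\bbE|B_k(p)-B_k(q)|\le\bbE|X^{(1)}-X^{(2)}|+\bbE(Z)$ with both $X^{(1)},X^{(2)}\sim\calB_k(q)$; Jensen then produces $\sqrt{2kq(1-q)}$ and the quantity $p(1-p)$ never appears. If you prefer to stay within your framework, the fix is to apply Jensen before centering: $\bbE|X-Y|\le\sqrt{\bbE\pa{(X-Y)^2}}=\sqrt{k\pa{p(1-p)+q(1-q)}+k^2(q-p)^2}$, and then compare squares with the claimed bound. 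Squaring $\sqrt{2kq(1-q)}+k(q-p)$ produces the cross term $2k(q-p)\sqrt{2kq(1-q)}\ge \tfrac{4}{3}\,k(q-p)$ (since $q(1-q)\ge 2/9$ on $[1/3,2/3]$ and $k\ge 1$), which dominates the deficit $k(q-p)(p+q-1)\le\tfrac{1}{3}\,k(q-p)$; this proves the claim for every $k$, in the spirit of your computation but without the unabsorbable residual.
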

    \begin{proof}
        For~$i\in \{ 1,\ldots,k \}$, let~$X_i^{(1)},X_i^{(2)} \sim \calB(q)$ and~$Y_i \sim \calB(1-p/q)$ be independent random variables. Let
        \begin{equation*}
            X^{(1)} = \sum_{i=1}^k X_i^{(1)}, \qquad
            X^{(2)} = \sum_{i=1}^k X_i^{(2)}, \qquad
            Z = \sum_{i=1}^k X_i^{(2)} \cdot Y_i, \qquad
            \text{and}~\tilde{X}^{(2)} = \sum_{i=1}^k X_i^{(2)} \cdot (1-Y_i) = X^{(2)}-Z.
        \end{equation*}
        Clearly, $X^{(1)} \sim \calB_k(q)$ and~$X^{(2)} \sim \calB_k(q)$. Since for every~$i$,
        \begin{equation*}
            X_i^{(2)} \cdot (1-Y_i) = \begin{cases}
                1 & \text{if } X_i^{(2)} = 1 \text{ and } Y_i = 0, \\
                0 & \text{otherwise,}
            \end{cases} 
        \end{equation*}
         we obtain that~$\tilde{X}^{(2)} \sim \calB_k(q\cdot(1- (1-p/q))) = \calB_k(p)$.
        Similarly, for every~$i$,
        \begin{equation*}
            X_i^{(2)} \cdot Y_i = \begin{cases}
                1 & \text{if } X_i^{(2)} = 1 \text{ and } Y_i = 1, \\
                0 & \text{otherwise,}
            \end{cases} 
        \end{equation*}
        hence, we obtain that~$Z \sim \calB_k(q\cdot(1-p/q)) = \calB_k(q-p)$.
        We notice that~$(X^{(1)},X^{(2)})$ are independent, as well as~$(X^{(1)},\tilde{X}^{(2)})$. Hence
        \begin{align*}
            \bbE \pa{ |B_k(q) - B_k(p)| } &= \bbE \pa{ |X^{(1)} - \tilde{X}^{(2)}| } & \\
            &= \bbE \pa{ |X^{(1)} - X^{(2)} + Z| } &\\
            &\leq \bbE \pa{ |X^{(1)} - X^{(2)}| + Z } &\\
            &= \bbE \pa{ |X^{(1)} - X^{(2)}| } + \bbE(Z).&
        \end{align*}
        We have~$\bbE(Z) = k(q-p)$, and
        \begin{align*}
            \bbE \pa{ |X^{(1)} - X^{(2)}| } &= \bbE \pa{\sqrt{ \pa{X^{(1)} - X^{(2)}}^2 }} & \\
            &\leq \sqrt{ \bbE \pa{ \pa{X^{(1)} - X^{(2)}}^2 }} & \text{(by Jensen inequality, and the fact that $g(x)=x^2$ is convex)} \\
            &= \sqrt{\var\pa{X^{(1)} - X^{(2)}}} & \text{(since~$\bbE\pa{X^{(1)} - X^{(2)}}=0$)} \\
            &= \sqrt{2kq(1-q)}, & \text{($X^{(1)},X^{(2)}\sim\calB_k(q)$ and are independent)}.
        \end{align*}
        which concludes the proof of Claim~\ref{lem:difference_binomial}.
    \end{proof}
    We note that
    \begin{align*}
        \sum_{d=1}^k  d \cdot \bbP \pa{|B_k(q) - B_k(p)| = d} &= \bbE\pa{ |B_k(q) - B_k(p)| } &\\
        &\leq \sqrt{2kq(1-q)} + k\cdot(q-p) & (\text{by Claim~\ref{lem:difference_binomial})} \\
        &\leq \sqrt{2kq(1-q)} + \sqrt{k} & \text{(since~$q-p \leq 1/ \sqrt{k}$)} \\
        &\leq 2\sqrt{k}. &
    \end{align*}
    Eventually, Eq.~\eqref{eq:conditioning_ub} becomes
    \begin{equation} \label{eq:conditioning_ub_final}
        \bbP \pa{B_k(p) < B_k(q)} - \bbP \pa{B_k(q) < B_k(p)} \leq 2\alpha \cdot(q-p) \sqrt{k}.
    \end{equation}
    To conclude, we write
    \begin{align*}
        \bbP \pa{B_k(p) < B_k(q)} &= 1 - \bbP \pa{B_k(q) < B_k(p)} - \bbP\pa{ B_k(p) = B_k(q) } &\\
        &< 1 - \bbP \pa{B_k(p) < B_k(q)} + 2\alpha \cdot(q-p) \sqrt{k} - \bbP\pa{ B_k(p) = B_k(q)}. & \text{(by Eq.~\eqref{eq:conditioning_ub_final})}
    \end{align*}
    Hence,
    \begin{equation*}
        \bbP \pa{B_k(p) < B_k(q)} < \frac{1}{2} + \alpha \cdot(q-p) \sqrt{k} - \frac{1}{2}\bbP\pa{ B_k(p) = B_k(q)},
    \end{equation*}
    which concludes the proof of Lemma \ref{lem:handcrafted_converse}.
\end{proof}

\subsection{Proof of Claim~\ref{claim:last_minute_obstacle}} \label{app:last_minute_obstacle}

    The goal of this section is to prove Claim~\ref{claim:last_minute_obstacle}, whose statement can be found in Section~\ref{sec:lem:B1}. For this purpose, the key observation is that the derivative, w.r.t.~$x$, of~$\prob{k}{x}{>}{k}{p}$ in the neighborhood of~$p$ is relatively high. The following claim formalizes this idea.
    
    \begin{claim} \label{claim:derivative}
        There exists a constant~$\beta' > 0$ such that for every~$k$ large enough,
        and every~$p,x \in [1/3,2/3]$ satisfying~$p \leq x \leq p + 1/ \sqrt{k}$,
        \begin{equation*}
            \frac{d}{dx} \prob{k}{x}{>}{k}{p} \geq \beta' \cdot \sqrt{k}.
        \end{equation*}
    \end{claim}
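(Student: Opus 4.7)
The plan is to reformulate the derivative as an explicit probability concerning two independent binomials, and then use anti-concentration to lower-bound that probability.

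First, I would invoke the standard CDF-derivative identity for binomials: for any integer $0 \le j < k$,
\[
\frac{d}{dx} \mathbb{P}(B_k(x) > j) \;=\; k \binom{k-1}{j} x^j (1-x)^{k-1-j} \;=\; k \cdot \mathbb{P}(B_{k-1}(x) = j).
\]
This is a short calculation using $i \binom{k}{i} = k \binom{k-1}{i-1}$ and $(k-i) \binom{k}{i} = k \binom{k-1}{i}$, producing a telescoping sum. Conditioning on $B_k(p)$, I would then write
\[
\mathbb{P}(B_k(x) > B_k(p)) \;=\; \sum_{j=0}^{k} \mathbb{P}(B_k(p) = j) \cdot \mathbb{P}(B_k(x) > j),
\]
and differentiate term by term (the $j=k$ summand vanishes) to obtain
\[
\frac{d}{dx} \mathbb{P}(B_k(x) > B_k(p)) \;=\; k \cdot \mathbb{P}\!\left(B_k(p) = B_{k-1}(x)\right),
\]
where the two binomials are taken independent.

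It thus suffices to show $\mathbb{P}(B_k(p) = B_{k-1}(x)) = \Omega(1/\sqrt{k})$. Under the hypotheses $p,x \in [1/3,2/3]$ and $0 \le x-p \le 1/\sqrt{k}$, the two means $kp$ and $(k-1)x$ differ by at most $k(x-p) + x \le \sqrt{k}+1 = O(\sqrt{k})$, while both distributions have standard deviation $\Theta(\sqrt{k})$. By Stirling's approximation (i.e., a local CLT for binomials), there exist constants $C, c_1 > 0$ depending only on $[1/3,2/3]$, and $K_0$, such that for every $k \ge K_0$ and every integer $j$ with $|j - kp| \le C \sqrt{k}$, one has $\mathbb{P}(B_k(p) = j) \ge c_1/\sqrt{k}$ and, by the mean-offset bound above, also $\mathbb{P}(B_{k-1}(x) = j) \ge c_1/\sqrt{k}$ (enlarging $C$ by an additive constant if necessary). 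Summing over the $\Theta(\sqrt{k})$ integers $j$ in that window,
\[
\mathbb{P}(B_k(p) = B_{k-1}(x)) \;=\; \sum_{j} \mathbb{P}(B_k(p) = j)\,\mathbb{P}(B_{k-1}(x) = j) \;\ge\; \Omega(\sqrt{k}) \cdot \left(\frac{c_1}{\sqrt{k}}\right)^{\!2} \;=\; \Omega\!\left(\frac{1}{\sqrt{k}}\right),
\]
which, combined with the reformulation above, yields the required $\Omega(\sqrt{k})$ lower bound.

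The main obstacle is the anti-concentration step: one needs \emph{both} $B_k(p)$ and $B_{k-1}(x)$ to simultaneously place mass $\Omega(1/\sqrt{k})$ on a common window of $\Theta(\sqrt{k})$ integers. The hypothesis $x - p \le 1/\sqrt{k}$ is exactly what keeps the two means within $O(\sqrt{k})$ of one another, so the typical ranges of the two binomials overlap substantially and the product of densities can be summed over enough values of $j$ to produce the $1/\sqrt{k}$ bound.
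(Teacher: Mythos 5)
Your proposal is correct, but it reaches the derivative bound by a different mechanism than the paper. You differentiate exactly, using the classical identity $\frac{d}{dx}\bbP(B_k(x)>j)=k\,\bbP(B_{k-1}(x)=j)$, which after conditioning on $B_k(p)$ gives the closed form $\frac{d}{dx}\bbP(B_k(x)>B_k(p))=k\,\bbP\pa{B_k(p)=B_{k-1}(x)}$ (with independent binomials); the problem then reduces to a collision-probability lower bound $\bbP(B_k(p)=B_{k-1}(x))=\Omega(1/\sqrt{k})$, which you obtain from a local-CLT (De Moivre--Laplace) lower bound on both point masses over a common window of $\Theta(\sqrt{k})$ integers, made possible because $x-p\le 1/\sqrt{k}$ keeps the two means within $O(\sqrt{k})$ of each other. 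The paper instead never writes the derivative in closed form: it couples $B_k(x)$ and $B_k(x+h)$ through $k$ i.i.d.\ uniforms, expresses the finite difference as $\bbP(Y_1\le Z<Y_2)$ for an independent $Z\sim\calB_k(p)$, lower-bounds it on the same kind of $\sqrt{k}$-window via its Observation on $\bbP(B_k(p)=i)\ge\beta/\sqrt{k}$, computes $\bbP(H\ge 1\mid Y_1=j)$ explicitly, and lets $h\to 0$. Both arguments share the same anti-concentration core, but yours trades the coupling-plus-limit argument for an exact derivative identity, which is shorter and arguably more transparent; the paper's route avoids invoking that identity at the cost of the limiting step. The only point to tighten in your write-up is the uniformity bookkeeping: once you fix the window $|j-kp|\le C\sqrt{k}$, the deviation of $j$ from the mean $(k-1)x$ is at most $(C+2)\sqrt{k}$, so the constant $c_1$ in the local-CLT bound for $B_{k-1}(x)$ must be taken for that enlarged deviation (uniformly over $x\in[1/3,2/3]$); this is routine and does not affect the conclusion.
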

    \begin{proof} 
        Let~$h > 0$.
        We will proceed by using a coupling argument. Let~$X_i$, $i\in\{ 1,\ldots,k \}$, be i.i.d. random variables uniformly distributed over the interval~$[0,1]$.
        Let~$Y_1 = |\{ i \text{ s.t. } X_i \leq x \}|$ and~$Y_2 = |\{ i \text{ s.t. } X_i \leq x+h \}|$. By construction, $Y_1 \sim \calB_k(x)$ and~$Y_2 \sim \calB_k(x+h)$.  Next, let~$H = |\{ i \text{ s.t. } x < X_i \leq x+h \}|$. By construction, $Y_2 = Y_1 + H \geq Y_1$. Let~$Z \sim \calB_k(p)$ be a binomially distributed random variable, independent from~$Y_1$ and~$Y_2$. Now, we have:
        \begin{align*}
            \prob{k}{x+h}{>}{k}{p} - \prob{k}{x}{>}{k}{p}
            &= \bbP \pa{ Y_2 > Z } - \bbP \pa{ Y_1 > Z } & \text{(by definition of~$Y_1, Y_2$ and~$Z$)} \\
            &= \bbP \pa{ Y_1 \leq Z \medcap Y_2 > Z } & \text{(because~$Y_1 > Z \Rightarrow Y_2 > Z$)} \\
            &= \sum_{j=0}^k \bbP\pa{ Z = j } \cdot \bbP\pa{ Y_1 \leq j \medcap Y_2 > j }. &
        \end{align*}
        Let~$J = \{ j \in \bbN \text{ s.t. } kp \leq j \leq kp + \sqrt{k} \}$.
        We can rewrite the last equation as
        \begin{equation} \label{eq:derivative_aux}
            \prob{k}{x+h}{>}{k}{p} - \prob{k}{x}{>}{k}{p} \geq \sum_{j \in J} \bbP\pa{ Z = j } \cdot \bbP\pa{ Y_1 \leq j \medcap Y_2 > j }.
        \end{equation}
        The following result is a well-known fact.
        \begin{observation} \label{claim:proba_equal}
            There exists a constant~$\beta > 0$ such that for every~$k$ large enough,
             every~$p \in [1/3,2/3]$, and every~$i$ satisfying~$|i-kp| \leq \sqrt{k}$, we have 
           $\bbP \pa{B_k(p) = i} \geq \frac{\beta}{\sqrt{k}}$.
        \end{observation}
        \begin{proof}
            By the De Moivre-Laplace theorem, for any~$i \in \{0,\ldots,k\}$,
            \begin{equation} \label{eq:demoivre-laplace}
                \bbP \pa{ B_k(p)=i } = \binom{k}{i} p^i (1-p)^{k-i} \approx \frac{1}{\sqrt{2kp(1-p)}} \exp\pa{ - \frac{(i-kp)^2}{2kp(1-p)}},
            \end{equation}
            where we used~$\approx$ in the sense that the ratio between the left-hand side and the right-hand side tends to~$1$ as~$k$ tends to infinity. Since~$|i-kp| \leq \sqrt{k}$,
            \begin{equation*}
                \frac{1}{\sqrt{2kp(1-p)}} \exp\pa{ - \frac{(i-kp)^2}{2kp(1-p)}} \geq \frac{1}{\sqrt{2kp(1-p)}} \exp\pa{ - \frac{1}{2p(1-p)}} := \frac{f(p)}{\sqrt{k}}.
            \end{equation*}
            By Eq.~\eqref{eq:demoivre-laplace}, we can conclude the proof of Observation~\ref{claim:proba_equal} for~$k$ large enough by taking, e.g.,
            \begin{equation*}
                \beta = \frac{1}{2} \cdot \min_{p \in [1/3,2/3]} f(p).
            \end{equation*}
        \end{proof}
        For~$j \in J$, by Observation~\ref{claim:proba_equal}, $\bbP\pa{ Z = j } \geq \beta / \sqrt{k}$, for some constant $\beta>0$. Moreover,
        \begin{align*}
            \bbP\pa{ Y_1 \leq j \medcap Y_2 > j } &\geq \bbP\pa{ Y_1 = j \medcap Y_2 > j } &\\
            &= \bbP\pa{ Y_1 = j \medcap H \geq 1 } & \text{(because~$Y_2=Y_1+H$)} \\
            &= \bbP\pa{ Y_1 = j} \cdot \bbP\pa{ H \geq 1 \mid Y_1 = j}. &
        \end{align*}
        By the assumption in the lemma, $p \leq x \leq p + 1/ \sqrt{k}$, and so~$kp \leq kx \leq kp + \sqrt{k}$. Therefore, for~$j \in J$, $|j-kx|\leq \sqrt{k}$, and by Observation~\ref{claim:proba_equal}, we get that~$\bbP\pa{ Y_1 = j } \geq \beta / \sqrt{k}$. Hence, we can rewrite Eq.~\eqref{eq:derivative_aux} as
        \begin{equation} \label{eq:derivative_aux2}
            \prob{k}{x+h}{>}{k}{p} - \prob{k}{x}{>}{k}{p} \geq \frac{\beta^2}{k} \sum_{j\in J} \bbP\pa{ H \geq 1 \mid Y_1 = j}.
        \end{equation}
        
        Now, let us find a lower bound on~$\bbP\pa{ H \geq 1 \mid Y_1 = j}$, for~$j \in J$. Note that, by definition, $Y_1=j$ if and only if~$|\{ i \text{ s.t. } X_i > x \}| = k-j$. Since~$X_i, 1 \leq i \leq k$, is uniformly distributed over~$[0,1]$,
        \begin{equation*}
            \bbP \pa{ x < X_i \leq x+h \mid X_i > x } = \frac{h}{1-x}.
        \end{equation*}
        Therefore, for every~$j\in J$
        \begin{equation*}
            \bbP \pa{H = 0 \mid Y_1=j} = \pa{1 - \frac{h}{1-x} }^{k-j} \leq \pa{ 1 - \frac{h}{1-x} }^{k - kp - \sqrt{k}}.
        \end{equation*}
        This implies that
        \begin{equation*}
            \sum_{j\in J} \bbP\pa{ H \geq 1 \mid Y_1 = j} \geq \sqrt{k} \cdot \pa{1-\pa{1 - \frac{h}{1-x} }^{k - kp - \sqrt{k}}}.
        \end{equation*}
        We have
        \begin{equation*}
             \lim_{h \rightarrow 0}~ \frac{1}{h} \cdot \sum_{j\in J} \bbP\pa{ H \geq 1 \mid Y_1 = j} \geq \lim_{h \rightarrow 0}~  \frac{\sqrt{k}}{h} \cdot \pa{1-\pa{1 - \frac{h}{1-x} }^{k - kp - \sqrt{k}}} = \frac{\sqrt{k} \pa{ k - kp - \sqrt{k} }}{1-x}.
        \end{equation*}
        Eventually, we get from Eq.~\eqref{eq:derivative_aux2}
        \begin{equation*}
            \frac{d}{dx} \prob{k}{x}{>}{k}{p} \geq \frac{\beta^2 (1-p)}{1-x} \cdot  \sqrt{k} + \smallO{k}{\infty} \pa{\sqrt{k}}.
        \end{equation*}
        We can conclude the proof of Claim~\ref{claim:derivative} for~$k$ large enough by taking, e.g., $\beta' = \frac{\beta^2 (1-p)}{2(1-x)}$.
    \end{proof}
    
    Now, we are ready to write the proof of Claim~\ref{claim:last_minute_obstacle}.
    
    \begin{proof} [Proof of Claim~\ref{claim:last_minute_obstacle}]
        We can rewrite Eq.~\eqref{eq:def_g} as 
        \begin{equation*}
            g(x,y) = \pa{y-\frac{1}{n}} \prob{\ell}{y}{\geq}{\ell}{x} + \pa{1-y} \cdot \prob{\ell}{y}{>}{\ell}{x} + \frac{1}{n}.
        \end{equation*}
        Hence,
        \begin{multline} \label{eq:deriv}
            \frac{d}{d y} g(x,y) \quad = \quad \bigg[ \prob{\ell}{y}{\geq}{\ell}{x} ~-~ \prob{\ell}{y}{>}{\ell}{x} \bigg] \\
            + \pa{y-\frac{1}{n}} \cdot \frac{d}{d y} \prob{\ell}{y}{\geq}{\ell}{x} \\
            + \pa{1-y} \cdot \frac{d}{d y}\prob{\ell}{y}{>}{\ell}{x}.
        \end{multline}
        The first term in Eq.~\eqref{eq:deriv} is equal to~$\bbP \pa{ B_\ell \pa{ y } = B_\ell \pa{ x } }$, which is positive. Moreover, $\bbP \pa{ B_\ell \pa{ y } \geq B_\ell \pa{ x } }$ is obviously increasing in~$y$, so the second term is also non-negative. By Claim~\ref{claim:derivative}, the third term in Eq.~\eqref{eq:deriv} satisfies
        \begin{equation*}
            \pa{1-y} \cdot \frac{d}{d y}\prob{\ell}{y}{>}{\ell}{x} \geq \pa{1-y} \cdot \beta' \cdot \sqrt{\ell} \geq \frac{\beta'}{4} \cdot \sqrt{\ell},
        \end{equation*}
        where the last inequality comes from the fact that~$x\in [1/3,2/3]$ and~$y\in [x,x+1/\sqrt{\ell}] \subseteq [1/4,3/4]$.
        For~$\ell$ large enough, this implies that
        \begin{equation*}
            \frac{d}{d y} g(x,y) \geq \frac{\beta'}{4} \cdot \sqrt{\ell} > 1,
        \end{equation*}
        which concludes the proof of Claim~\ref{claim:last_minute_obstacle}.
    \end{proof}



\section{Basic Observations regarding Algorithm FET} \label{app:evolution}
\subsection{Proof of Observation~\ref{lem:evolution}}
    Let~$I$ be the set of agents (including the source).
    Let~$I_t^{1} \subset I$ be the set of all {\em non-source} agents with opinion~$1$ at round~$t$.
    Recall that we condition on~$\x_t = \xx{t}$ and~$\x_{t+1} = \xx{t+1}$ (although we avoid writing this conditioning). In addition, the proof will proceed by conditioning on~$I_{t+1}^1= \II{t+1}{1}$. Since we shall show that the statements are true for every~$\II{t+1}{1}$, the lemma will hold without this latter conditioning.
    
    By definition of the protocol, and because it operates under the~$\pull$ model, ${\cnt}^{(i)}$ and~${\cntt}^{(i)}$ are obtained by sampling~$\ell$ agents uniformly at random in the population (with replacement) and counting how many have opinion~$1$. Therefore, conditioning on~$(\x_t,\x_{t+1})$ and~$I_{t+1}^1$, 
    \begin{itemize}
        \item[(i)] variables~$({\cnt_{t+1}}^{(i)})_{i \in I}$ and~$({\cntt_t}^{(i)})_{i \in I}$ are mutually independent, thus variables~$(Y_{t+2}^{(i)})_{i \in I}$ are mutually independent.
        \item[(ii)] for every~$i \in I$, ${\cnt_{t+1}}^{(i)} \sim \calB_\ell(\x_{t+1})$, and~${\cntt_{t}}^{(i)} \sim \calB_\ell(\x_{t})$, so we can write for every non-source agent~$i \in I_{t+1}^1$,
        \begin{equation*}
            \bbP \pa{ Y_{t+2}^{(i)} = 1 } = \prob{\ell}{\x_{t+1}}{\geq}{\ell}{\x_t},
        \end{equation*}
        and for every non-source agent~$i \notin I_{t+1}^1$,
        \begin{equation*}
            \bbP \pa{ Y_{t+2}^{(i)} = 1 } = \prob{\ell}{\x_{t+1}}{>}{\ell}{\x_t}.
        \end{equation*}
    \end{itemize}
    This establishes Eq.~\ref{eq:indiv_next_opinion}.
    Now, let us define independent binary random variables~$(X_j)_{1 \leq j \leq n}$, taking values in~$\{0,1\}$, as follows;
    \begin{itemize}
        \item $X_1 = 1$,
        \item for every~$j$ s.t.~$1 < j \leq n \cdot \x_{t+1}$, $\bbP\pa{X_j = 1} = \prob{\ell}{\x_{t+1}}{\geq}{\ell}{\x_t}$,
        \item for every~$j$ s.t.~$n \cdot \x_{t+1} < j \leq n$, $\bbP\pa{X_j = 1} = \prob{\ell}{\x_{t+1}}{>}{\ell}{\x_t}$.
    \end{itemize}
    We assumed the source agent to have opinion~$1$, so there are $n\x_t -1$ non-source agents with opinion~$1$ and~$n(1-\x_t)$ non-source agents with opinion~$0$.
    Therefore, by~(i) and~(ii) and by construction of the~$(X_j)_{1 \leq j \leq n}$, $\x_{t+2} = \frac{1}{n} \sum_{i \in I} Y_{t+2}^{(i)}$ is distributed as~$\frac{1}{n} \sum_{j=1}^n X_j$, which establishes the second statement in Observation~\ref{lem:evolution}.
    Computing the expectation (still conditioning on~$\x_t$, $\x_{t+1}$) is straightforward and does not depend on~$I_{t+1}^1$:
    \begin{align*}
        \bbE \pa{ \x_{t+2} } &= \pa{ \x_{t+1} - \frac{1}{n} } \cdot \prob{\ell}{\x_{t+1}}{\geq}{\ell}{\x_t} + (1-\x_{t+1}) \cdot \prob{\ell}{\x_{t+1}}{>}{\ell}{\x_t} + \frac{1}{n} \\
        &= \x_{t+1} \cdot \prob{\ell}{\x_{t+1}}{\geq}{\ell}{\x_t} + (1-\x_{t+1}) \cdot \prob{\ell}{\x_{t+1}}{>}{\ell}{\x_t} + \frac{1}{n}(1-\prob{\ell}{\x_{t+1}}{\geq}{\ell}{\x_t}) \\
        &= \prob{\ell}{\x_{t+1}}{>}{\ell}{\x_t} + \x_{t+1} \cdot \prob{\ell}{\x_{t+1}}{=}{\ell}{\x_t} + \frac{1}{n}(1-\prob{\ell}{\x_{t+1}}{\geq}{\ell}{\x_t}).
    \end{align*}
    This establishes Eq.~\eqref{eq:next_expectation}, and concludes the proof of Observation~\ref{lem:evolution}.
\qed

\begin{remark} \label{rem:evolution}
    From Observation~\ref{lem:evolution}, we obtain the following straightforward bounds: for every non-source agent~$i$,
    \begin{equation} \label{eq:remark1}
        \prob{\ell}{\x_{t+1}}{>}{\ell}{\x_t} ~~\leq~~ \bbP \pa{Y_{t+2}^{(i)} = 1} ~~\leq~~ \prob{\ell}{\x_{t+1}}{\leq}{\ell}{\x_t},
    \end{equation}
    and
    \begin{equation} \label{eq:remark2}
        \prob{\ell}{\x_{t+1}}{>}{\ell}{\x_t} - \frac{1}{n} ~~\leq~~ \bbE \pa{\x_{t+2}} ~~\leq~~ \prob{\ell}{\x_{t+1}}{\leq}{\ell}{\x_t} + \frac{1}{n}.
    \end{equation}
    Because of the source agent having opinion~$1$, the left hand side in Eq.~\eqref{eq:remark2} is loose (specifically, $-1/n$ is not necessary).
    Nevertheless, we will use this equation in the proofs, because it has a symmetric equivalent (w.r.t.~to the center of~$\calG$, $(\frac{1}{2},\frac{1}{2})$) which will allow our statements about~$\x_{t+2}$ to hold symmetrically for~$1-\x_{t+2}$, despite the asymmetry induced by the source.
\end{remark}

\begin{remark} \label{rem:evolution2}
    Eq.~\eqref{eq:next_expectation} in Observation~\ref{lem:evolution} implies the following convenient bounds: 
    \begin{gather} 
            \prob{\ell}{\x_{t+1}}{>}{\ell}{\x_t} + \x_{t+1} \cdot \prob{\ell}{\x_{t+1}}{=}{\ell}{\x_t} - \frac{1}{n} \nonumber \\
            < \quad \bbE \pa{ \x_{t+2} } \quad < \label{eq:expect_next_fraction_bounds} \\
            \prob{\ell}{\x_{t+1}}{>}{\ell}{\x_t} + \x_{t+1} \cdot \prob{\ell}{\x_{t+1}}{=}{\ell}{\x_t} + \frac{1}{n}. \nonumber
    \end{gather}
\end{remark}

\subsection{Affects of noise}\label{sec:noise}

When it comes to the central area, \yellow, we will need the following result to break ties.

\begin{lemma} \label{lem:anti_concentration}
    There exists a constant~$\beta > 0$ s.t. for~$n$ large enough, and if~$\bbE(\x_{t+2}) \in [1/3,2/3]$, then
    \begin{equation*}
        \bbP \pa{ \x_{t+2} \leq \bbE(\x_{t+2}) - 1/\sqrt{n} } , \bbP \pa{ \x_{t+2} \geq \bbE(\x_{t+2}) + 1/\sqrt{n} } \geq \beta.
    \end{equation*}
\end{lemma}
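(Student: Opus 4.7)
The plan is to express $n\x_{t+2}$ as a sum of independent Bernoulli random variables, show that its variance is at least a constant multiple of $n$, and then invoke a Gaussian approximation to conclude that deviations of order $1/\sqrt{n}$ around the mean occur with constant probability on either side. The starting point is Observation~\ref{lem:evolution}, by which $n\x_{t+2} = \sum_{i=1}^n X_i$ with mutually independent $X_i \in \{0,1\}$. Setting $\alpha := \bbP\pa{B_\ell(\x_{t+1}) \geq B_\ell(\x_t)}$ and $\beta := \bbP\pa{B_\ell(\x_{t+1}) > B_\ell(\x_t)}$, there are $n\x_{t+1}-1$ summands with parameter $\alpha$, $n(1-\x_{t+1})$ summands with parameter $\beta$, and one deterministic contribution from the source.

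The heart of the proof is lower bounding $\sigma^2 := \var(n\x_{t+2})$ by a constant multiple of $n$. A direct rearrangement gives
\[
    \frac{\sigma^2}{n} = \x_{t+1}\alpha(1-\alpha) + (1-\x_{t+1})\beta(1-\beta) + O(1/n) = \mu(1-\mu) - \x_{t+1}(1-\x_{t+1})(\alpha-\beta)^2 + O(1/n),
\]
where $\mu := \bbE(\x_{t+2}) \in [1/3,2/3]$, so $\mu(1-\mu) \geq 2/9$. I would then split into two cases. If $\x_{t+1} \notin [1/10,\,9/10]$, then $\x_{t+1}(1-\x_{t+1}) \leq 9/100$, immediately giving $\sigma^2/n \geq 2/9 - 9/100 > 1/10$, regardless of the values of $\alpha$ and $\beta$. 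Otherwise, $\x_{t+1}$ is bounded away from $\{0,1\}$, and here the binomial mass of $B_\ell(\x_{t+1})$ is spread out: by De Moivre--Laplace, $\max_k \bbP(B_\ell(\x_{t+1})=k) = O(1/\sqrt{\ell})$. Since $\alpha - \beta = \bbP(B_\ell(\x_{t+1}) = B_\ell(\x_t)) \leq \max_k \bbP(B_\ell(\x_{t+1})=k)$, this forces $(\alpha-\beta)^2 = O(1/\ell)$ and hence $\sigma^2/n \geq 2/9 - O(1/\ell)$, which exceeds a positive constant once $\ell = \Theta(\log n)$ is large enough.

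With $\sigma \geq \sqrt{cn}$ secured, I would finish by applying a Berry--Esseen bound to the standardized sum $(n\x_{t+2} - \bbE(n\x_{t+2}))/\sigma$. Since each $|X_i - \bbE X_i| \leq 1$, the sum of absolute third central moments is bounded by $\sigma^2$, so the Kolmogorov distance to $\calN(0,1)$ is $O(1/\sigma) = O(1/\sqrt{n}) = o(1)$. Therefore
\[
    \bbP\pa{\x_{t+2} \geq \bbE(\x_{t+2}) + 1/\sqrt{n}} = \bbP\pa{Z \geq \sqrt{n}/\sigma} + o(1) \geq \bbP\pa{Z \geq 1/\sqrt{c}} + o(1),
\]
for $Z \sim \calN(0,1)$, which is bounded below by a positive constant $\beta$; the lower tail is handled symmetrically. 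The main obstacle is the variance lower bound: a priori the ``variance-of-means'' correction $\x_{t+1}(1-\x_{t+1})(\alpha-\beta)^2$ could be as large as $1/4$, exceeding $\mu(1-\mu)$; it is saved only by the interplay between $\alpha - \beta = O(1/\sqrt{\ell})$ in the interior and the small weight $\x_{t+1}(1-\x_{t+1})$ near the boundary. A minor technical point is that Theorem~\ref{thm:berry_esseen} as stated is for i.i.d.\ summands, so the independent-but-not-identical version of Berry--Esseen must be invoked (or, equivalently, the two sub-Binomials treated separately and then convolved).
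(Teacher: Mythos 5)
Your proof is correct, and it takes a somewhat different route from the paper's. Both arguments start from the same place --- the decomposition of $n\x_{t+2}$ into independent Bernoulli variables with parameters $p=\bbP\pa{B_\ell(\x_{t+1})\geq B_\ell(\x_t)}$ and $q=\bbP\pa{B_\ell(\x_{t+1})>B_\ell(\x_t)}$ given by Observation~\ref{lem:evolution} --- but the paper then argues that $p-q=\bbP\pa{B_\ell(\x_{t+1})=B_\ell(\x_t)}$ is vanishing, deduces $p,q\in[1/4,3/4]$ from $\x_{t+1}p+(1-\x_{t+1})q\approx\bbE(\x_{t+2})\in[1/3,2/3]$, and applies a central-limit argument only to the larger of the two sub-binomial blocks (which has at least $n/2-1$ summands and hence variance $\Theta(n)$), while requiring the other block to sit at or above its mean with constant probability. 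You instead lower-bound the variance of the \emph{whole} sum via the exact identity $\sigma^2/n=\mu(1-\mu)-\x_{t+1}(1-\x_{t+1})(p-q)^2+O(1/n)$ and a case split on whether $\x_{t+1}$ is near the boundary, then invoke the non-i.i.d.\ Berry--Esseen bound. Your route buys something real: the paper's assertion that $p-q\to 0$ is not automatic under the bare hypothesis $\bbE(\x_{t+2})\in[1/3,2/3]$ (e.g.\ $\x_t,\x_{t+1}=\Theta(1/\ell)$ can make the equality probability a constant while keeping the mean in range); it is justified only where the lemma is actually invoked, namely with $(\x_t,\x_{t+1})\in\yellow'$. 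Your case analysis sidesteps this entirely, since when $\x_{t+1}(1-\x_{t+1})$ is small the correction term is harmless regardless of $p-q$, and when $\x_{t+1}$ is interior the point-mass bound $p-q=O(1/\sqrt{\ell})$ applies. The paper's version, in exchange, is slightly lighter on tools (it needs only one block and a plain CLT-style estimate). Two cosmetic points: you overload $\beta$ (the Bernoulli parameter versus the lemma's constant) and $c$ (your variance constant versus the paper's sample-size constant $\ell=c\log n$), and, as you note yourself, Theorem~\ref{thm:berry_esseen} as stated is for i.i.d.\ summands, so you must cite the independent non-identically-distributed version or convolve the two sub-binomials --- either fix is routine.
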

\begin{proof}
    Consider~$X_1,\ldots,X_n$ from the statement of Observation~\ref{lem:evolution}. We have (see the proof of Observation~\ref{lem:evolution})
    \begin{itemize}
        \item $X_1 = 1$,
        \item for every~$j$ s.t.~$1 < j \leq n \cdot \x_{t+1}$, $\bbP\pa{X_j = 1} = \prob{\ell}{\x_{t+1}}{\geq}{\ell}{\x_t}$,
        \item for every~$j$ s.t.~$n \cdot \x_{t+1} < j \leq n$, $\bbP\pa{X_j = 1} = \prob{\ell}{\x_{t+1}}{>}{\ell}{\x_t}$.
    \end{itemize}
    Let~$p = \prob{\ell}{\x_{t+1}}{\geq}{\ell}{\x_t}$ and~$q = \prob{\ell}{\x_{t+1}}{>}{\ell}{\x_t}$.
    By Observation~\ref{lem:evolution},
    \begin{equation*}
        \bbE(\x_{t+2}) = \bbE \pa{ \frac{1}{n} \sum_{i=1}^n X_i }
        = \x_{t+1} \cdot p + (1-\x_{t+1}) \cdot q + \frac{1}{n}\pa{1-p}.
    \end{equation*}
    By assumption on~$\bbE(\x_{t+2})$, this implies that
    \begin{equation*}
        \x_{t+1} \cdot p + (1-\x_{t+1}) \cdot q \in \left[ \frac{1}{3}-\frac{1}{n},\frac{2}{3} \right].
    \end{equation*}
    Moreover, we have that~$p-q = \bbP\pa{B_\ell(\x_{t+1} = B_\ell(\x_t)}$ which tends to~$0$ as~$n$ tends to infinity, i.e., $p$ and~$q$ are arbitrarily close.
    Hence, for~$n$ large enough, the last equation implies that~$p\in[1/4,3/4]$ and~$q\in[1/4,3/4]$.
    Let~$Y_p = \sum_{i=2}^{n\cdot \x_{t+1}} X_i$ and~$Y_q = \sum_{i=n\cdot\x_{t+1}+1}^n X_i$. These two variables are binomially distributed, and since~$p,q\in[1/4,3/4]$, there is a constant probability that~$Y_p \geq \bbE(Y_p)$, and there is a constant probability that~$Y_q \geq \bbE(Y_q)$ as well.
    Without loss of generality, we assume that~$\x_{t+1} \geq 1/2$ and focus on~$Y_p$ (if~$\x_{t+1} < 1/2$, then we could consider~$Y_q$ instead).
    Let~$m = n\cdot \x_{t+1}-1$ be the number of samples of~$Y_p$.
    In this case, $m \geq n/2-1$ tends to~$+\infty$ as~$n$ tends to~$+\infty$.
    Let~$\sigma_p = \sqrt{p(1-p)}$.
    By the central limit theorem (Theorem~\ref{thm:central_limit}), the random variable
    \begin{equation*}
        \frac{\sqrt{m}}{\sigma_p} \pa{\frac{1}{m}Y_p - p} = \frac{Y_p - \bbE(Y_p)}{\sigma_p \sqrt{m}}
    \end{equation*}
    converges in distribution to~$\calN(0,1)$.
    Moreover, $\var(Y_p) = m \sigma_p^2 = (n\cdot \x_{t+1}-1)p(1-p) \geq (n/2-1)p(1-p) \geq np(1-p)/3 $, so for any~$\epsilon>0$ and~$n$ large enough,
    \begin{align*}
        \bbP \pa{ Y_p \geq \bbE\pa{Y_p}+\sqrt{n} } &= \bbP\pa{ \frac{Y_p-\bbE\pa{Y_p}}{\sigma_p\sqrt{m}} \geq \frac{\sqrt{n}}{\sigma_p \sqrt{m}}} \\
        &\geq \bbP\pa{ \frac{Y_p-\bbE\pa{Y_p}}{\sigma_p \sqrt{m}} \geq \sqrt{\frac{3}{p(1-p)}}} \\
        &\geq 1 - \Phi\pa{\sqrt{\frac{3}{p(1-p)}}} - \epsilon.
    \end{align*}
    For~$\epsilon$ small enough, and because~$p$ is bounded, this probability is bounded away from zero. This concludes the proof of Lemma~\ref{lem:anti_concentration} (the other inequality can be obtained symmetrically).
\end{proof}

We can use the previous result to show that the Markov process $(\x_t,\x_{t+1})$ is sufficiently noisy so that it is never too likely to be at any given point~$(x,y)$.

\begin{lemma} \label{lem:noise}
    There is a constant~$c_1 = c_1(c) > 0$ (recall that the sample size is~$\ell = c \cdot \log n$), such that for any~$a \in [1/2-4\delta,1/2+4\delta]$, and any round~$t$ s.t.~$(\x_t,\x_{t+1}) \in \yellow'$,
    we have
    \begin{equation*}
        \bbP \pa{|\x_{t+2} - a| > \frac{1}{\sqrt{n}}} > c_1.
    \end{equation*}
\end{lemma}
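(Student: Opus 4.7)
The plan is to reduce this anti-concentration statement to Lemma~\ref{lem:anti_concentration} via a simple case split on the location of $a$ relative to $\bbE(\x_{t+2})$. The heart of the argument is the observation that in $\yellow'$ the distribution of $\x_{t+2}$ is sufficiently noisy around its mean, so that a single deterministic target $a$ cannot absorb all of the mass within an interval of width $2/\sqrt{n}$.

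First, I would verify that the hypothesis of Lemma~\ref{lem:anti_concentration} is satisfied, namely that $\bbE(\x_{t+2}) \in [1/3,2/3]$. Since $(\x_t,\x_{t+1}) \in \yellow'$ and $\delta$ is small, both $\x_t$ and $\x_{t+1}$ lie in a small neighborhood of $1/2$. Setting $p = \bbP(B_\ell(\x_{t+1}) \geq B_\ell(\x_t))$ and $q = \bbP(B_\ell(\x_{t+1}) > B_\ell(\x_t))$, both $p$ and $q$ are bounded away from $0$ and $1$ for $n$ large enough, and Eq.~\eqref{eq:expect_next_fraction_bounds} gives $\bbE(\x_{t+2}) = \x_{t+1} p + (1-\x_{t+1}) q + O(1/n)$, which indeed belongs to $[1/3,2/3]$ once $\delta$ is sufficiently small.

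Next, I would invoke a slight strengthening of Lemma~\ref{lem:anti_concentration} with the threshold $2/\sqrt{n}$ in place of $1/\sqrt{n}$. Its proof is identical, using the central limit theorem applied to the sum $\frac{1}{n}\sum X_i$ from Observation~\ref{lem:evolution}: the variance of this sum is $\Theta(1/n)$ (because $p,q$ are bounded away from $0$ and $1$ and at least a constant fraction of the $X_i$ have parameter in $[1/4,3/4]$), so the fluctuations of $\x_{t+2}$ around its mean are of order $1/\sqrt{n}$ and enjoy a Gaussian-type anti-concentration. Thus there exists $\beta' > 0$ such that both $\bbP(\x_{t+2} \geq \bbE(\x_{t+2}) + 2/\sqrt{n}) \geq \beta'$ and $\bbP(\x_{t+2} \leq \bbE(\x_{t+2}) - 2/\sqrt{n}) \geq \beta'$.

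Finally, I would conclude by a two-case argument on the position of $a$. If $a \leq \bbE(\x_{t+2})$, then on the event $\{\x_{t+2} \geq \bbE(\x_{t+2}) + 2/\sqrt{n}\}$ we have $\x_{t+2} - a \geq 2/\sqrt{n} > 1/\sqrt{n}$. Symmetrically, if $a > \bbE(\x_{t+2})$, the event $\{\x_{t+2} \leq \bbE(\x_{t+2}) - 2/\sqrt{n}\}$ forces $a - \x_{t+2} > 1/\sqrt{n}$. Either way, $\bbP(|\x_{t+2} - a| > 1/\sqrt{n}) \geq \beta'$, and we set $c_1 = \beta'$.

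I do not expect any real obstacle here: the only subtle points are (i) checking the range condition $\bbE(\x_{t+2}) \in [1/3,2/3]$ inside $\yellow'$, which follows immediately once $\delta$ is small enough, and (ii) making explicit that the CLT-based anti-concentration estimate of Lemma~\ref{lem:anti_concentration} is scale-robust and works with threshold $2/\sqrt{n}$ as well as $1/\sqrt{n}$.
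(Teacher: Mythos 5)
Your approach — reducing the claim to Lemma~\ref{lem:anti_concentration} by a case split on the sign of $a - \bbE(\x_{t+2})$ — is exactly what the paper does (the paper's proof of Lemma~\ref{lem:noise} is the one-liner ``Follows directly from Lemma~\ref{lem:anti_concentration}''), and using the $2/\sqrt{n}$ variant to handle the strict inequality is a clean touch. However, your verification of the hypothesis $\bbE(\x_{t+2}) \in [1/3,2/3]$ contains a genuine gap: the claim that ``both $p$ and $q$ are bounded away from $0$ and $1$ for $n$ large enough'' is false uniformly over $\yellow'$. Inside $\yellow'$ the gap $\x_{t+1}-\x_t$ may be as large as $8\delta$, a constant independent of $n$. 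If, say, $\x_{t+1}-\x_t = \delta$, then $\ell(\x_{t+1}-\x_t)^2 = c\delta^2\log n \to \infty$, and Lemma~\ref{lem:topdog} gives $\bbP\pa{B_\ell(\x_{t+1}) \leq B_\ell(\x_t)} \leq n^{-c\delta^2/2} \to 0$, hence $q = \bbP\pa{B_\ell(\x_{t+1}) > B_\ell(\x_t)} \to 1$ and $p \geq q \to 1$. In that regime $\bbE(\x_{t+2})$ is close to $1$ (outside $[1/3,2/3]$), and the variance of $\x_{t+2}$ is of order $n^{-1-c\delta^2/2} \ll 1/n$, so the CLT anti-concentration step also breaks down.

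The lemma is nevertheless true in this regime, but for the opposite reason: when $p,q$ are near $1$ (resp.\ near $0$), Observation~\ref{lem:evolution} together with Chernoff (Theorem~\ref{thm:mult_chernoff_bound}) shows that $\x_{t+2}$ concentrates within $o(1)$ of $\bbE(\x_{t+2})$, which lies at constant distance from the interval $[1/2-4\delta,1/2+4\delta]$ containing $a$; thus $|\x_{t+2}-a|>1/\sqrt{n}$ with probability $1-o(1)$, which in particular exceeds any constant. The repair is a case split on whether $\bbE(\x_{t+2}) \in [1/3,2/3]$: invoke Lemma~\ref{lem:anti_concentration} when it does (your argument), and concentration when it does not. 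To be fair, the paper's one-line proof does not spell out this range check either, but your write-up asserts a specific false bound on $p,q$ that must be removed.
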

\begin{proof}
    Follows directly from Lemma~\ref{lem:anti_concentration}.
\end{proof}

\section{Analyzing Domains}
\subsection{Green area} \label{app:green}

\begin{proof} [Proof of Lemma~\ref{lem:green}]
    Let us prove the first part and assume~$(\x_t,\x_{t+1}) \in \green{1}$ (the proof of the second part is analogous).
    By Eq.~\eqref{eq:remark1} in Remark~\ref{rem:evolution}, we have for every agent~$i$
    \begin{equation*}
        \bbP \pa{Y_i^{(t+2)} = 0} \leq \prob{\ell}{\x_{t+1}}{\leq}{\ell}{\x_t}.
    \end{equation*}
    By Lemma~\ref{lem:topdog}, we have
    \begin{equation*}
        \prob{\ell}{\x_{t+1}}{\leq}{\ell}{\x_t} \leq \exp \pa{- \frac{1}{2} \ell (\x_{t+1} - \x_t)^2} \leq \exp \pa{- \frac{1}{2} \ell \delta^2} =  \exp \pa{-\frac{c \delta^2}{2} \log n }.
    \end{equation*}
    Then, by the union bound,
    \begin{equation*}
        \bbP \pa{ \bigcup_{i\in I\setminus \{\text{source}\} } \pa{Y_i^{(t+2)} = 0} } \leq (n-1) \cdot \exp \pa{-\frac{c \delta^2}{2} \log n },
    \end{equation*}
    which tends to~$0$ as~$n$ goes to~$+\infty$, provided that~$c > 2/\delta^2$.
\end{proof}

\subsection{Purple area} \label{app:purple}

\begin{proof} [Proof of Lemma~\ref{lem:purple}]
    Let us prove the first part and assume~$(\x_t,\x_{t+1}) \in \purple{1}$ (the proof of the second part is analogous).
    By Eq.~\eqref{eq:remark2} in Remark~$\ref{rem:evolution}$,
    \begin{equation*}
        \bbE\pa{\x_{t+2}} \geq \prob{\ell}{\x_{t+1}}{>}{\ell}{\x_t} - \frac{1}{n}.
    \end{equation*}
    Since~$(\x_t,\x_{t+1}) \in \purple{1}$, and since in this area~$\x_{t+1}$ is at least a factor~$(1-\lambda_n)$ greater than~$\x_{t}$, we have
    \begin{equation*}
        \prob{\ell}{\x_{t+1}}{>}{\ell}{\x_t} \geq \prob{\ell}{(1-\lambda_n) \x_t}{>}{\ell}{\x_t}.
    \end{equation*}
    Let
    \begin{equation}\label{eq:sigma-purple}
        \sigma = \sqrt{\x_t(1-\x_t) + (1-\lambda_n) \x_t(1-(1-\lambda_n) \x_t)} > \sqrt{\x_t(1-\x_t)} > \sqrt{\frac{\x_t}{2}},
    \end{equation}
    where the last inequality is by the fact that $\x_t<1/2$ which follows from the definition of~$\purple{1}$. By Lemma~\ref{lem:underdog},
    \begin{equation*}
        \prob{\ell}{(1-\lambda_n) \x_t}{>}{\ell}{\x_t} > 1 - \Phi\pa{\frac{\sqrt{\ell} \lambda_n \x_t}{\sigma}} - \frac{C}{\sigma\sqrt{\ell}}.
    \end{equation*}
    We have (Eq.~\eqref{eq:sigma-purple} and definition of~$\purple{1}$)
    \begin{equation*}
        \sigma > \sqrt{\frac{\x_t}{2}} > \sqrt{\frac{1}{2\log n}}
    \end{equation*}
    so
    \begin{equation*}
        \frac{C}{\sigma\sqrt{\ell}} < \frac{\sqrt{2}C}{\sqrt{c}}.
    \end{equation*}
    If~$c$ is large enough (specifically, if~$c > 32C^2/\delta^2$), we obtain
    \begin{equation*}
        \prob{\ell}{(1-\lambda_n) \x_t}{>}{\ell}{\x_t} > 1 - \Phi\pa{\frac{\sqrt{\ell} \lambda_n \x_t}{\sigma}} - \frac{\delta}{4}.
    \end{equation*}
    We have
    \begin{equation*}
        0 \leq \frac{\sqrt{\ell} \lambda_n \x_t}{\sigma} \leq \sqrt{\ell} \lambda_n \sqrt{2\x_t} \leq \sqrt{\ell}\lambda_n = \frac{\sqrt{c}}{\log^\delta n} \tendsto{n}{+\infty} 0.
    \end{equation*}
    where the second inequality is by Eq.~\eqref{eq:sigma-purple}, and the third is because~$\x_t<1/2$. 
    So, for~$n$ large enough
    \begin{equation*}
        1 - \Phi\pa{\frac{\sqrt{\ell} \lambda_n}{\sigma}} - \frac{\delta}{4} > 1-\Phi(0) - \frac{\delta}{2} = \frac{1-\delta}{2}.
    \end{equation*}
    Overall, we have proved that if~$n$ is large enough, then $\bbE\pa{\x_{t+2}} > (1-\delta)/2$.
    By Observation~\ref{lem:evolution}, we can apply Chernoff's inequality (Theorem~\ref{thm:mult_chernoff_bound}) to get that~$\x_{t+2} > 1/2-\delta$ w.h.p. Since by definition of~$\purple{1}$ we have $1/2-\delta > \x_{t+1}+\delta$, we obtain $\x_{t+2} >\x_{t+1}+\delta$ w.h.p., which concludes the proof of the lemma.
\end{proof}

\subsection{Red area} \label{app:red}

\begin{proof}[Proof of Lemma~\ref{lem:red}]
    Without loss of generality, we assume that~$t_0 = 0$. We assume that~$(\x_{0},\x_{1}) \in \red{1}$ (the proof in the case that~$(\x_{0},\x_{1}) \in \red{0}$ is the same).
    First we note that for every round~$t$, by definition, if~$(\x_t,\x_{t+1}) \in \red{1}$ then~$\x_{t+1} < (1-\lambda_n) \x_t$. So, we can prove by induction on~$t$ that for every~$1\leq t \leq t_1$, \begin{equation*}
        \x_t < \x_{0} (1-\lambda_n)^t.
    \end{equation*}
    In particular, we have that~$\x_{t_1}<\x_{t_0}<1/2-3\delta$, and so~$(\x_{t_1},\x_{t_1+1}) \notin \yellow$ by definition of \yellow. 
    
    Also by definition, $\x_0 < 1/2$ and~$\x_t > 1/\log(n)$ for every~$0 \leq t \leq t_1$,  hence, we obtain from the last equation that
    \begin{equation*}
        \frac{1}{\log n} < \frac{1}{2} (1-\lambda_n)^t.
    \end{equation*}
    Taking the logarithm and rearranging, we get
    \begin{equation*}
         \log\pa{\frac{1}{2}} + \log(\log n) > t \cdot \log\pa{\frac{1}{1-\lambda_n}}.
    \end{equation*}
    We know that $\log(1-\lambda_n) < -\lambda_n$ and thus~$t \cdot \log\pa{1/(1-\lambda_n)} > t \lambda_n$. Together with the above equation, this gives
    \begin{equation*}
        t < \frac{1}{\lambda_n} \pa{\log\pa{\frac{1}{2}} + \log(\log n)} = o\pa{ \log^{1/2+2\delta} n },
    \end{equation*}
    which concludes the proof.
\end{proof}

\subsection{Cyan area} \label{app:cyan}

\begin{proof} [Proof of Claim~\ref{claim:cyan_small}]
    We note that, since~$\x_t < 1/\log(n)$, the probability that an agent does not see a~$1$ in round~$t$ is
    \begin{equation*}
        \bbP \pa{B_\ell\pa{\x_t} = 0} = (1-\x_t)^\ell > \pa{1-\frac{1}{\log n}}^\ell = \exp \pa{c \log(n) \log \pa{1-\frac{1}{\log n}}} > e^{-2c},
    \end{equation*}
    for~$n$ large enough. Moreover,
    \begin{equation*}
        (1-\x_{t+1})^\ell < 1 - \ell \x_{t+1} + \frac{1}{2}\ell^2\x_{t+1}^2,
    \end{equation*}
    so the probability that an agent  sees at least a~$1$ in round~$t+1$ is
    \begin{align*}
        \bbP(B_\ell(\x_{t+1})\geq 1) =
        1-(1-\x_{t+1})^\ell &> \ell \x_{t+1} \pa{1 - \frac{1}{2}\ell \x_{t+1}} > \frac{1}{2} \ell \x_{t+1},
    \end{align*}
    where the last inequality comes from the assumption that~$\x_{t+1} \leq 1/\ell$. 
    Eventually, we can write
    \begin{equation*}
        \prob{\ell}{\x_{t+1}}{>}{\ell}{\x_t} \geq \bbP \pa{B_\ell(\x_t) = 0} \cdot \bbP\pa{B_\ell(\x_{t+1}) \geq 1} \geq \frac{c}{2} \cdot e^{-2c} \cdot \x_{t+1} \log n = K \x_{t+1} \log n.
    \end{equation*}
    Hence, by Eq.~\eqref{eq:remark2} in Remark~\ref{rem:evolution}, $\bbE \pa{\x_{t+2}} \geq K \x_{t+1} \log n - 1/n$. By Observation~\ref{lem:evolution}, we can apply Chernoff's inequality (Theorem~\ref{thm:mult_chernoff_bound}) to conclude the proof of Claim~\ref{claim:cyan_small}.
\end{proof}

\begin{proof} [Proof of Claim~\ref{claim:cyan_intermediate}]
    The proof follows along similar lines as the proof of Claim~\ref{claim:cyan_small}.
    We note that, since~$\x_t < 1/\log(n)$, the probability that an agent does not see a~$1$ in round~$t$ is
    \begin{equation*}
        \bbP \pa{B_\ell\pa{\x_t} = 0} = (1-\x_t)^\ell > \pa{1-\frac{1}{\log n}}^\ell = \exp \pa{c \log n \log \pa{1-\frac{1}{\log n}}} > e^{-2c},
    \end{equation*}
    for~$n$ large enough. Moreover, the probability that an agent  sees at least a~$1$ in round~$t+1$ is
    \begin{equation*}
        \bbP(B_\ell(\x_{t+1})\geq 1) = 1-(1-\x_{t+1})^\ell \geq 1-\pa{1-\frac{1}{\ell}}^\ell > 1-\frac{1}{e}.
    \end{equation*}
    Eventually, we can write
    \begin{equation*}
        \prob{\ell}{\x_{t+1}}{>}{\ell}{\x_t} \geq \bbP\pa{B_\ell(\x_{t+1}) \geq 1} \cdot \bbP \pa{B_\ell(\x_t) = 0} \geq e^{-2c} \cdot \pa{1-\frac{1}{e}} = 2\gamma.
    \end{equation*}
    Hence, by Eq.~\eqref{eq:remark2} in Remark~\ref{rem:evolution},~$\bbE \pa{\x_{t+2}} \geq 2 \gamma - 1/n$. By Observation~\ref{lem:evolution}, we can apply Chernoff's inequality (Theorem~\ref{thm:mult_chernoff_bound}) to conclude the proof of Claim~\ref{claim:cyan_intermediate}.
\end{proof}
    
\begin{proof} [Proof of Claim~\ref{claim:cyan_large}]
    By assumption, $\x_{t+1}-\x_t \geq \gamma - 1/\log(n)$, and so by Lemma~\ref{lem:topdog}
    \begin{equation*}
        \prob{\ell}{\x_{t+1}}{>}{\ell}{\x_t} \geq 1 - \exp \pa{-\frac{1}{2} \ell \pa{\gamma-\frac{1}{\log n}}^2 } > \frac{3}{4}
    \end{equation*}
    for~$n$ large enough.
    Hence, by Eq.~\eqref{eq:remark2} in Remark~\ref{rem:evolution},~$\bbE \pa{\x_{t+2}} \geq 3/4 - 1/n$. By Observation~\ref{lem:evolution}, we can apply Chernoff's inequality (Theorem~\ref{thm:mult_chernoff_bound}) to conclude the proof of Claim~\ref{claim:cyan_large}.
\end{proof}

\subsection{Yellow area}

Recall the partitioning of the Yellow domain as illustrated in Figure \ref{fig:partition_yellow}. We analyze each of the resulting areas separately.  
\subsubsection{Area A} \label{app:yellow_A}

\begin{proof} [Proof of Lemma~\ref{lem:A2}]
    Without loss of generality, we assume that~$(\x_t,\x_{t+1}) \in \bfA_1$ (the same arguments apply to~$\bfA_0$ symmetrically).
    We have, provided that~$\delta$ is small enough and~$n$ is large enough,
    \begin{align*}
        \bbE(\x_{t+2}) &> \prob{\ell}{\x_{t+1}}{>}{\ell}{\x_t} + \x_{t+1} \cdot \prob{\ell}{\x_{t+1}}{=}{\ell}{\x_t} - \frac{1}{n} & \text{(by Remark~\ref{rem:evolution2})} \\
        &> \frac{1}{2} + 6  (\x_{t+1}-\x_t) + \pa{\x_{t+1}-\frac{1}{2}} \cdot \prob{\ell}{\x_{t+1}}{=}{\ell}{\x_t} & \text{(by Lemma~\ref{lem:handcrafted}, taking~$\lambda > 6$)} \\
        &> \frac{1}{2} + 6 (\x_{t+1}-\x_t). & \text{(by the definition of~$\bfA_1$)} 
    \end{align*}
    Hence,
    \begin{align*}
        \bbE(\x_{t+2}) - \x_{t+1} > \frac{1}{2}-\x_t + 5 (\x_{t+1}-\x_t) = (\x_{t+1} - (2\x_t - \frac{1}{2})) + 4 (\x_{t+1}-\x_t),
    \end{align*}
    and by definition of~$\bfA_1$, $(\x_{t+1} - (2\x_t - 1/2)) \geq 0$ and so
    \begin{equation} \label{eq:auxA2_expectation}
        \bbE(\x_{t+2}) >  4 (\x_{t+1}-\x_t) + \x_{t+1}.
    \end{equation}
    By Observation~\ref{lem:evolution}, we can apply Chernoff's inequality (Theorem~\ref{thm:mult_chernoff_bound}). Taking~$\epsilon = 2(\x_{t+1}-\x_t)/(4(\x_{t+1}-\x_t)+\x_{t+1})$,
    we have
    \begin{align*}
        \bbP \pa{ \x_{t+2} - \x_{t+1} \leq 2 (\x_{t+1}-\x_t) } &= \bbP \pa{ \x_{t+2} \leq (1-\epsilon)\pa{ 4(\x_{t+1}-\x_t) + \x_{t+1} } } \\
        &\leq \bbP\pa{ n\x_{t+2} \leq (1-\epsilon) \bbE(n\x_{t+2}) } & \text{(by Eq.~\eqref{eq:auxA2_expectation})} \\
        &\leq \exp \pa{-\frac{\epsilon^2}{2} \bbE(n\x_{t+2})} & \text{(by Theorem~\ref{thm:mult_chernoff_bound})} \\
        &\leq \exp \pa{-\frac{2\x_{t+1}}{\pa{4(\x_{t+1}-\x_t)+\x_{t+1}}^2} (\x_{t+1}-\x_t)^2n}. & \text{(by Eq.~\eqref{eq:auxA2_expectation} and definition of~$\epsilon$)} 
    \end{align*}
    Since~$\x_{t}$ and~$\x_{t+1}$ are close to~$1/2$, we have for~$\delta$ small enough
    \begin{equation} \label{eq:auxA2}
        \bbP \pa{ \x_{t+2} - \x_{t+1} > 2 (\x_{t+1} - \x_t) } \geq 1-\exp \pa{ - 3 (\x_{t+1}-\x_t)^2 n }.
    \end{equation}
    Now, we show that the event ``$\x_{t+2} - \x_{t+1} > 2 (\x_{t+1} - \x_t)$''  suffices for~$(\x_{t+1},\x_{t+2})$ to remain in~$\bfA_1$ or leave~$\yellow'$.
    \begin{claim} \label{claim:auxA2}
        If~$(\x_t,\x_{t+1}) \in \bfA_1$ and~$\x_{t+2} - \x_{t+1} > 2 (\x_{t+1} - \x_t)$, then~$(\x_{t+1},\x_{t+2})\in \bfA_1$ or~$(\x_{t+1},\x_{t+2})\notin \yellow'$.
    \end{claim}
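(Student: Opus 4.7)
The plan is to verify, assuming the hypothesis of the claim, that the point $(\x_{t+1},\x_{t+2})$ either satisfies the two defining inequalities of $\bfA_1$ (other than membership in $\yellow'$), or else already lies outside $\yellow'$, in which case we are done. So I would first suppose that $(\x_{t+1},\x_{t+2}) \in \yellow'$ and aim to deduce (i) $\x_{t+2} \geq 1/2$ and (ii) $\x_{t+2} - \x_{t+1} \geq \x_{t+1} - 1/2$, which are the two conditions appearing in the definition of $\bfA_1$.

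The key observation is that the condition defining $\bfA_1$ at $(\x_t,\x_{t+1})$, namely $\x_{t+1} - \x_t \geq \x_t - 1/2$, can be rewritten as
\begin{equation*}
    2(\x_{t+1} - \x_t) \geq \x_{t+1} - \tfrac{1}{2}.
\end{equation*}
Combining this inherited inequality with the speed hypothesis $\x_{t+2} - \x_{t+1} > 2(\x_{t+1}-\x_t)$ yields immediately $\x_{t+2} - \x_{t+1} > \x_{t+1} - 1/2$, which is exactly condition (ii). Then (i) follows for free: since $\x_{t+1} \geq 1/2$ by the assumption $(\x_t,\x_{t+1}) \in \bfA_1$, the right-hand side $\x_{t+1} - 1/2$ is nonnegative, hence $\x_{t+2} > \x_{t+1} \geq 1/2$.

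Together with the standing assumption that $(\x_{t+1},\x_{t+2}) \in \yellow'$, this places $(\x_{t+1},\x_{t+2})$ in $\bfA_1$, completing the proof. There is no real obstacle here --- the claim is essentially a tautological rearrangement of the defining inequality of $\bfA_1$, and its role is simply to justify why the ``doubling of speed'' guaranteed by Lemma~\ref{lem:A2}(a) either keeps the chain inside the same area (so that one can iterate) or drives it out of $\yellow'$ (so that the escape has happened), which is what is needed for the iterative argument of Lemma~\ref{lem:A3}.
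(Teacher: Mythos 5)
Your proof is correct and is essentially the same argument as the paper's: condition on $(\x_{t+1},\x_{t+2}) \in \yellow'$ and verify conditions (i) and (ii) of $\bfA_1$ using the inherited inequality $\x_{t+1}-\x_t \geq \x_t - 1/2$ together with the speed hypothesis. The paper checks (i) and then (ii) directly; you derive (ii) first via the equivalent rewriting $2(\x_{t+1}-\x_t) \geq \x_{t+1}-1/2$ and then deduce (i) from it — the underlying algebra is identical.
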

    \begin{proof}
        If~$(\x_{t+1},\x_{t+2})\notin \yellow'$, the result holds. Otherwise, $(\x_{t+1},\x_{t+2})\in \yellow'$ and we have to prove that~$(\x_{t+1},\x_{t+2})$ satisfies~$\bfA_1$.(i) and~$\bfA_1$.(ii).
        First we prove that~$(\x_{t+1},\x_{t+2})$ satisfies~$\bfA_1$.(i):
        \begin{align*}
            \x_{t+2} &> \x_{t+1} + 2 (\x_{t+1} - \x_t) &\text{(by assumption in the claim)} \\
            & \geq \x_{t+1} &\text{(because~$(\x_t,\x_{t+1}) \in \bfA_1 \Rightarrow \x_{t+1} \geq \x_t$)} \\
            & \geq 1/2. & \text{(because~$(\x_t,\x_{t+1}) \in \bfA_1$ and by~$\bfA_1$.(i))}
        \end{align*}
        Then we prove that~$(\x_{t+1},\x_{t+2})$ satisfies~$\bfA_1$.(ii):
        \begin{align*}
            \x_{t+2} - \x_{t+1} &> 2(\x_{t+1} - \x_t) &\text{(by assumption in the claim)} \\
            &> (\x_{t+1} - \x_t) + (\x_t - 1/2) & \text{(because~$(\x_t,\x_{t+1}) \in \bfA_1$ and by~$\bfA_1$.(ii))} \\
            &= \x_{t+1} - 1/2,
        \end{align*}
        which concludes the proof of Claim~\ref{claim:auxA2}.
    \end{proof}
   Next, we apply Claim~\ref{claim:auxA2} to Eq.~\eqref{eq:auxA2} to establish~$(a)$.
    Eventually, $\x_{t+2} > \x_{t+1} + 4 (\x_{t+1}-\x_t) + 1/ \sqrt{n}$ implies $\x_{t+2} - \x_{t+1} >  + 2 (\x_{t+1}-\x_t)$ so we can use Claim~\ref{claim:auxA2},
    \begin{align*}
        &\bbP \pa{ (\x_{t+1},\x_{t+2}) \notin \yellow'\setminus\bfA_1 \medcap \x_{t+2} > \x_{t+1} + 4 (\x_{t+1}-\x_t) + 1/ \sqrt{n} } \\
        &= \bbP \pa{ \x_{t+2} > \x_{t+1} + 4 (\x_{t+1}-\x_t) + 1/ \sqrt{n} } & \text{(by Claim~\ref{claim:auxA2})} \\ 
        &> \bbP \pa{ \x_{t+2} > \bbE \pa{\x_{t+2}} + 1/ \sqrt{n} } & \text{(by Eq.~\eqref{eq:auxA2_expectation})} \\
        &> c_2 > 0, &
    \end{align*}
    where the existence of~$c_2$ is guaranteed by Lemma~\ref{lem:anti_concentration}. This establishes~$(b)$.
\end{proof}

\begin{proof} [Proof of Lemma~\ref{lem:A3}]
    Without loss of generality, we assume that~$(\x_t,\x_{t+1}) \in \bfA_1$ (the same arguments apply to~$\bfA_0$ symmetrically).
    Let us define event~$H_{t_0+1}$, that the system is either in~$A_1$ or out of~$\yellow'$ in round~$t_0+1$, and that the ``gap''~$(\x_{t_0+2}-\x_{t_0+1})$ is not too small. Formally,
    \begin{equation*}
        H_{t_0+1} :~~  (\x_{t_0+1},\x_{t_0+2}) \notin \yellow'\setminus\bfA_1 \medcap \x_{t_0+2} - \x_{t_0+1} > 1 / \sqrt{n}.
    \end{equation*}
    For~$t>t_0+1$, we define event~$H_t$, that the system is either in~$A_1$ or out of~$\yellow'$ in round~$t$, and that the gap~$(\x_{t+1}-\x_t)$ doubles. Formally,
    \begin{equation*}
        H_{t} :~~ (\x_{t},\x_{t+1}) \notin \yellow'\setminus\bfA_1 \medcap \x_{t+1} - \x_{t} > 2 (\x_{t} - \x_{t-1}).
    \end{equation*}
    We start with the following observation, which results directly from the definition of~$H_t$ for~$t \geq t_0+1$:
    \begin{equation} \label{eq:auxA3}
        \bigcap_{s=t_0+1}^{t-1} H_s \Rightarrow (\x_{t} - \x_{t-1}) > 2^{(t-t_0-2)} (\x_{t_0+2} - \x_{t_0+1}) \Rightarrow (\x_{t} - \x_{t-1}) > 2^{(t-t_0-2)} / \sqrt{n}.
    \end{equation}
    For every~$t > t_0+1$,
    \begin{align*}
        \bbP \pa{ H_{t} \left|~ \bigcap_{s=t0+1}^{t-1} H_s \right.} &> 1-\exp \pa{- 3 n \cdot (\x_{t} - \x_{t-1})^2 } & \text{(By Lemma~\ref{lem:A2})} \\
        &> 1-\exp \pa{- \frac{3}{4} \cdot 4^{(t-t_0-1)} }. & \text{(by Eq.~\eqref{eq:auxA3})} 
    \end{align*}
    By Lemma~\ref{lem:A2}~$(b)$, $(\x_{t_0+1},\x_{t_0+2}) \in \bfA_1$ and $\x_{t_0+2} - \x_{t_0+1} > 1 / \sqrt{n}$ w.p.~$c_2>0$. Together with the last equation and using the union bound, we get
    \begin{align*}
        \bbP \pa{ \bigcap_{t=t0+1}^{t_1} H_t } > c_2 \cdot \pa{ 1-\sum_{t=t0+2}^{t_1} \exp \pa{- \frac{3}{4} \cdot 4^{(t-t_0-1)} } }.
    \end{align*}
    We have the following very rough upper bounds
    \begin{align*}
        \sum_{t=t0+2}^{t_1} \exp \pa{- \frac{3}{4} \cdot 4^{(t-t_0-1)} } &< \sum_{t=t0+2}^{t_1} \exp \pa{- \frac{3}{4} \cdot 4 \cdot (t-t_0-1) } \\
        &< 2 \cdot e^{-3}.
    \end{align*}
    Hence, we have proved that for every~$t_1 > t_0+1$,
    \begin{equation*}
        \bbP \pa{ \bigcap_{t=t0+1}^{t_1} H_t } > c_2 \cdot \pa{ 1 - 2 \cdot e^{-3} } := c_3 > 0.
    \end{equation*}
    By Eq.~\eqref{eq:auxA3}, it implies that for every~$t_1 > t_0+1$,
    \begin{equation*}
        \bbP \pa{ (\x_{t_1} - \x_{t_1-1}) > 2^{(t_1-t_0-2)} / \sqrt{n} } > c_3.
    \end{equation*}
    For~$t_1$ large enough (e.g., $t_1 = t_0 + \log n$), this implies that~$(\x_{t_1-1},\x_{t_1}) \notin \yellow'$, otherwise the gap~$(\x_{t_1}-\x_{t_1-1})$ would be greater than~$8\delta$ which is the diameter of~$\yellow'$.
    This concludes the proof of Lemma~\ref{lem:A3}.
\end{proof}

\subsubsection{Area B: Proof of Lemma~\ref{lem:B2}} \label{sec:yellow_B}

    Without loss of generality, we assume that~$(\x_t,\x_{t+1}) \in \bfB_1$ (the same arguments apply to~$\bfB_0$ symmetrically).
    For any round~$t$, let~$H_t$ the event that~$(\x_t,\x_{t+1}) \in \bfB$ and~{\em (a)} of Lemma~\ref{lem:B1} holds.
    Let~$t_{\max} = t_0 + (\sqrt{c}/c_4) \cdot \log^{3/2} n$, and
    let~$X$ be the number of rounds between~$t_0$ and~$t_{\max}$ for which~$H_t$ does not happen. 
    Each time~$(a)$ in Lemma~\ref{lem:B1} doesn't hold, $(b)$ of Lemma~\ref{lem:B1} holds so there is a constant probability to leave~$\bfB$, so
    \begin{equation} \label{eq:aux_B2}
        \bbP \pa{ \text{for every~$t$ such that } t_0 \leq t \leq t_{\max}, (\x_{t},\x_{t+1}) \in \bfB \mid X = {\bf x} } \leq (1-c_5)^{\bf x}.
    \end{equation}
    Note that
    \begin{equation*}
        (1-c_5)^{(t_{\max}-t_0)/4} = \exp \pa{ \log(1-c_5) \cdot \frac{\sqrt{c}}{4c_4} \cdot \log^{3/2} n }.
    \end{equation*}
    This, together with Eq.~\eqref{eq:aux_B2}, implies that either (i) $X < (t_{\max}-t_0)/4$, or w.h.p. (ii) there is a time~$t_0 \leq t \leq t_{\max}$ such that~$(\x_{t},\x_{t+1}) \notin \bfB$ (in which case Lemma~\ref{lem:B2} holds).
    
    
    Now, consider case~(i). Let~$u_t = \x_t - 1/2$.
    By Lemma~\ref{lem:noise}, $u_{t_0} > 1/ \sqrt{n}$ with constant probability. Therefore, for the price of waiting up to an additional~$\log n$ rounds, we can assume that~$u_{t_0} > 1/ \sqrt{n}$ w.h.p. In what follows, we condition on that event.
    
   Note now, that whenever~$(\x_t,\x_{t+1}) \in \bfB_1$, by definition~$\x_{t+1}\geq \x_t > 1/2$ and so~$(\x_{t+1},\x_{t+2})$ cannot be in~$\bfB_0$. This implies that the system must remain in~$\bfB_1$ until it leaves~$\bfB$.
    Also by definition of~$\bfB_1$, if~$(\x_t,\x_{t+1}) \in \bfB_1$ then~$\x_t \leq \x_{t+1}$, and so~
    \begin{equation}\label{eq:u_t}
        u_t \leq u_{t+1}. 
        \end{equation}
    Moreover, by the fact that we are in case~(i), we have that the number of rounds~$t_0 \leq t \leq t_{\max}$ such that~$H_t$ happens is at least
    \begin{equation*}
        k := \frac{3(t_{\max}-t_0)}{4} = \frac{3\sqrt{c}}{4c_4} \cdot \log^{3/2} n.
    \end{equation*}
    Note that at each such round, by definition of~$H_t$ and~$(a)$ in Lemma~\ref{lem:B1},
    \begin{equation*}
        u_{t+1} > u_t \cdot \pa{1+\frac{c_4}{\sqrt{\ell}}}.
    \end{equation*}
    Hence, by Eq.~\eqref{eq:u_t},
    \begin{align*}
        u_{t_{\max}} > u_{t_0} \cdot \pa{1+\frac{c_4}{\sqrt{\ell}}}^k &= u_{t_0} \cdot \exp \pa{ k \log \pa{ 1+\frac{c_4}{\sqrt{\ell}} } } & \\
        &> u_{t_0} \cdot \exp \pa{ \frac{k \cdot c_4}{2\sqrt{\ell}} } & \text{(for~$n$ large enough)} \\
        &= u_{t_0} \cdot \exp \pa{ \frac{3}{8} \log n } & \text{(by definition of~$k$ and~$\ell$)} \\
        &= \frac{1}{2} \cdot u_{t_0} \cdot n^{4/3} & \\
        &> \frac{1}{2} n^{1/4}. & \text{(since we assumed~$u_{t_0} > 1/ \sqrt{n}$)}
    \end{align*}
    When~$n$ is large, this quantity is larger than~$1$, hence~(i) is impossible unless~$(\x_t,\x_{t+1}) \notin \bfB$. This concludes the proof of Lemma~\ref{lem:B2}.
\qed

\subsubsection{Area C: Proof of Lemma~\ref{lem:C}} \label{sec:yellow_C}

    Without loss of generality, we assume that~$(\x_t,\x_{t+1}) \in \bfC_1$ (the same arguments apply to~$\bfC_0$ symmetrically).
    By Observation~\ref{lem:evolution},
    we have
    \begin{equation*}
        \bbE(\x_{t+2}) = \prob{\ell}{\x_{t+1}}{>}{\ell}{\x_t} + \x_{t+1} \cdot \prob{\ell}{\x_{t+1}}{=}{\ell}{\x_t} - \frac{1}{n}.
    \end{equation*}
    By Lemma~\ref{lem:handcrafted} (taking~$\lambda > 2$), this becomes
    \begin{equation} \label{eq:auxC}
         \bbE(\x_{t+2}) > \frac{1}{2} + 2 \cdot (\x_{t+1}-\x_t) - \pa{\frac{1}{2}-\x_{t+1}} \cdot \prob{\ell}{\x_{t+1}}{=}{\ell}{\x_t}.
    \end{equation}
    
    {\em Case 1.} If~$(\x_{t+1}-\x_t) > 1/2 - \x_{t+1}$, then Eq.~\eqref{eq:auxC} implies
    \begin{align*}
        \bbE(\x_{t+2}) &> \frac{1}{2} + 2 \cdot (\x_{t+1}-\x_t) - \pa{\frac{1}{2}-\x_{t+1}} > \frac{1}{2} + (\x_{t+1}-\x_t) > \frac{1}{2},
    \end{align*}
    so with constant probability~$\x_{t+2} > 1/2$ and thus~$(\x_{t+1},\x_{t+2}) \in \bfA_1$ or is not in~$\yellow'$. \\
    
    {\em Case 2.} Else, if~$(\x_{t+1}-\x_t) \leq 1/2 - \x_{t+1}$, Eq.~\eqref{eq:auxC} rewrites
    \begin{align*}
        \bbE(\x_{t+2}) &> \frac{1}{2} \pa{\frac{1}{2} + \x_{t+1}} + \frac{1}{2} \pa{ \frac{1}{2} + 4 \cdot (\x_{t+1}-\x_t) - 2 \pa{\frac{1}{2}-\x_{t+1}} \cdot \prob{\ell}{\x_{t+1}}{=}{\ell}{\x_t} - \x_{t+1}} \\
        &= \frac{1}{2} \pa{\frac{1}{2} + \x_{t+1}} + \frac{1}{2} \pa{ 4 \cdot (\x_{t+1}-\x_t) + \pa{\frac{1}{2}-\x_{t+1}} \pa{ 1 - 2 \cdot \prob{\ell}{\x_{t+1}}{=}{\ell}{\x_t} }  }
    \end{align*}
    Since~$(\x_t,\x_{t+1}) \in \bfC_1$, we have~$\x_{t+1} \geq \x_t$ and $1/2 > \x_{t+1}$. Moreover, for~$\ell$ large enough, $1 - 2 \cdot \bbP \pa{B_\ell(\x_{t+1})  = B_\ell(\x_{t})} > 0$. Hence,
    \begin{equation*}
        \bbE(\x_{t+2}) > \frac{1}{2} \pa{\frac{1}{2} + \x_{t+1}},
    \end{equation*}
    so, by Lemma~\ref{lem:anti_concentration}, with constant probability~$\x_{t+2} > \pa{1/2 + \x_{t+1}} / 2$, i.e., $\x_{t+2} - \x_{t+1} > 1/2 - \x_{t+2}$. If so, Case 1 applies and with constant probability, $(\x_{t+2},\x_{t+3}) \in \bfA_1$ or is not in~$\yellow'$.
This concludes the proof of Lemma~\ref{lem:C}.
\qed


\end{document}